\newcites{addr}{Additional References}
\tikzstyle{max}=[thick,draw,minimum size=1.2em,inner sep=0em]
\tikzstyle{min}=[diamond,thick,draw,minimum size=1.4em,%
\tikzstyle{ran}=[circle,thick,draw,minimum size=1.2em,%
\tikzstyle{act}=[circle,thick,draw,fill,minimum size=.7em,%
\tikzstyle{mc}=[rounded corners,thick,draw,minimum size=1.4em,%
\tikzstyle{tran}=[thick,draw,->,>=stealth]
\tikzstyle{loop left}=[tran, to path={.. controls +(150:.5)
\tikzstyle{loop right}=[tran, to path={.. controls +(30:.5)
\tikzstyle{loop above}=[tran, to path={.. controls +(60:.5)
\tikzstyle{loop below}=[tran, to path={.. controls +(240:.5)
\newcommand{\defineNote}[3][black!65!green]{\expandafter\def\csname #2\endcsname
##1{\stepcounter{fixcount}\fxwarning{\textcolor{#1}{\textbf{#3}: ##1}}}}
\newcommand{\states}{S}
\newcommand{\cstates}{S_{\Box}}
\newcommand{\stochstates}{S_{\raisebox{0.75 pt}{{\scalebox{0.5}{$\bigcirc$}}}}}
\newcommand{\trans}{T}
\newcommand{\rev}{r}
\newcommand{\runs}[1]{\run_{\M}(#1)}
\newcommand{\energy}{E}
\newcommand{\out}{\mathit{out}}
\newcommand{\enlev}[3]{\def\EmptyTest{#2}\ifdefempty{\EmptyTest}{\mathit{Lev}_{#1}^{(#3)}}{\mathit{Lev}_{#1}^{(#3)}(#2)}}
\newcommand{\MP}{\mathit{MP}}
\newcommand{\Probm}{\mathbb{P}}
\newcommand{\val}{\mathit{Val}}
\newcommand{\size}[1]{|\!|#1|\!|}
\newcommand{\maxE}{M_{\calE}}
\newcommand{\minsafe}{\mathit{min\text{-}safe}}
\newcommand{\minpump}{\mathit{min\text{-}pump}}
\newcommand{\sccp}{\textrm{SP-EMDP}}
\newcommand{\calF}{\mathcal{F}}
\newcommand{\calM}{\mathcal{M}}
\newcommand{\calT}{\mathcal{T}}
\newcommand{\E}{\mathbb{E}}
\newcommand{\calE}{\mathcal{E}}
\newcommand{\calL}{\mathcal{L}}
\newcommand{\M}{\mathcal{M}}
\newcommand{\Prob}{\mathit{Prob}}
\newcommand{\Nset}{\mathbb{N}}
\newcommand{\Zset}{\mathbb{Z}}
\newcommand{\Qset}{\mathbb{Q}}
\newcommand{\Rset}{\mathbb{R}}
\newcommand{\run}{\mathit{Run}}
\newcommand{\NP}{\mathsf{NP}}
\newcommand{\coNP}{\mathsf{coNP}}
\newcommand{\NPcoNP}{\mathsf{NP}\cap\mathsf{coNP}}
\newcommand{\PTIME}{\mathsf{P}}
\newcommand{\EG}{\mathsf{EG}}
\newcommand{\PSPACE}{\mathsf{PSPACE}}
\newcommand{\EXPTIME}{\mathsf{EXPTIME}}
\newcommand{\len}[1]{\mathit{len}(#1)}
\newcommand{\trend}{\mathit{trend}}
\renewcommand{\vec}[1]{\mathit{\pmb{#1}}}
\newcommand{\eps}{\varepsilon}
\newcommand{\cu}[1]{\mathbf{#1}}
\newcommand{\update}{\mathit{update}}
\newcommand{\nxt}{\mathit{next}}
\newcommand{\Lim}{\mathit{Lim}}
\newcommand{\test}[1]{\mathit{low}_{#1}}
\newcommand{\last}{\mathit{last}}
\newcommand{\pathstate}[2]{\mathit{St}(#1,#2)}
\newenvironment{reftheorem}[2]{\begin{trivlist}
\item[\hskip \labelsep {\bfseries #1}\hskip \labelsep {\bfseries #2}]\itshape}{\end{trivlist}}
\newcommand{\freq}[1]{f_{#1}}
\newcommand{\meanp}{\mathit{mp}}
\newcommand{\mart}[1]{m^{(#1)}}
\newcommand{\maralt}[1]{\hat{m}^{(#1)}}
\newcommand{\ttmart}[1]{\bar{m}^{(#1)}}
\title{Optimizing the Expected Mean Payoff in Energy Markov Decision Processes}
\author{Tom\'a\v{s} Br\'azdil\inst{1} \and 
	Anton\'{\i}n Ku\v{c}era\inst{1} \and
	Petr Novotn\'y\inst{2}\thanks{The research  has received funding from the 
	People Programme (Marie Curie Actions) of the European Union's Seventh 
	Framework Programme (FP7/2007-2013) under REA grant agreement no [291734].}}
\institute{%
	Faculty of Informatics MU, Botanick\'a 68a, 602\,00 Brno,
	Czech Republic, \email{\{brazdil,kucera\}@fi.muni.cz} \and
	IST Austria, Klosterneuburg, Austria, \email{petr.novotny@ist.ac.at}}
\begin{document}

\maketitle

\begin{abstract}
Energy Markov Decision Processes (EMDPs) are finite-state Markov decision processes where each transition is assigned an integer counter update and a rational payoff. An EMDP configuration is a pair $s(n)$, where $s$ is a control state and $n$ is the current counter value. The configurations are changed by performing transitions in the standard way. We consider the problem of computing a safe strategy (i.e., a strategy that keeps the counter non-negative) which maximizes the expected mean payoff. 
\end{abstract}

\section{Introduction}
\label{sec-intro}

\emph{Resource-aware systems} are systems that consume/produce a discrete resource, such as (units of) time, energy, or money, along their runs. This resource is \emph{critical}, i.e., if it is fully exhausted along a run, a severe runtime error appears and such a situation should be avoided to the largest possible extent. Technically, resource-aware systems are modeled as finite-state programs operating over an integer counter representing the resource. A \emph{configuration} is a pair $s(n)$ where $s$ is the current control state and $n$ is the number of currently available resource units. Each transition is assigned an integer \emph{update} modeling the consumption/production of the resource caused by performing the transition. 
\medskip

\noindent
\textbf{Our Contribution.}
In this paper, we concentrate on the \emph{long-run average optimization problem} for resource-aware systems with both controllable and stochastic states. That is, we assume that the finite control of our resource-aware system is a finite-state Markov decision process (MDP), and each transition is assigned (in addition to the integer counter update) a rational \emph{payoff}\footnote{The payoff may correspond to some independent performance measure, or it can reflect the use of the critical resource represented by the counter.}. The resulting model is called \emph{energy Markov decision process (EMDP)}. Intuitively, given an EMDP and its initial configuration, the task is to compute a \emph{safe} strategy maximizing the \emph{expected mean payoff}.  Here, a strategy is safe if it ensures that the counter stays non-negative along all runs. The \emph{value} of a given configuration $s(n)$, denoted by $\val(s(n))$, is the supremum of all expected mean payoffs achievable by a safe strategy, and a strategy is \emph{optimal} for $s(n)$ if it is safe and achieves the value. Observe that $\val(s(n)) \geq \val(s(m))$ whenever $n \geq m$, and hence we can also define the \emph{limit value} of $s$, denoted by $\val(s)$, as $\lim_{n \rightarrow \infty} \val(s(n))$.

Since optimal safe strategies may not exists in general, the first natural question is the following:
\smallskip
 
\noindent
\textbf{[Q1]}. \emph{Can we determine a ``reasonable'' condition under which an optimal strategy exists?}
\smallskip
 
\noindent
By ``reasonable'' we mean that the condition should be decidable (with low complexity) and tight (i.e., we should provide counterexamples witnessing that optimal strategies do not necessarily exist if the condition is violated). Further, there are two basic algorithmic questions.
\smallskip
 
\noindent
\textbf{[Q2]}. \emph{Can we compute $\val(s(n))$ for a given configuration $s(n)$? If not, can we at least approximate the value up to a given absolute error $\varepsilon > 0$? Can we compute/approximate $\val(s)$ for a given state $s$? What is the complexity of these problems?}  
\smallskip 
 
\noindent
To show that computing an $\varepsilon$-approximation of $\val(s(n))$ is computationally hard, we consider the following \emph{gap threshold problem}: given a configuration $t(k)$ of a given EMDP and numbers $x,\eps$, where $\eps>0$, such that either $\val(t((k))\geq x$ or $\val(t(k))\leq x-\eps$, decide which of these two alternatives holds\footnote{Formally, the decision algorithm answers ``yes'' iff the first (or the second) possibility holds.}. Note that if the gap threshold problem is \mbox{$\textsf{X}$-hard} for some complexity class $\textsf{X}$, then $\val(s(n))$ cannot be \mbox{$\varepsilon$-approximated} in polynomial time unless $\textsf{X} = \PTIME$.

\noindent
\textbf{[Q3]}. \emph{Can we compute (a finite description of) an optimal strategy for a given configuration (if it exists)? For a given $\varepsilon > 0$, can we compute an \mbox{$\varepsilon$-optimal} strategy? How much memory is required by these strategies? What is the complexity of the strategy synthesis problems?}
\smallskip

Before formulating our answers to the above questions, we need to briefly discuss the relationship between EMDPs and \emph{energy games} \cite{CD:energy-parity-games,DBLP:conf/emsoft/ChakrabartiAHS03,DBLP:conf/formats/BouyerFLMS08}.
 
The problems of \textbf{[Q2]} and \textbf{[Q3]} subsume the question whether a given configuration of a given EMDP is safe. This problem can be solved by algorithms for 2-player non-stochastic energy games~\cite{CdAHS:resource-interfaces}, where we treat the stochastic vertices as if they were controlled by an adversarial player. The correctness of this approach stems from the fact that keeping the energy level non-negative is an objective whose violation is witnessed by a finite prefix of a run.  Let $\EG$ (\textsf{E}nergy \textsf{G}ames) be the problem of deciding whether a given configuration in a given energy game is safe. A \mbox{\emph{$\PTIME^{\EG}$~algorithm}} is a deterministic polynomial-time algorithm which inputs an EMDP $\calE$ (and possibly some initial configuration $s(n)$ of $\calE$) and uses an oracle which freely decides the safety problem for the configurations of $\calE$.  We assume that the counter updates and rewards used in $\calE$, and the $n$ in $s(n)$, are encoded as (fractions of) binary numbers. The size of $\calE$ and $s(n)$ is denoted by $\size{\calE}$ and $\size{s(n)}$, respectively. It is known that $\EG$ is solvable in pseudo-polynomial time, belongs to $\NP \cap \coNP$, and it is at least as hard as the parity game problem. From this we immediately obtain that every decision problem solvable by a $\PTIME^{\EG}$~algorithm belongs to $\NP \cap \coNP$, and every $\PTIME^{\EG}$~algorithm runs in pseudo-polynomial time, i.e., in time polynomial in $\size{\calE}$, $\size{s(n)}$, and $M_{\calE}$, where $M_{\calE}$ is the maximal absolute value of a counter update in $\calE$. We say that a decision problem \textsf{X} is \emph{$\EG$-hard} if there is a polynomial-time reduction from $\EG$ to \textsf{X}.

Our results (answers to \textbf{[Q1]}--\textbf{[Q3]}) can be formulated as follows: 
\smallskip

\noindent
\textbf{[A1]}. We show that an optimal strategy is guaranteed to exist in a configuration $s(n)$ if the underlying EMDP is \emph{strongly connected and pumpable}. An EMDP is strongly connected if its underlying graph is strongly connected, and pumpable if for every safe configuration $t(m)$ there exists a safe strategy $\sigma$ such that the counter value is unbounded in almost all runs initiated in~$t(m)$. 

The problem whether a given EMDP is strongly connected and pumpable is in $\PTIME^{\EG}$ and $\EG$-hard. Further, an optimal strategy in $s(n)$ does not necessarily exist if just one of these two conditions is violated. We use $\sccp$ to denote the subclass of strongly connected and pumpable EMDPs.
\smallskip

\noindent
\textbf{[A2, A3]}. If a given EMDP belongs to the $\sccp$ subclass, the following holds: 
\begin{itemize}
	\item The value of every safe configuration is the same and computable by a $\PTIME^{\EG}$ algorithm
	  (consequently, the limit value of all states is also the same and computable by a $\PTIME^{\EG}$ algorithm). The gap threshold problem is $\EG$-hard.
	\item There exists a strategy $\sigma$ which is optimal in every configuration. In general, $\sigma$  may require infinite memory. A finite description of $\sigma$ is computable by a $\PTIME^{\EG}$ algorithm. The same holds for $\varepsilon$-optimal strategies where $\varepsilon > 0$, except that $\varepsilon$-optimal strategies require only finite memory.
\end{itemize}
Note that since the gap threshold problem is $\EG$-hard, approximating the value is not much easier than computing the value precisely for $\sccp$s.

For general EMDPs, optimal strategies are not guaranteed to exist. Still, for every EMDP $\calE$ we have the following:
\begin{itemize}
	\item The value of every configuration $s(n)$ can be approximated up to an arbitrarily small given $\varepsilon > 0$ in time polynomial in $\size{\calE}$, $\size{s(n)}$, $\maxE$, and $1/\varepsilon$. The limit value of each control state is computable in time polynomial in  $\size{\calE}$ and $\maxE$.
	\item For a given $\varepsilon > 0$, there exists a strategy $\sigma$ which is \mbox{$\varepsilon$-optimal} in every configuration. In general, $\sigma$  may require infinite memory. A finite description of $\sigma$ is computable in time polynomial in $\size{\calE}$, $\maxE$, and $1/\varepsilon$.
	\item The gap threshold problem is $\PSPACE$-hard.
\end{itemize}

The above results are non-trivial and based on detailed structural analysis of EMDPs. As a byproduct, we yield a good intuitive understanding on what can actually happen when we wish to construct a (sub)optimal strategy in a given EMDP configuration. The main steps are sketched below (we also try to explain where and how we employ the existing ideas, and where we needed to invent original techniques). The details and examples illustrating the discussed phenomena are given later in Section~\ref{sec:onedim}.

The core of the problem is the analysis of maximal end components of a given EMDP, so let us suppose that our EMDP is strongly connected (but not necessarily pumpable). First, we check whether there exists \emph{some} strategy such that the average change of the counter per transition is positive (this can be done by linear programming) and distinguish two possibilities:

\emph{\bfseries If there is such a strategy}, then we try to optimize the mean payoff under the constraint 
	   that the average change of the counter is non-negative. This can be formulated by a linear program whose solution allows to construct finitely many randomized memoryless strategies and an appropriate ``mixing ratio'' for these strategies that produces an optimal mean payoff. This part is inspired by the technique used in \cite{BBCFK14} for the analysis of MDPs with multiple mean-payoff objectives. However, here we cannot implement the optimal mixing ratio ``immediately'' because we also need to ensure that the resulting strategy is safe. We can solve this problem using two different methods, depending on whether the EMDP is pumpable or not. If it is not pumpable, then, since we aim at constructing an \mbox{$\varepsilon$-optimal} strategy, we can always slightly modify the mix, adding the aforementioned strategy which increases the counter in a right proportion. If the counter becomes too low, we permanently switch to some safe strategy (which may produce a low mean payoff). Since the counter has a tendency to increase, we can setup everything so that the probability of visiting low counter values is very small if we start with a sufficiently large initial counter value. Hence, for configurations with a sufficiently large counter value, we play $\varepsilon$-optimally. For the configurations with ``low'' counter value, we compute a suboptimal strategy by ``cutting'' the counter when it reaches a large value (where we already know how to play) and applying the algorithm for finite-state MDPs. 
	   
	   More interesting is the case when the EMDP \emph{is} pumpable. Here, instead of switching to \emph{some} safe strategy, we switch to a \emph{pumping} strategy, i.e. a safe strategy that is capable of increasing the counter above any threshold with probability 1. Once the pumping strategy increases the counter to some sufficiently high value, we can switch back to playing the aforementioned ``mixture.'' To obtain an optimal strategy in this way, we need to extremely carefully set up the events which trigger ``(de-)activation'' of the pumping strategy, so as to ensure that it keeps the counter sufficiently high and at the same time assure that it does not negatively affect the mean payoff. We innovatively use the martingale techniques designed in~\cite{BKK:oc-jacm} to accomplish this delicate task.

\emph{\bfseries If there is no such strategy}, we need to analyze our EMDP differently. We prove that   
       \emph{every} safe strategy then satisfies the following: almost all runs end by an infinite suffix where all visited configurations with the same control state have the same counter value. This implies that only finitely many configurations are visited in the suffix, and we can analyze the associated mean payoff by methods for finite-state MDPs.     

If we additionally assume that our strongly connected EMDP is pumpable, than there inevitably exists a strategy which increases the counter on average (which rules out the second possibility mentioned above) and the ``switching'' strategy can be constructed differently so that it achieves the optimal mean payoff specified by the linear program. 

Let us note that some of the presented ideas can be easily extended even to multi-energy MDPs. Since a full analysis of EMDPs is rather lenghty and complicated, we leave this extension for future work. 
\medskip

\noindent
\textbf{Related Work.}
MDPs with mean payoff objectives (average reward criteria) have been heavily studied since the 60s~(see, e.g., \cite{Howard60,Puterman:book}). Several algorithms for computing optimal values and strategies have been developed for both finite-state systems~(see~e.g.~\cite{Puterman:book,FV96,BBCFK14,CKK15}) as well as various types of infinite-state MDPs typically related to queueing systems (see, e.g.,~\cite{KR95}). For an extensive survey see~\cite{Puterman:book}. 

Markov decision processes with energy objectives have been studied in~\cite{DBLP:conf/soda/BrazdilBEKW10} as one-counter MDPs. Subsequently, several papers concerned MDPs with counters (resources) have been published (for a survey see~\cite{DBLP:conf/rp/Kucera12}, for recent work see e.g.~\cite{ACMSS15}). A closely related paper~\cite{CD:energy-parity-games} studies MDPs with combined energy-parity and mean-payoff-parity objectives (note, however, that the combination of energy with mean payoff is not studied in~\cite{CD:energy-parity-games}). 

A considerable amount of attention has been devoted to non-stochastic turn-based games with energy objectives \cite{DBLP:conf/emsoft/ChakrabartiAHS03,DBLP:conf/formats/BouyerFLMS08}. Solving energy games belongs to $\NPcoNP$ but no polynomial time algorithm is known. Energy games are polynomially equivalent to mean-payoff games~\cite{DBLP:conf/formats/BouyerFLMS08}. Several papers are concerned with complexity of energy games (or equivalent problems, see~e.g.~\cite{GKK90,DBLP:journals/tcs/ZwickP96,DBLP:journals/fmsd/BrimCDGR11,CRR14}). For a more detailed account of results on energy games see~\cite{CHKN14}. Games with various combinations of objectives as well as multi-energy objectives have also been studied (see~e.g.~\cite{DBLP:journals/iandc/VelnerC0HRR15,AMSS13,Brenguier:2014:EMT:2562059.2562116,DBLP:conf/birthday/JuhlLR13,DBLP:conf/fsttcs/ChatterjeeDHR10,CD:energy-parity-games,DBLP:journals/corr/BouyerMRLL15}), as well as energy constraints in automata settings~\cite{CFL:energy-kleene}. %

Our work is closely related to the recent papers~\cite{bruyre_et_al:LIPIcs:2014:4458,Clemente:2015:MBW:2876514.2876539} where the combination of expected and worst-case mean-payoff objectives is considered. In particular,~\cite{Clemente:2015:MBW:2876514.2876539} considers a problem of optimizing the expected multi-dimensional mean-payoff under the condition that the mean-payoff in the first component is positive for all runs. At first glance, one may be tempted to ``reduce''~\textbf{[Q2]}~and~\textbf{[Q3]} to results of~\cite{Clemente:2015:MBW:2876514.2876539} as follows: Ask for a strategy which ensures that the mean-payoff in the first counter is non-negative for all runs, and then try to optimize the expected mean-payoff of the second counter. However, this approach does not work for several reasons. First, a strategy achieving non-negative mean-payoff in the first counter may still decrease the counter arbitrarily deep. So no matter what initial value of the counter is used, the zero counter value may be reached with positive probability. Second, the techniques developed in~\cite{Clemente:2015:MBW:2876514.2876539} do not work in the case of ``balanced'' EMDPs. Intuitively, balanced EMPDs are those where we inevitably need to employ strategies that balance the counter, i.e., the expected average change of the counter per transition is zero. In the framework of stochastic counter systems, the balanced subcase is often more difficult than the other subcases when the counters have a tendency to ``drift'' in some direction. In our case, the balanced EMDPs also require a special (and non-trivial) proof techniques based on martingales and some new ``structural'' observations. We believe that these tools can be adapted to handle the ``balanced subcase'' in even more general problems related to systems with more counters, MDPs over vector addition systems, and similar models.

\section{Preliminaries}
\label{sec:prelims}

\noindent
We use $\Zset$, $\Nset$, $\Nset^+$, $\Qset$, and $\Rset$
to denote the set of all integers, non-negative
integers, positive integers, rational numbers, and real numbers, 
respectively. 
We assume familiarity with basic notions of probability theory, e.g.,
\emph{probability space}, \emph{random variable}, or the \emph{expected 
value}. As usual, a \emph{probability distribution} over a finite or
countably infinite set $A$ is a function
$f : A \rightarrow [0,1]$ such that \mbox{$\sum_{a \in A} f(a) = 1$}.
We call $f$ \emph{positive} if
$f(a) > 0$ for each $a \in A$,  \emph{rational} if $f(a) \in
\Qset$ for each $a \in A$, and \emph{Dirac} if $f(a) = 1$ for some $a \in A$.

\begin{definition}[MDP]
A \emph{Markov decision process (MDP)} is a tuple 
$\M = (\states,(\cstates,\stochstates),\trans{},\Prob,\rev)$,
where $\states$ is a finite set of \emph{states}, $(\cstates,\stochstates)$
is a 
partitioning of $\states$ into the sets $\cstates$ of  \emph{controllable}
states and $\stochstates$ of 
 \emph{stochastic} states, respectively, $\trans{} \subseteq \states \times
\states$ is a \emph{transition relation}, 
$\Prob$ is a function assigning to every stochastic state $s \in \stochstates$
a positive probability distribution over its outgoing transitions, and
$\rev\colon\trans \rightarrow \Qset$ is a \emph{reward function}. We assume that
$\trans{}$ is \emph{total}, i.e., for each $s\in \states$ there is $t \in
\states$ such that $(s,t)\in \trans$.
\end{definition}

We use $\Prob(s,t)$ as an abbreviation for $(\Prob(s))(s,t)$, i.e., $\Prob(s,t)$
is the probability of taking the transition $(s,t)$ in $s$. 
For a state $s$ we denote by $\out(s)$ the set of transitions outgoing from $s$.
A \emph{finite path} is a sequence $w=s_0 s_1 \cdots s_n$ of states
such that $(s_i, s_{i+1})\in \trans$ for all $0\leq i < n$. We write
$\len{w}=n$ for the length of the path.  A \emph{run} (or an \emph{infinite path}) is an
infinite sequence $\omega$ of states such that every finite prefix 
of~$\omega$ is a finite path.
For a finite path $w$, we denote by $\runs{w}$ the set of all runs having
$w$ as a prefix.  

An \emph{end component} of $\calM$ is a pair $(S',T')$, where $S' \subseteq S$, $T' \subseteq T$, satisfying the following conditions: (1) for every $s \in S'$, we have that $\out(s) \cap T' \neq \emptyset$; (2) if $s \in S' \cap \stochstates$, then $\out(s) \subseteq T'$; (3) the graph determined by $(S',T')$ is strongly connected. Note that every end component of $\calM$ can be seen as a strongly connected MDP (obtained by restricting the states and transitions of $\calM$). A \emph{maximal end component (MEC)} is an end component which is maximal w.r.t.{} pairwise inclusion. The MECs of a given MDP  $\calM$ are computable in polynomial time~\cite{CH14}.

A \emph{strategy} (or a \emph{policy}) in an MDP $\M$ is a tuple $\sigma = (M,m_0,\update,\nxt)$ where $M$ is a set of memory elements, $m_0 \in M$ is an initial memory element, $\update : M \times S \rightarrow M$ a memory-update function, and $\nxt$ is a function which to every pair $(s,m) \in \cstates \times M$ assigns a probability distribution over $\out(s)$. The function $\update$ is extended to finite sequences of states in the natural way. We say that $\sigma$ is \emph{finite-memory} if $M$ is finite, and \emph{memoryless} if $M$ is a singleton. Further, we say that $\sigma$ is \emph{deterministic} if $\nxt(s,m)$ is Dirac for all $(s,m) \in \cstates \times M$. Note that $\sigma$ determines a function which to every finite path in $\M$ of the form $w s$, where $s \in \cstates$, assigns the probability distribution $\nxt(s,m)$, where $m = \update(m_0,w)$. Slightly abusing our notion, we use $\sigma$ to denote this function.

Fixing a strategy $\sigma$ and an initial state $s$, we obtain the standard probability space $(\run_{\M}(s),\calF,\Probm^{\sigma}_{s})$ of all runs starting at $s$, where
$\calF$ is the \mbox{$\sigma$-field} generated by all \emph{basic cylinders}
$\run_{\M}(w)$, where $w$ is a finite path starting at~$s$, and
$\Probm^{\sigma}_{s}\colon \calF \rightarrow [0,1]$ is the unique probability measure such that for all finite paths $w=s_0\cdots s_n$ it holds 
$\Probm^{\sigma}_{s}(\run_{\M}(w)) = \prod_{i{=}1}^{n} x_i$, where
each $x_i$ is either $\sigma(s_0\cdots s_{i-1})(s_{i-1},s_{i})$, or $\Prob(s_{i-1},s_{i})$, depending on whether $s_{i-1}$ is controllable or stochastic (the empty product evaluates to 1). We denote by $\E^{\sigma}_{s}$ the expectation operator of this probability space.

We say that a run $\omega = s_0s_1\cdots$ is \emph{compatible} with a strategy $\sigma$ if $\sigma(s_0\cdots s_i)(s_i,s_{i+1})>0$ for all $i\geq 0$ such that $s_i \in \cstates$.

\begin{definition}[EMDP]
An \emph{energy MDP (EMDP)} is a tuple
$\calE=(\M,\energy)$, where $\M$ is a finite MDP and $\energy$ is a function
assigning to every transition an integer \emph{update}.
\end{definition}

We implicitly extend all MDP-related notions to EMPDs, i.e., for $\calE=(\M,\energy)$ we speak about runs and strategies in $\calE$ rather than about runs and strategies in $\M$. A \emph{configuration} of $\calE$ is an element of $\states \times \Zset$ written as $s(n)$.

Given an EMDP $\calE=(\M,\energy)$ and a configuration $s(n)$ of $\calE$, we use $\size{\calE}$ and $\size{s(n)}$ to denote the encoding size of $\calE$ and  $s(n)$, respectively, where the counter updates and rewards used in $\calE$, as well as the $n$ in $s(n)$, are written as (fractions of) binary numbers. We also use $\maxE$ to denote the maximal non-negative integer $u$ such that $u$ or $-u$ is an update assigned by $E$ to some transition. 

Given a finite or infinite path $w=s_0s_1\cdots$ in $\calE$ and an \emph{initial configuration} $s_0(n_0)$, we define the \emph{energy level} after $i$ steps of $w$ as $\enlev{n_0}{w}{i} = n_0 + \sum_{i=0}^{i-1}\energy(s_i,s_{i+1})$ (the empty sum evaluates to zero). A configuration of $\calE$ after $i$ steps of $w$ is then the configuration $s_i(n_i)$, where $n_i = \enlev{n_0}{w}{i}$. Note that for all $n$ and $i\geq 0$, $\enlev{n}{}{i}$ can be understood as a random variable.

We say that a run $\omega$ initiated in $s_0$ is \emph{safe} in a configuration $s_0(n_0)$ if $\enlev{n_0}{w}{i}\geq 0$ for all $i\geq 0$. A strategy $\sigma$ is safe in $s_0(n_0)$ if all runs compatible with $\sigma$ are safe in $s_0(n_0)$. Finally, a configuration $s_0(n_0)$ is safe if there is at least one strategy safe in $s_0(n_0)$ . The following lemma is straightforward.

\begin{lemma}
\label{lem:energy-monotone}
If $s(n)$ is safe and $m \geq n$, then $s(m)$ is safe.
\end{lemma}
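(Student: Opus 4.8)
The plan is to show that the very same strategy witnessing safety of $s(n)$ already witnesses safety of $s(m)$. By definition of a safe configuration, there is a strategy $\sigma$ that is safe in $s(n)$, meaning every run compatible with $\sigma$ is safe in $s(n)$. I would fix such a $\sigma$ and argue that it is also safe in $s(m)$.

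The key observation is that a strategy, as defined here, is a function of the finite path $w s$ (the sequence of visited states) and does not depend on the current counter value; consequently the set of runs compatible with $\sigma$ is exactly the same whether we start in $s(n)$ or in $s(m)$. So it suffices to check the energy-level condition for each such run under the larger initial value. First I would recall the closed form $\enlev{n_0}{w}{i} = n_0 + \sum_{j=0}^{i-1}\energy(s_j,s_{j+1})$, from which, for any run $\omega$ and any $i \geq 0$,
\[
\enlev{m}{\omega}{i} \;=\; \enlev{n}{\omega}{i} + (m-n).
\]
Since $\omega$ is safe in $s(n)$ we have $\enlev{n}{\omega}{i} \geq 0$, and since $m \geq n$ we have $m - n \geq 0$; hence $\enlev{m}{\omega}{i} \geq 0$ for all $i$, i.e.\ $\omega$ is safe in $s(m)$.

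As this holds for every run compatible with $\sigma$, the strategy $\sigma$ is safe in $s(m)$, so $s(m)$ is safe, as required. There is no real obstacle in this argument; the only point deserving explicit mention is the observation that $\sigma$'s choices depend only on the state history and not on the counter, which is what lets us reuse the identical strategy and the identical family of compatible runs after shifting the initial energy.
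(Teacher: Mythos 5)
Your proof is correct and is exactly the argument the paper has in mind when it calls the lemma ``straightforward'' (the paper gives no explicit proof): reuse the strategy witnessing safety of $s(n)$, observe that strategies depend only on the state history and not the counter, and note that every energy level shifts up by the nonnegative amount $m-n$.
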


To every run $\omega=s_0s_1\cdots$ in $\calE$ we assign a mean payoff $\MP(\omega)$ collected along $\omega$ defined as
$\MP(\omega) := \liminf_{n\rightarrow \infty} (\sum_{i=1}^{n}\rev(s_{i-1},s_{i}))/n$. The function $\MP$ can be seen as a random variable, and for every strategy $\sigma$ and initial state $s$ we denote by $\E^{\sigma}_{s}[\MP]$ its expected value (w.r.t.{} $\Probm^{\sigma}_{s}$).

\begin{definition}[Energy-constrained value]
Let $\calE=(\M,\energy)$ be an EMDP and $s(n)$ its configuration. The \emph{energy-constrained mean-payoff value} (or simply the \emph{value}) of $s(n)$ is defined by
\(
  \val(s(n)) := \sup\ \{ \E^{\sigma}_{s}[\MP]  \mid \sigma \mbox{ is safe in } s(n)\} \,.
\)
For every state $s$ we also put $\val(s) := \lim_{n \rightarrow \infty} \val(s(n))$.
\end{definition}

Note that the value of every unsafe configuration is $-\infty$.
We say that a strategy $\sigma$ is \emph{$\eps$-optimal} in  $s(n)$, where $\eps\geq 0$, if $\sigma$ is safe in $s(n)$ and $\val(s(n))-\E^{\sigma}_{s}[\MP]\leq \eps$. A $0$-optimal strategy is called \emph{optimal}.

\section{The Results}
\label{sec:onedim}

In this section we precisely formulate and prove the results about EMDPs announced in Section~\ref{sec-intro}. Let $\calE=(\calM,\energy)$ be an EMDP. For every state $s$ of $\calE$, let $\minsafe(s)$ be the least $n \in \Nset$ such that $s(n)$ is a safe configuration.  If there is no such $n$, we put $\minsafe(s)=\infty$. The following lemma follows from the standard results on one-dimensional energy games~\cite{CdAHS:resource-interfaces}.

\begin{lemma}
\label{lem:minsafe-onedim}
There is a $\PTIME^{\EG}$~algorithm which computes, for a given EMDP $\calE=(\calM,\energy)$ and its state $s$, the value $\minsafe(s)$. 
\end{lemma}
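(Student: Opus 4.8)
The plan is to reduce the computation of $\minsafe(s)$ to a sequence of energy-game safety queries answered by the $\EG$ oracle, and to exploit monotonicity (Lemma~\ref{lem:energy-monotone}) to locate the threshold efficiently. Recall that the safety objective is ``closed'': a run violates it as soon as a finite prefix drives the counter negative, so an adversary controlling the stochastic states cannot do better (for the purpose of violating safety) than a strategy in the associated two-player energy game $\calG_\calE$ obtained by declaring the stochastic states $\stochstates$ to be adversarial. Hence $s(n)$ is a safe configuration in $\calE$ if and only if the controller wins the energy game from $s(n)$, which is precisely a single $\EG$ query. This is exactly the correspondence invoked in the introduction to justify using $2$-player energy-game algorithms for the safety sub-problem.

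First I would establish a bound, polynomial in $\size{\calE}$ and $\maxE$, on the candidate values of $\minsafe(s)$ whenever it is finite. By the theory of one-dimensional energy games~\cite{CdAHS:resource-interfaces}, if the controller can keep the counter non-negative at all, it can do so from an initial credit bounded by $|\states| \cdot \maxE$ (more precisely, a bound of the form $(|\states|-1)\cdot\maxE$ suffices for the winning-credit threshold). Consequently, if $s(n)$ is safe for some $n$ at all, then $s(N)$ is already safe for $N = |\states|\cdot\maxE$. This gives a finite, pseudo-polynomially bounded search range $\{0,1,\dots,N\}$, and by Lemma~\ref{lem:energy-monotone} the predicate ``$s(n)$ is safe'' is monotone in $n$: false below the threshold and true at and above it. Thus $\minsafe(s)$ is the unique switching point of this monotone Boolean function.

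Given the monotone structure, I would compute $\minsafe(s)$ by binary search over $\{0,\dots,N\}$, issuing an $\EG$ oracle call at each probe $n$ to decide whether $s(n)$ is safe. This uses $O(\log N) = O(\log(|\states|\cdot\maxE))$ oracle queries, each on an input of size polynomial in $\size{\calE}$, and the arithmetic on the binary-encoded bounds is polynomial-time; so the whole procedure is a genuine $\PTIME^{\EG}$ algorithm. If every probe up to $N$ returns ``unsafe'' — equivalently, if $s(N)$ is unsafe — then no finite credit makes $s$ safe and we output $\minsafe(s)=\infty$, which is correct by the credit bound just established.

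The main obstacle, and the one point that deserves care rather than routine checking, is the reduction itself: justifying that stochastic states may be treated as a fully adversarial player without changing the answer. This is sound only because safety is a prefix-closed (equivalently, a safety) objective, so a configuration is safe in the MDP sense iff there is a controller strategy that prevents \emph{every} compatible run from going negative, which is exactly a sure-winning condition in the derived turn-based game; the actual transition probabilities (which are positive by definition of an MDP) are irrelevant to whether a negative counter is \emph{reachable}. I would state this equivalence explicitly and note that it would fail for any objective whose violation is not witnessed by a finite prefix. The remaining ingredients — the credit bound and the monotone binary search — are then direct consequences of the cited one-dimensional energy-game results and Lemma~\ref{lem:energy-monotone}.
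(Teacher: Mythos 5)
Your proof is correct and follows exactly the route the paper intends: the paper gives no explicit argument for this lemma, merely noting that it ``follows from the standard results on one-dimensional energy games,'' and your write-up supplies precisely those ingredients --- the reduction of MDP safety to the two-player energy game (sound because a safety violation is witnessed by a finite prefix), the standard $O(|\states|\cdot\maxE)$ bound on the minimal sufficient credit, and a binary search over that range using the monotonicity of Lemma~\ref{lem:energy-monotone}, which keeps the number of oracle calls logarithmic and hence the algorithm genuinely in $\PTIME^{\EG}$.
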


Next, we present a precise definition of strongly connected and pumpable EMPDs. We say that $\calE$ is \emph{strongly connected} if for each pair of states $s,t$ there is a finite path starting in $s$ and ending in $t$. The pumpability condition is more specific.

\begin{definition}
	\label{def-pumpable}
	Let $\calE$ be an EMDP and $s(n)$  a configuration of~$\calE$. We say that a strategy $\sigma$ is \emph{pumping in $s(n)$} if $\sigma$ is safe in $s(n)$ and $\Probm_s^{\sigma}(\sup_{i\geq 0} \enlev{n}{}{i}=\infty)=1$. Further, we say that $s(n)$ is \emph{pumpable} if there is a strategy pumping in $s(n)$, and $\calE$ is \emph{pumpable} if every safe configuration of $\calE$ is pumpable.
\end{definition}

The subclass of strongly connected pumpable EMDPs is denoted by $\sccp$. 
Clearly, if $s(n)$ is pumpable, then every $s(m)$, where $m\geq n$, is also pumpable. Hence, for every  $s\in \states$, we define $\minpump(s)$ as the least $n$ such that $s(n)$ is pumpable. If there is no such $n$, we put \mbox{$\minpump(s)=\infty$}. 

Intuitively, the condition of pumpability allows to increase the counter to an arbitrarily high value whenever we need. The next lemma says that we can compute a strategy which achieves that. 

\begin{lemma}
	\label{lem:minpump}
	For every EMDP $\calE$ there exist a memoryless \emph{globally pumping} strategy $\sigma$, i.e. a strategy that  is pumping in every pumpable configuration of $\calE$. Further, there is a $\PTIME^{\EG}$~algorithm which computes the strategy $\sigma$ and the value $\minpump(s) \leq 3\cdot |\states|\cdot \maxE$ for every state $s$ of $\calE$. The problem whether a given configuration of $\calE$ is pumpable is $\EG$-hard.
\end{lemma}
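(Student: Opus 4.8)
The plan is to show that \emph{pumpability $=$ safety $+$ strictly positive drift in the recurrent behaviour}. I would first recall that, by Lemma~\ref{lem:minsafe-onedim}, a strategy is safe in $s(n)$ exactly when $n\ge\minsafe(s)$, where $\minsafe$ is the least winning credit of the one-dimensional energy game obtained by treating stochastic states as adversarial; the classical energy-progress-measure bound gives $\minsafe(s)\le(|\states|-1)\cdot\maxE$, the inequality $\minsafe(s_{i+1})\le\minsafe(s_i)+\energy(s_i,s_{i+1})$ holds on the controller's chosen edge and on all stochastic edges, and along every safe run $\enlev{n}{}{i}\ge\minsafe(s_i)$ at each step~$i$. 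The heart is a dichotomy: fix a memoryless strategy $\sigma$ winning the energy game from $s(n)$. Since a memoryless winning strategy admits no reachable negative cycle, every cycle of the induced finite Markov chain has nonnegative weight, so in each BSCC $B$ the per-step drift $d_B$ is $\ge 0$. The progress inequality makes $\enlev{n}{}{i}-\minsafe(s_i)$ nondecreasing along runs in $B$; if $d_B=0$ its expected increment is $0$, forcing every edge of $B$ to be tight, whence $\enlev{n}{}{i}=\enlev{n}{}{0}-\minsafe(s_0)+\minsafe(s_i)$ is a function of the current state alone and $\sup_i\enlev{n}{}{i}<\infty$ almost surely; if $d_B>0$ the strong law gives $\enlev{n}{}{i}\to\infty$ almost surely. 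Hence, under any memoryless safe $\sigma$, $\Probm^{\sigma}_s(\sup_i\enlev{n}{}{i}=\infty)=1$ iff the run almost surely ends in a BSCC of positive drift, and pumping from $s(n)$ is possible at all iff the controller can safely and almost surely reach an end component admitting positive drift.

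Next I would make this effective. Call an end component $C$ \emph{gaining} if it is safe and the maximal expected mean update achievable inside its safe sub-MDP (the sub-MDP keeping only the choices $u\to v$ with $\minsafe(u)+\energy(u,v)\ge\minsafe(v)$) is strictly positive; safety and $\minsafe$ are obtained from the $\EG$ oracle, and the mean-update optimum by a linear program as for mean-payoff MDPs, so the test is $\PTIME^{\EG}$. The globally pumping strategy $\sigma^\ast$ is then defined memorylessly: on each gaining $C$ it plays the LP-optimal positive-drift safe strategy, and everywhere else it plays a safe strategy that almost surely attracts towards $\bigcup\{C:C\text{ gaining}\}$, obtained by intersecting the safe sub-MDP with a memoryless almost-sure reachability strategy. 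All of this is computed in $\PTIME^{\EG}$.

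For correctness and the bound, suppose $s(n)$ is pumpable by an arbitrary, possibly infinite-memory, strategy $\tau$. The set of states visited infinitely often is almost surely an end component on which $\tau$ is safe and the energy is unbounded; by the dichotomy (using $\enlev{}{}{i}\ge\minsafe(\cdot)$ and all-cycles-nonnegative) that end component has a positive-drift safe behaviour, hence is gaining. Thus a gaining component is safely and almost surely reachable from $s(n)$, so $\sigma^\ast$ pumps from every sufficiently high configuration of $s$. The threshold is bounded by summing three energy-progress-measure estimates, each at most $|\states|\cdot\maxE$: the credit needed to reach a gaining component safely, the credit that must be delivered upon entry so the internal strategy stays safe, and the transient buffer consumed before the positive drift dominates; this yields $\minpump(s)\le 3\cdot|\states|\cdot\maxE$.

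Finally, for $\EG$-hardness I would reduce from the energy-game safety problem, presenting the game as an EMDP by making the adversary's states stochastic with an arbitrary positive distribution, which preserves safety. Then I subdivide every transition $u\to v$ by a fresh stochastic state $x$ carrying a self-loop of update $+1$ and passing the original update on $x\to v$. In the worst-case (safety) reading the adversary never benefits from the $+1$ loop, so the safety game, and hence the answer to $\EG$, is unchanged; but every infinite safe run now traverses subdivision states infinitely often and almost surely takes the $+1$ loop infinitely often, driving the counter to $\infty$. Thus a configuration is pumpable iff it is safe iff the controller wins the original game, giving the reduction. The main obstacle I anticipate is the forward direction of correctness, namely extracting a \emph{memoryless} gaining end component from an arbitrary pumping strategy, together with pinning down the exact constant $3$ in the credit bound, since both require combining the progress-measure estimate and the law-of-large-numbers/positive-drift argument quantitatively rather than merely qualitatively.
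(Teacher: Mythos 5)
Your overall route is genuinely different from the paper's. The paper reduces pumpability to an energy B\"uchi objective: it inserts, for every transition, a small gadget with a $-1$ self-loop and asks for a safe strategy that decrements the counter infinitely often with probability one, then invokes known results on energy B\"uchi MDPs and games to obtain the $\PTIME^{\EG}$ algorithm and the $3\cdot|\states|\cdot\maxE$ bound, and finally extracts a memoryless globally pumping strategy by pruning to the transitions $(s,t)$ with $\minpump(s)+\energy(s,t)\geq\minpump(t)$ and steering towards a family of disjoint positive cycles. Your hardness reduction (randomizing the adversary and subdividing edges with a stochastic $+1$ self-loop) is essentially the same trick the paper uses elsewhere, in the proof of Theorem~\ref{thm:onedim-scc-pump}(5.), and is fine.

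The gap is in the step you yourself flag: the claim that an arbitrary pumping strategy almost surely ends in a ``gaining'' end component. Your dichotomy is proved only for memoryless safe strategies, and its key premise --- every cycle of the induced chain has nonnegative weight --- is simply false for general safe strategies: a history-dependent strategy may traverse a negative cycle whenever the current energy is high enough (e.g., alternate a $+1$ and a $-1$ self-loop on the same state), so you cannot conclude that the increments of ``energy minus $\minsafe$'' are nonnegative, nor that a zero-drift component forces tight edges and bounded energy. What you actually need is: if the maximal achievable expected drift in an end component is zero, then \emph{every} safe strategy keeps the energy bounded almost surely. This is precisely the content of Proposition~\ref{lem-stable} (in contrapositive, via the notion of drifting runs and the program $\calT_{\calE}$), whose proof is a nontrivial value/martingale argument resting on prior work on one-counter MDPs; it does not follow from the cycle-tightness computation. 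So your characterization ``pumpable iff a gaining end component is safely and almost surely reachable'' is true, but its forward direction is unproven as written, and it is the hardest part of your plan. The quantitative claim that the ``transient buffer'' costs at most $|\states|\cdot\maxE$ is likewise asserted rather than derived. I would either import the drifting-runs result explicitly, or follow the paper's B\"uchi reduction, which sidesteps the issue entirely.
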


\noindent
Now we can state our results about $\sccp$s. 

\begin{theorem}
\label{thm:onedim-scc-pump}
For the subclass of $\sccp$s, we have the following:
\begin{enumerate}
\item The problem whether a given EMDP $\calE$ belongs to $\sccp$ is $\EG$-hard and solvable by a $\PTIME^{\EG}$~algorithm. 
\item The value of all safe configurations of a given $\sccp$ $\calE$ is the same. Moreover, there is a $\PTIME^{\EG}$~algorithm which computes this value.
\item For every $\sccp$  $\calE$  and every configuration $s(n)$ of~$\calE$, there is a strategy $\sigma$ optimal in $s(n)$. In general, $\sigma$ may require infinite memory, and there is a $\PTIME^{\EG}$~algorithm which computes a finite description of this strategy.
\item For every $\sccp$  $\calE$, every configuration $s(n)$ of~$\calE$, and every $\varepsilon > 0$, there is a finite-memory strategy which is $\varepsilon$-optimal in $s(n)$. Further, there is a $\PTIME^{\EG}$~algorithm which computes a finite description of this strategy.
\item The gap threshold problem for $\sccp$s is $\EG$ hard.
\end{enumerate}
\end{theorem}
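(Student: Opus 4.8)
The plan is to funnel all five parts through a single analysis: the value of a $\sccp$ equals the optimum $v^*$ of a frequency-based linear program. Since $\calE$ is strongly connected, its states form one communicating class, and I would set up the standard mean-payoff frequency LP for $\calM$ (Kirchhoff/balance constraints on an invariant frequency vector $f$ over transitions, maximizing $\sum_{t\in\trans} f(t)\cdot\rev(t)$) augmented with the \emph{energy constraint} $\sum_{t\in\trans} f(t)\cdot\energy(t)\geq 0$. Call its optimum $v^*$; it is rational and solvable in polynomial time. The first structural fact I would prove is that strong connectivity together with pumpability forces a strategy whose long-run average counter change is \emph{strictly} positive: a zero-drift one-dimensional walk is recurrent and cannot satisfy $\Probm_s^{\sigma}(\sup_i \enlev{n}{}{i}=\infty)=1$ as required by Definition~\ref{def-pumpable}. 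This rules out the purely ``balanced'' regime and, more usefully, guarantees that the feasible region of the LP contains points with strictly positive energy slack, which is exactly what later lets us implement $v^*$ by a \emph{safe} strategy.

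For part (2) I would show $\val(s(n))=v^*$ for every safe $s(n)$ by two matching bounds. The upper bound uses that any safe strategy keeps $\enlev{n}{}{i}\geq 0$, so $\liminf_i \enlev{n}{}{i}/i\geq 0$ almost surely; translating this pathwise constraint into a constraint on the (sub)limit frequency vectors realized by the strategy shows that every safe strategy induces a frequency vector feasible for the LP, hence $\E^{\sigma}_s[\MP]\leq v^*$, and crucially this bound does not depend on $n$. The lower bound is delivered by the strategy constructions of parts (3)--(4), so that $\val(s(n))=v^*$; since finite prefixes and a vanishing fraction of ``recovery'' steps do not affect $\MP$, the value is literally the \emph{same} number for all safe configurations, giving $\val(s)=v^*$ as well. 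Computability in $\PTIME^{\EG}$ follows because the LP is polynomial and the oracle is needed only to locate the safe region via Lemmas~\ref{lem:minsafe-onedim} and~\ref{lem:minpump}.

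Parts (3) and (4) are the technical heart and the step I expect to be the main obstacle. I would decompose the optimal $f^*$ into finitely many randomized memoryless strategies together with a mixing ratio (as in \cite{BBCFK14}), obtaining a ``mixture'' whose frequencies match $f^*$; the difficulty is that when the energy constraint is tight the mixture runs the counter as a zero-drift walk returning to low values infinitely often, so it is not safe on its own. To repair this I would interleave the mixture with the memoryless globally pumping strategy of Lemma~\ref{lem:minpump}: whenever the counter falls to a low threshold, switch to pumping until it climbs above a higher threshold, then resume the mixture. The crux is to bound, via the martingale techniques of \cite{BKK:oc-jacm}, the expected long-run fraction of steps spent pumping and to show it can be made arbitrarily small. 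For $\eps$-optimality (part (4)) fixed thresholds suffice, yielding a finite-memory strategy with error $\leq\eps$; for exact optimality (part (3)) I would let the switching thresholds grow along the run so the pumping density tends to $0$, forcing infinite memory but admitting a finite description (the memoryless ingredients, the pumping strategy, and a simple rule comparing the counter against a growing threshold). Simultaneously certifying safety and the exact/asymptotic mean-payoff guarantee---tuning the (de)activation events so pumping keeps the counter high yet contributes negligibly to the average---is the delicate point.

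Finally, for the decision problems (1) and (5): membership of the recognition problem in $\PTIME^{\EG}$ is immediate, since strong connectivity is a polynomial graph test and $\calE$ is pumpable iff $\minpump(s)=\minsafe(s)$ for every state $s$, both computable by $\PTIME^{\EG}$ algorithms (Lemmas~\ref{lem:minsafe-onedim} and~\ref{lem:minpump}). For $\EG$-hardness of recognition I would reduce from the $\EG$-hard pumpability problem of Lemma~\ref{lem:minpump}, padding the instance with recovery transitions into a cheap pumping gadget so that the whole EMDP becomes strongly connected while the pumpability answer at the distinguished configuration is preserved. For the gap-threshold hardness (5) I would reduce from $\EG$ by embedding the energy game as a gadget that gates access to a high-reward recurrent region: because a safe strategy must keep the counter non-negative on \emph{all} runs (treating stochastic states as the adversary, exactly the setting of \cite{CdAHS:resource-interfaces}), the high-reward region can be sustained iff the distinguished game configuration is safe, while a cheap pumping/low-reward component added for connectivity keeps $\calE$ inside $\sccp$ in both cases; this yields $v^*\geq x$ when the game is safe and $v^*\leq x-\eps$ otherwise. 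The care here is to remain in $\sccp$ on both branches and to design the gadget so that the worst-case safety notion of $\EG$ and the frequency/expectation characterization of $v^*$ coincide on it.
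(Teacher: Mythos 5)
Your overall architecture is the paper's: the same linear program $\calL_{\calE}$, decomposition of an optimal solution into memoryless frequency strategies, interleaving with the globally pumping strategy of Lemma~\ref{lem:minpump}, growing thresholds for exact optimality versus fixed thresholds for $\eps$-optimality, and martingale bounds in the style of \cite{BKK:oc-jacm}. Two points need attention.

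First, your justification that pumpability forces a strategy with strictly positive drift is wrong as stated: a zero-drift recurrent walk has $\limsup_i \enlev{n}{}{i}=+\infty$ almost surely, so recurrence does \emph{not} contradict $\Probm^{\sigma}_s(\sup_i\enlev{n}{}{i}=\infty)=1$; what fails for a balanced \emph{safe} strategy is subtler. The paper's actual argument is Proposition~\ref{lem-stable}: if $g^*=0$ then every safe strategy makes almost all runs stable (the counter value at each recurrently visited state eventually freezes), hence the counter is a.s.\ bounded, and \emph{that} is what contradicts pumpability. This is one of the paper's nontrivial structural results, not a one-line consequence of recurrence. For Theorem~\ref{thm:onedim-scc-pump} itself the load-bearing object is the globally pumping strategy $\pi$ of Lemma~\ref{lem:minpump}, which you do invoke, so the damage is contained --- but the claim as you argue it is false.

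Second, for exact optimality you correctly identify the tension --- the pumping thresholds must grow fast enough that stage failures are summable, yet slowly enough that the fraction of time spent pumping vanishes --- but you leave its resolution open, and that resolution is the heart of part~(3). The paper first splits the optimal solution into a type~I core (a single component $C$ with $\trend_C>0$), where \emph{fixed} thresholds already give exact optimality because after each return to the high region the probability of never dropping again is bounded away from zero (so by Borel--Cantelli the strategy eventually plays $\mu_C$ forever), and a type~II core (two components mixed to trend exactly $0$), where it pumps to height roughly $\mathit{TH}+(i\cdot N)^{3/4}$ in stage $i$: growth strictly faster than $\sqrt{i}$ makes the Azuma bound on stage failure summable, while sublinearity keeps the pumping time negligible in the mean payoff (Lemma~\ref{lem:reach-mixing}). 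Without committing to a rate in this window, your plan does not yet yield part~(3).
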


\noindent
In particular, note that $\varepsilon$-optimal strategies in $\sccp$s require only finite memory~(4.), but they are not easier to compute than optimal strategies~(5.).

The following theorem summarizes the results for general EMDPs.

\begin{theorem}
\label{thm:onedim-general}
For general EMDPs, we have the following:
\begin{enumerate}
\item Optimal strategies may not exist in EMDPs that are either not strongly connected or not pumpable.
\item Given an EMDP $\calE$, a configuration $s(n)$ of $\calE$, and $\varepsilon > 0$, the value of $s(n)$ can be approximated up to the absolute error $\varepsilon$ in time which is polynomial in $\size{\calE}$, $\size{s(n)}$, $M_{\calE}$, and $1/\varepsilon$. 
\item Given an EMDP $\calE$ and a state $s$ of $\calE$, the limit value $\val(s)$ is computable in time polynomial in $\size{\calE}$ and $M_{\calE}$.
\item Let $\calE$ be an EMDP, $s(n)$ a configuration of $\calE$, and $\varepsilon > 0$. An $\varepsilon$-optimal strategy in $s(n)$ may require infinite memory. A finite description of a strategy $\sigma$ which is $\varepsilon$-optimal strategy in $s(n)$ is computable in time polynomial in $\size{\calE}$, $M_{\calE}$, and $1/\varepsilon$. 
\item The gap threshold problem for EMDPs is in $\EXPTIME$ and $\PSPACE$-hard.
\end{enumerate}
\end{theorem}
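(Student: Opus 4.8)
The plan is to reduce the analysis of a general EMDP to (i) a per-MEC analysis that reuses the machinery developed for the $\sccp$ subclass, and (ii) a reachability computation over the MEC quotient that glues the individual MECs together, with a truncation argument handling configurations of low counter value. The structural backbone is a dichotomy, decidable by linear programming inside each MEC $C$ (viewed as a strongly connected EMDP): either $C$ admits a strategy whose expected average counter change per step is strictly positive, or it does not. In the first case the optimal mean payoff subject to a non-negative average counter change is given by an LP whose optimum $\val_C$ can be realized in the limit by mixing finitely many randomized memoryless strategies, exactly as in the $\sccp$ analysis and in \cite{BBCFK14}. In the second, \emph{balanced}, case I would first prove the key structural lemma announced in Section~\ref{sec-intro}: under \emph{every} safe strategy, almost every run has a suffix along which any two visited configurations sharing a control state carry the same counter value. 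Consequently only finitely many configurations are visited from some point on, and $\val_C$ reduces to the mean payoff of a finite-state MDP supported on zero-total-change cycles, whose counter coordinates are bounded by $|\states|\cdot\maxE$.

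For Part 3, I would show that $\val(s)=\max_C \val_C$, where $C$ ranges over the MECs graph-reachable from $s$. The inequality $\ge$ holds because, starting with an arbitrarily large counter, a simple path of length $<|\states|$ reaches any graph-reachable $C$ while losing at most $|\states|\cdot\maxE$ energy, after which (using $\minpump(s)\le 3|\states|\maxE$ from Lemma~\ref{lem:minpump} in the positive-drift case, or the bounded settled region in the balanced case) one enters the regime realizing $\val_C$ up to an error vanishing as $n\to\infty$. The inequality $\le$ holds because every run eventually stays in a single MEC, whose achievable mean payoff is bounded by $\val_C$. Since MEC decomposition, the per-MEC LPs, and reachability in the quotient are all computable in time polynomial in $\size{\calE}$ and $\maxE$ (the balanced case only inspects a counter window of width $O(|\states|\maxE)$, and safety of the limit regime is free for large counter), this yields the stated pseudo-polynomial bound without the $\EG$ oracle.

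For Parts 2 and 4 the limit value no longer suffices, because for a concrete $s(n)$ safety genuinely constrains the play. I would fix a threshold $B=\mathrm{poly}(\size{\calE},\size{s(n)},\maxE,1/\varepsilon)$ and build a finite MDP $\calM_B$ that simulates $\calE$ while the counter stays in $[0,B]$ and, upon the counter reaching $B$, terminates with a reward equal to the limit value of Part 3 (which is $\varepsilon$-achievable from a high counter by the $\sccp$-style construction). Solving $\calM_B$ for the optimal combined reachability/mean-payoff value, together with optimizing the fallback mean payoff in the low-counter region, gives a value within $\varepsilon$ of $\val(s(n))$; the error estimate is controlled by a martingale concentration argument in the spirit of \cite{BKK:oc-jacm}, bounding by a quantity exponentially small in $B$ the probability that a drifting counter ever returns below a safe level. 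The associated strategy plays the $\varepsilon$-optimal mixture once the counter is high and a safe reachability/fallback strategy when it is low; the infinite memory of Part 4 is forced because an optimal mixing ratio may require ever finer adjustment as the counter grows, which I would witness by a small example. Part 1 is established analogously by two witnesses: a strongly connected non-pumpable EMDP, and a pumpable non-strongly-connected one, in each of which the supremum value is approached but provably not attained.

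For Part 5, membership in $\EXPTIME$ is inherited from Part 2: since updates are binary-encoded, $\maxE$ and $1/\varepsilon$ are at most exponential in the input, so the pseudo-polynomial approximation algorithm (including the oracle-free pseudo-polynomial solution of the embedded energy games) runs in exponential time and decides the promised gap; $\PSPACE$-hardness I would obtain by reducing from a canonical $\PSPACE$-complete problem, encoding a polynomial-space computation into the counter dynamics so that the value crosses the gap exactly when the simulated machine accepts. The main obstacle is the balanced case: proving that under every safe strategy almost all runs collapse to bounded-counter behavior, extracting $\val_C$ from the resulting finite MDP, and—hardest of all—feeding a correct, \emph{uniform} error bound into the truncation of Parts 2 and 4, so that the value computed from $\calM_B$ is provably within $\varepsilon$ across both the positive-drift and balanced MECs. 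This concentration estimate is the technical heart of the theorem.
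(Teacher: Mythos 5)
Your overall architecture coincides with the paper's: the per-MEC drift dichotomy decided by a linear program (the paper's program $\calT_{\calE}$ maximizing $\sum_e f_e\cdot\energy(e)$, with Case~A being $g^*>0$ and Case~B being $g^*=0$), the stability statement for the balanced case (this is exactly Proposition~\ref{lem-stable}, reducing the balanced MEC to a finite MDP over configurations with counter bounded by $|\states|\cdot\maxE$), the truncation of low-counter configurations into a finite MDP for concrete $s(n)$, the two counterexamples for Part~1, and $\EXPTIME$ membership inherited from the pseudo-polynomial approximation. The $\PSPACE$-hardness in the paper is by reduction from the cost problem for acyclic MDPs of Haase and Kiefer rather than a direct encoding of a space-bounded computation, but that is a difference of convenience, not of substance.

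There is, however, one genuine error: your formula $\val(s)=\max_C \val_C$ over the MECs \emph{graph-reachable} from $s$ is false, and your justification of the $\geq$ direction exposes why. The existence of a simple path from $s$ into $C$ does not mean a strategy can follow it: at a stochastic state the successor is drawn according to $\Prob$, so the controller reaches $C$ only with some probability $p<1$ and with probability $1-p$ may be trapped in a region from which $C$ is no longer reachable. A two-MEC example with a stochastic initial state branching with probability $\tfrac12$ to a MEC of limit value $10$ and with probability $\tfrac12$ to one of limit value $0$ has $\val(s)=5$, not $10$. The correct computation, as in the paper, builds the quotient MDP $\M[\calE]$ with a fresh state $r_M$ per MEC carrying a self-loop rewarded by the limit value of $M$, and solves the standard mean-payoff optimization on this finite MDP; the optimum there is the expected limit value of the MEC in which the run eventually stays, optimized over strategies, which is in general strictly below the maximum. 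This error also propagates into your Parts~2 and~4, since the terminal rewards of your truncated MDP $\calM_B$ are precisely these limit values; with the quotient-MDP computation substituted for the max, the rest of your plan goes through along the paper's lines.
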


Before proving Theorems~\ref{thm:onedim-scc-pump} and \ref{thm:onedim-general}, we introduce several tools that are useful for the analysis of strongly connected EMDPs.
For the rest of this section, we fix a \emph{strongly connected} EMDP $\calE=(\M,\energy)$ where $\M = (\states,(\cstates,\stochstates),\trans{},\Prob,\rev)$.

The key component for the analysis of $\calE$ is the linear program $\calL_{\calE}$ shown in Figure~\ref{fig:oc-lp}~(left). The program is a modification of a program used in~\cite{BBCFK14}  for multi-objective mean-payoff optimization. For each transition $e$ of $\calE$ we have a non-negative variable $f_e$ that intuitively represents the long-run frequency of traversals of $e$ under some strategy (the fact that $f_e$'s can be given this interpretation is ensured by the \emph{flow constraints} introduced in the first three lines). The constraint on the fourth line then ensures that a strategy that visits each transition $e$ with frequency $f_e$ achieves a non-negative long-run change of the energy level. In other words, such a strategy ensures that the energy level does not have, on average, a tendency to decrease.

Intuitively, the optimal value of $\calL_{\calE}$ is the maximal expected mean payoff achievable under the constraint that the long-run average change (or \emph{trend}) of the energy level is non-negative. Every safe strategy has to satisfy this constraint, because otherwise the probability of visiting a configuration with negative counter would be positive. Thus, using the methods adopted from~\cite{BBCFK14}, we get the following.

\begin{lemma}
\label{lem:strategy-to-lpsol}
If there is a strategy $\sigma$ that is safe in some configuration $s(n)$ of~$\calE$, then the linear program $\calL_{\calE}$ has a solution whose objective value is at least $\E^{\sigma}_{s}[\MP]$. 
\end{lemma}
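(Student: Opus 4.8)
The plan is to extract a feasible point of $\calL_{\calE}$ from the safe strategy $\sigma$ by taking long-run (Ces\`aro) averages of expected transition frequencies and passing to a limit point. Let $e_i$ denote the $i$-th transition taken along a run, and for a transition $e$ let $X_i^e$ be the indicator of the event $e_i = e$. Set $\bar f_e^{N} = \frac1N \sum_{i=1}^{N} \E^{\sigma}_{s}[X_i^e]$, the expected average number of uses of $e$ in the first $N$ steps. For every $N$ the vector $(\bar f_e^N)_e$ is nonnegative and satisfies $\sum_e \bar f_e^N = 1$, since exactly one transition is taken per step; hence it lies in a fixed compact set. I would first choose a subsequence along which $\sum_e \bar f_e^{N}\,\rev(e)$ converges to its $\liminf$, and then, by compactness, refine it to a subsequence $N_k$ along which $\bar f_e^{N_k} \to f_e$ for every transition $e$. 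The candidate LP solution is $f = (f_e)_e$.

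Next I would verify that $f$ satisfies the constraints of $\calL_{\calE}$. Normalization $\sum_e f_e = 1$ is immediate, as it holds exactly for each $\bar f^{N}$. For flow conservation at a state $u$, the expected number of times a run enters $u$ and the expected number of times it leaves $u$ within the first $N$ steps differ by at most $1$ (a telescoping boundary term), so dividing by $N$ and letting $k\to\infty$ along $N_k$ yields equality of in- and out-flow at $u$. The stochastic-consistency constraints hold because at a stochastic state $s$ each outgoing transition $(s,t)$ is taken with probability exactly $\Prob(s,t)$ regardless of $\sigma$, so $\E^{\sigma}_{s}[X_i^{(s,t)}] = \Prob(s,t)\cdot\sum_{e'\in\out(s)}\E^{\sigma}_{s}[X_i^{e'}]$ for every $i$; this linear identity is preserved under averaging and in the limit.

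The two substantive points are the energy constraint and the objective bound. For the trend constraint, safety of $\sigma$ in $s(n)$ gives $\enlev{n}{\omega}{N} = n + \sum_{i=1}^{N}\energy(e_i) \geq 0$ for every compatible run $\omega$ and every $N$; taking expectations yields $\sum_e \bar f_e^{N}\,\energy(e) \geq -n/N$, and letting $k\to\infty$ along $N_k$ gives $\sum_e f_e\,\energy(e) \geq 0$, which is exactly the required constraint. For the objective, the rewards are bounded, so the partial averages $\frac1N\sum_{i=1}^N \rev(e_i)$ are uniformly bounded and Fatou's lemma applied to the $\liminf$ defining $\MP$ gives $\E^{\sigma}_{s}[\MP] \leq \liminf_N \frac1N\sum_{i=1}^N \E^{\sigma}_{s}[\rev(e_i)] = \liminf_N \sum_e \bar f_e^{N}\,\rev(e)$. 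By the choice of subsequence, the right-hand side equals $\sum_e f_e\,\rev(e)$, the objective value of $f$. Hence $f$ is feasible with objective value at least $\E^{\sigma}_{s}[\MP]$.

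The main obstacle is the interplay of the two limit arguments: because $\MP$ is defined by a $\liminf$, only Fatou's inequality is available, and it bounds $\E^{\sigma}_{s}[\MP]$ from above by the limiting frequency reward. I must therefore select a single subsequence $N_k$ that simultaneously realizes this $\liminf$ and makes all frequency vectors converge; getting this order of extraction right is what makes the objective inequality point in the correct direction while keeping the energy constraint valid in the limit. The remaining verifications (flow, normalization, stochastic consistency) are routine boundary-term estimates, and the overall scheme follows the frequency-LP methodology of~\cite{BBCFK14}, the only genuinely new ingredient being the passage from safety to the trend constraint.
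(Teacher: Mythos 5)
Your proof is correct and follows exactly the route the paper intends: the paper gives no explicit proof of this lemma but defers to the frequency-vector methodology of~\cite{BBCFK14} (Ces\`aro averages of expected transition frequencies, compactness, flow/normalization/stochastic-consistency in the limit, Fatou for the $\liminf$-defined mean payoff), and your only added ingredient --- deriving $\sum_e f_e\,\energy(e)\geq 0$ from pathwise safety via $\sum_e \bar f_e^{N}\energy(e)\geq -n/N$ --- is precisely the step the paper summarizes in one sentence before stating the lemma. The order of subsequence extraction you describe is the right one, so there is nothing to add.
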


\begin{figure}
\parbox{.45\textwidth}{%
\centering\small
\begin{align*}
\intertext{\textbf{maximize} $\quad \sum_{e \in \trans} f_e \cdot \rev(e)\quad$  subject to}
\\[-3ex]
\sum_{e \in \trans} f_e &= 1
\\
\text{$\forall s \in \cstates$:} \sum_{(s',s)\in\trans} f_{(s',s)} &= \sum_{(s,s'')\in \trans} f_{(s,s'')}
\\
\text{$\forall s \in \stochstates$, $\forall (s,r)\in \trans$:}\; f_{(s,r)} &= \Prob(s,r)\cdot\sum_{(s',s)\in \trans} f_{(s',s)}
\\
\sum_{e \in \trans} f_e \cdot \energy(e) &\geq 0
\\
\text{$\forall e\in \trans$:}\; f_e &\geq 0
\end{align*}}
\hfill
\parbox{.45\textwidth}{%
\centering\small
\begin{tikzpicture}[x=1cm,y=1.5cm,font=\footnotesize]
\node (A1) at (0,0) [ran] {$s$};
\node (A2) at (0,-1.5) [ran] {$t$};
\draw (A1) [tran, rounded corners] -- +(-.3,0.5) -- node[above] {$\cu{2};0;\frac{1}{2}$} +(.3,.5) -- (A1);
\draw (A2) [tran, rounded corners] -- +(-.3,-.5) -- node[below] {$\cu{-1};0;\frac{1}{2}$} +(.3,-.5) -- (A2);
\draw (A1) [tran, bend left=30] to node[right] {$\cu{0};0;\frac{1}{2}$} (A2);
\draw (A2) [tran, bend left=30] to node[left] {$\cu{0};0;\frac{1}{2}$} (A1);
\end{tikzpicture}}
\caption{A linear program $\calL_{\calE}$ with non-negative variables $f_e$, $e\in \trans$ (left), and an EMDP where the strategy corresponding to the solution of $\calL_{\calE}$ is not safe (right).}
\label{fig:oc-lp}
\end{figure}
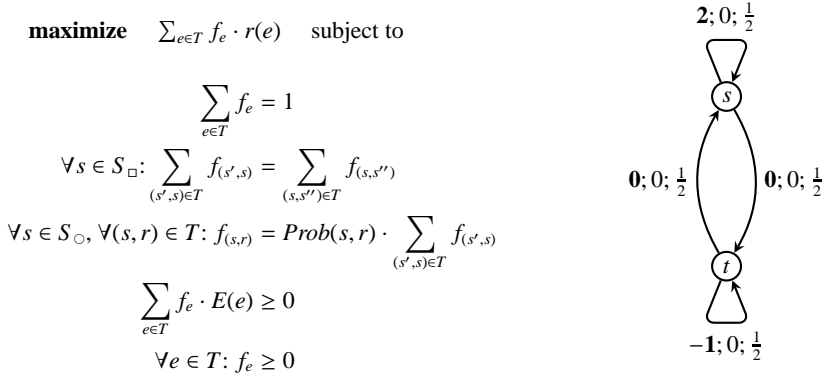

On the other hand, even if a strategy achieves a non-negative (or even positive) counter trend, it can still be unsafe in all configurations of $\calE$. To see this, consider the EMDP of Figure~\ref{fig:oc-lp}~(right). There is only one strategy (the empty function), and it is easy to verify that assigning $1/4$ to each variable in $\calL_{\calE}$ solves the linear program with objective value $1/2$. However, for every $m$ there is a positive probability that the decrementing loop on $s$ is taken at least $m$ times, and thus the strategy is not safe.

Although the program $\calL_{\calE}$ cannot be directly used to obtain a safe strategy optimizing the mean payoff, it is still useful for obtaining certain ``building blocks'' of such a strategy. To this end, we introduce additional terminology. 

Let $\vec{f}=(f_e)_{e\in\trans}$ be an optimal solution of $\calL_{\calE}$, and let $f^*$  be the corresponding optimal value of the objective function. A \emph{flow graph} of $\vec{f}$ is a digraph $G_{\vec{f}}$ whose vertices are the states of $\calE$, and there is an edge $(s,t)$ in $G_{\vec{f}}$  iff there is a transition $e=(s,t)$ with $f_e>0$. 
A \emph{component} of $\vec{f}$ is a maximal set $C$ of states that forms a strongly connected subgraph of $G_{\vec{f}}$. The set $\trans_C$ consists of all $(s,t)\in \trans$ such that $s\in C$ and $f_{(s,t)}>0$.
A \emph{frequency} of a component $C$ is the number $\freq{C} = \sum_{e\in \trans_C} f_{e}$. Finally, a \emph{trend} and \emph{mean-payoff} of a component $C$ are the numbers $\trend_C = \sum_{e\in \trans_C} ({f_e}/{\freq{C}})\cdot\energy(e)$ and $\meanp_C = \sum_{e\in \trans_C} ({f_e}/{\freq{C}})\cdot\rev(e)$. 

Intuitively, the components of $\vec{f}$ are those families of states that are visited infinitely often by a certain strategy that maximizes the mean payoff while ensuring that the counter trend is non-negative. We show that our analysis can be simplified by considering only certain components of $\vec{f}$. We define a \emph{type I core} and \emph{type II core} of $\vec{f}$ as follows:

\begin{itemize}
\item A type I core of $\vec{f}$ is a component $C$ of $\vec{f}$ such that $\trend_C > 0$ and $\meanp_C\geq f^*$.
\item A type II core of $\vec{f}$ is a pair $C_1$, $C_2$ of its components such that $\trend_{C_1}\geq 0$, $\trend_{C_2}\leq 0$, $\freq{C_1}\cdot \trend_{C_1} + \freq{C_2}\cdot \trend_{C_2}\geq 0$ and $\freq{C_1}\cdot \meanp_{C_1} + \freq{C_2}\cdot \meanp_{C_2}\geq f^* $.
\end{itemize}

\noindent
The following lemma is easy.

\begin{lemma}
\label{lem:solution-core}
Each optimal solution $\vec{f}$ of $\calL_{\calE}$ has a type I or a type II core. Moreover, a core of $\vec{f}$ (of some type) can be found in polynomial time.
\end{lemma}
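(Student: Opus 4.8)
The plan is to exploit the flow structure of $\vec{f}$ so as to reduce the global constraints of $\calL_{\calE}$ to a statement about the finitely many triples $(\freq{C},\trend_C,\meanp_C)$ attached to the components of $\vec{f}$, and then to extract a core by an elementary averaging argument in the plane. First I would establish the \emph{decomposition identities}. The flow-conservation constraints of $\calL_{\calE}$ (the second and third lines) imply that the flow entering each state equals the flow leaving it; summing this balance over any union of components shows that no positive flow can cross between two distinct components of $G_{\vec{f}}$. Hence the positive-flow transitions are partitioned as $\bigcup_C \trans_C$, and the remaining constraint and the objective of $\calL_{\calE}$ translate into
\begin{align*}
\sum_C \freq{C} &= 1, &
\sum_C \freq{C}\cdot\trend_C &= \sum_{e\in\trans} f_e\cdot\energy(e)\;\geq\;0, &
\sum_C \freq{C}\cdot\meanp_C &= f^*,
\end{align*}
where the middle inequality is the trend constraint and the last equality records that the objective value of $\vec{f}$ is $f^*$. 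Components of zero frequency are isolated vertices and may be discarded.

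Next I would read these identities as a statement about the points $(\trend_C,\meanp_C)\in\Rset^2$ weighted by $\freq{C}$: their weighted centroid has first coordinate $\geq 0$ and second coordinate equal to $f^*$. The case analysis then runs as follows. If some component $C$ satisfies $\trend_C>0$ and $\meanp_C\geq f^*$ we are done with a type~I core. Otherwise, since $\sum_C\freq{C}\meanp_C=f^*$ there is a component $C^*$ with $\meanp_C\geq f^*$, and by assumption it has $\trend_{C^*}\leq 0$; because the frequency-weighted trend is $\geq 0$, there is a component of opposite (non-negative) trend to balance against it. I would then take $C_1$ with $\trend_{C_1}\geq 0$ and $C_2$ with $\trend_{C_2}\leq 0$ and check that the frequency-weighted pair meets both the trend bound $\freq{C_1}\trend_{C_1}+\freq{C_2}\trend_{C_2}\geq 0$ and the payoff bound $\freq{C_1}\meanp_{C_1}+\freq{C_2}\meanp_{C_2}\geq f^*$, yielding a type~II core. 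This is the planar incarnation of the fact that a convex combination landing in a quadrant-shaped target region can be realized with at most two of the extreme points, once their trends are of opposite sign.

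For the complexity claim I would note that an optimal (vertex) solution $\vec{f}$ is computable in polynomial time by linear programming, that $G_{\vec{f}}$ and its components are obtained by a standard strongly-connected-component decomposition, and that the quantities $\freq{C},\trend_C,\meanp_C$ are computed by summation over $\trans_C$. The search for a type~I core merely scans the components, and the search for a type~II core scans the $O(k^2)$ pairs of components, so the whole procedure runs in polynomial time.

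I expect the main obstacle to be the type~II step. The averaging argument must guarantee that a single opposite-trend pair, weighted appropriately, simultaneously meets both the trend and the payoff bounds, rather than requiring three components as a naive application of Carath\'eodory's theorem would suggest. Making the signs and the weighting line up is the delicate point, and it is exactly where the sign restriction $\trend_{C_1}\geq 0$, $\trend_{C_2}\leq 0$ in the definition of a type~II core is used; the degenerate subcase in which all surviving components have trend $0$ (so the two witnesses must be balanced purely through their mean payoffs) is where I would take the greatest care.
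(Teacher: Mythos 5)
Your setup is right and matches what the paper must intend (the paper itself offers no argument beyond ``the following lemma is easy,'' so there is no written proof to compare against): flow conservation confines the positive-flow edges to the strongly connected components of $G_{\vec{f}}$, giving $\sum_C \freq{C}=1$, $\sum_C \freq{C}\,\trend_C\geq 0$ and $\sum_C \freq{C}\,\meanp_C=f^*$, and the problem reduces to extracting one or two components from this weighted point set. The gap is exactly the step you flagged but did not resolve: the final ``check'' in the type~II case fails. With the weights fixed to the \emph{raw} frequencies $\freq{C_1},\freq{C_2}$, as the definition of a type~II core demands, a single opposite-trend pair need not meet the payoff bound once there are three or more components. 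Concretely, take components $A,B,C$ with $\freq{A}=\freq{B}=\freq{C}=\tfrac13$, $\trend_A=2$, $\trend_B=\trend_C=-1$, $\meanp_A=0$, $\meanp_B=\meanp_C=\tfrac32$ (realizable by three self-loops joined into a cycle by zero-flow transitions, and optimal for $\calL_{\calE}$ there, with $f^*=1$). No component has positive trend and payoff $\geq f^*$, so there is no type~I core; the only admissible $C_1$ is $A$, and for either choice of $C_2$ one gets $\freq{C_1}\meanp_{C_1}+\freq{C_2}\meanp_{C_2}=\tfrac12<1=f^*$. So the averaging argument, as you run it, does not produce a type~II core --- and this example in fact shows the lemma is problematic as literally stated for non-basic optimal solutions, which your proof would need to either exclude or handle.

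The repair is an exchange argument rather than a direct pairing. Fix the components of $\vec{f}$ and consider the auxiliary LP that reallocates mass among them: maximize $\sum_C y_C\,\meanp_C$ subject to $\sum_C y_C=1$, $\sum_C y_C\,\trend_C\geq 0$, $y\geq 0$ (feasible via $y_C=\freq{C}$, hence with optimum $\geq f^*$). A basic optimal solution has at most two nonzero coordinates: if the trend constraint is slack it is supported on a single $C$ with $\trend_C>0$ and $\meanp_C\geq f^*$ (type~I); if it is tight the two supporting components have trends of opposite sign (or both zero) and their \emph{convex} combination meets both bounds (type~II, with the combination weights coming from the basic solution rather than from $\freq{C_1},\freq{C_2}$). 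This is the two-point Carath\'eodory step you allude to, but it requires replacing the raw frequencies by the recomputed weights; without that replacement (or without first passing to a support-minimal optimal $\vec{f}$) the statement you are trying to verify does not hold. The polynomial-time part of your argument is fine and survives this modification, since the auxiliary LP and its basic optimal solution are also computable in polynomial time.
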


\subsection{Strongly Connected and Pumpable EMDPs}
\label{sec-SPEMDPs}

In this subsection, we continue our analysis under the assumption that the considered EMPD~$\calE$ is not only strongly connected but also pumpable. Let $\vec{f}$ be an optimal solution to $\calL_{\calE}$ with optimal value $f^*$. We show how to use $\vec{f}$ and its core to construct a strategy optimal in every configuration $s(n)$ of $\calE$. To some degree, the construction depends on the type of the core we use.

We start with the easier case when we compute a type~I core $C$ of $\vec{f}$. Consider two memoryless strategies: First, a memoryless deterministic globally pumping strategy $\pi$ which is guaranteed to exist by Lemma~\ref{lem:minpump}. Second, we define a memoryless randomized strategy $\mu_C$ such that $\mu_C(s)(e)= f_{e}/\freq{C}$ for all $s\in C$ and $e \in \out(s)$, and $\mu_C(s)(e)=\kappa(s)(e)$ for all  $s\not\in C$ and $e \in \out(s)$, where $\kappa$ is a memoryless deterministic strategy in $\calE$ ensuring that a state of $T$ is reached with probability~$1$ (such a strategy exists as $\calE$ is strongly connected).
In order to combine these two strategies, we define a function $\test{n}$ which assigns to a finite path $w$ a value 1 if and only if there is $0\leq j \leq \len{w}$ such that $\enlev{n}{w}{j}\leq L:=\maxE+\max_{s\in \states} \minpump(s)$ and $\enlev{n}{w}{i}\leq H:=L+|\states|+2|\states|^2\cdot\maxE$ for all $j\leq i \leq \len{w}$; otherwise, $\test{n}(w)=0$.
We then define a strategy $\sigma^*_n$ as follows:

\[
\sigma^*_n(w)(e) = \begin{cases}
\mu_C(\last(w))(e) & \text{if $\test{n}(w)=0$}\\
\pi(\last(w))(e) & \text{if $\test{n}(w)=1$}.
\end{cases}
\]

\begin{proposition}
\label{prop:typeA-optimality}
Let $s(n)$ be a configuration of $\calE$. Then $\sigma^*_n$ is optimal in $s(n)$.
\end{proposition}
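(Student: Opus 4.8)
The plan is to derive optimality from two claims: that $\sigma^*_n$ is safe in $s(n)$, and that $\E^{\sigma^*_n}_s[\MP]\geq f^*$. Indeed, by Lemma~\ref{lem:strategy-to-lpsol} every strategy safe in $s(n)$ has expected mean payoff at most the optimal value $f^*$ of $\calL_{\calE}$, so $\val(s(n))\leq f^*$; combining this with the two claims gives $f^*\leq \E^{\sigma^*_n}_s[\MP]\leq \val(s(n))\leq f^*$, whence $\sigma^*_n$ attains the value and $\val(s(n))=f^*$. Throughout we may assume $s(n)$ is safe, since otherwise $\val(s(n))=-\infty$ and there is nothing to prove.

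For safety I would argue by a threshold invariant. Note first that $\test{n}(w)=0$ forces the current energy level to be strictly above $L$, since otherwise $j=\len{w}$ would witness $\test{n}(w)=1$. Hence every transition played by $\mu_C$ starts from energy $>L$ and lands at energy $>L-\maxE=\max_{s\in\states}\minpump(s)\geq 0$; in particular, at the step where the counter first falls to $\leq L$ and the strategy switches to $\pi$, the reached configuration $t(m)$ has $m>\max_{s\in\states}\minpump(s)\geq\minpump(t)$ and is therefore pumpable. Since $\calE$ is pumpable, the safe initial configuration $s(n)$ is itself pumpable, so at $s(n)$ and at each switch into the $\pi$-mode the current configuration is pumpable; as $\pi$ is globally pumping (Lemma~\ref{lem:minpump}), it keeps the energy non-negative for the whole $\pi$-phase. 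The reverse switch occurs only once the counter exceeds $H$, far above $\maxE$, so the following $\mu_C$-step cannot cause a drop below zero either. Thus $\enlev{n}{\omega}{i}\geq 0$ for every $i$ on every compatible run $\omega$.

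For the mean-payoff bound I would exploit that, since $C$ is a type~I core, $\mu_C$ has strictly positive trend $\trend_C>0$, which makes $\pi$-phases terminate after finitely many rounds. Inside a $\mu_C$-phase the energy has bounded increments and positive drift, so taking the exponential supermartingale $\exp(-\theta\cdot\enlev{n}{}{i})$ for a suitable $\theta>0$ with $\E[\exp(-\theta\Delta)]\leq 1$ (where $\Delta$ is a one-step increment in $C$) and applying optional stopping shows that a $\mu_C$-phase begun above $H$ returns to $\leq L$ with probability at most $e^{-\theta(H-L)}<1$. The gap $H-L=|\states|+2|\states|^2\maxE$ is taken large enough to absorb also the transient in which $\mu_C$ uses $\kappa$ to re-enter $C$ from the arbitrary state in which the preceding $\pi$-phase stopped. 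By the strong Markov property the number of $\pi$-phases is then dominated by a geometric variable, hence finite almost surely, and each $\pi$-phase is a.s.\ finite because $\pi$ raises the counter above $H$ with probability one. Consequently almost every run eventually remains inside $C$ forever under $\mu_C$; as $\trans_C$ is strongly connected and the restriction of $\vec{f}$ to $C$ is a stationary circulation (flow is conserved at every state, including stochastic ones), this induces an irreducible finite Markov chain with edge frequencies proportional to the $f_e$. The ergodic theorem then yields $\MP=\meanp_C$ almost surely, and since rewards are bounded, $\E^{\sigma^*_n}_s[\MP]=\meanp_C\geq f^*$ by the definition of a type~I core.

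The main obstacle is making the drift argument of the previous paragraph rigorous under the actual dynamics: the exponential-supermartingale estimate must survive the facts that each $\pi$-phase ends in an uncontrolled state, that $\kappa$ must first steer the run back into $C$ before the positive drift is available, and that the bound on the number of phases must hold uniformly in the starting state. This is exactly what dictates the specific threshold gap $H-L$ and is where the martingale techniques of~\cite{BKK:oc-jacm} are needed. The remaining ingredients---the linear-programming upper bound, the ergodicity of $\mu_C$ on $C$, and the safety invariant---are comparatively routine.
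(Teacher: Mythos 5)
Your proposal is correct in its overall architecture and that architecture coincides with the paper's: establish safety via the invariant that $\test{n}(w)=0$ forces the current level above $L$ so that every switch to $\pi$ happens in a pumpable configuration; show that almost surely only finitely many pumping phases occur, so the run eventually follows $\mu_C$ forever; conclude $\E^{\sigma^*_n}_s[\MP]=\meanp_C\geq f^*$ by ergodicity of $\mu_C$ on $C$; and cap the value by $f^*$ via Lemma~\ref{lem:strategy-to-lpsol}. Where you diverge is in the proof of the one quantitative ingredient, namely that a $\mu_C$-phase started at level ${\geq}H$ avoids ever dropping to ${\leq}L$ with probability bounded away from zero. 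The paper deliberately avoids martingale estimates here: it invokes an external qualitative result (that the escape probability is positive from \emph{some} finite level $H'$) and then pins the threshold down to $L+|\states|$ by a purely structural fixed-point argument on the monotone decreasing sets $\mathcal{Z}_i$ of states from which $\mu_C$ surely drops to $L$ when started at level $i$ (after first reducing to unit updates by subdividing transitions, which accounts for the $2|\states|^2\maxE$ term in $H$); the $C\subsetneq\states$ case is then handled by a short positive-probability path into $C$. Your exponential-supermartingale route would yield the same conclusion with explicit constants, but note that as literally stated it does not work: since the one-step increment distribution depends on the current state and individual states of $C$ may have negative conditional drift, there is in general no $\theta>0$ making $\exp(-\theta\cdot\enlev{n}{}{i})$ a supermartingale; one must first pass to the state-corrected process $\enlev{n}{}{i}+z(\cdot)-i\cdot\trend_C$ of~\cite{BKK:oc-jacm} (or tilt with the corresponding eigenfunction) before applying optional stopping or Azuma plus a union bound over time. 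You flag this yourself and point to the right reference, so I regard it as a repairable technical debt rather than a gap; interestingly, the paper only deploys that martingale machinery for the type~II core (Proposition~\ref{prop:type-II-optimality}), keeping the type~I proof elementary, whereas your version would make the two cases more uniform at the cost of heavier analysis here.
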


Let us summarize the intuition behind the proof of Proposition~\ref{prop:typeA-optimality}. If the counter value is sufficiently high, we play the strategy $\mu$ prescribed by $\calL_{\calE}$ (i.e., we strive to achieve the mean payoff value $f^*$) until the counter becomes ``dangerously low'', in which case we switch to a pumping strategy that increases the counter to a sufficiently high value, where we again switch to $\mu$. The positive counter trend achieved by $\mu$ ensures that if we start with a sufficiently high counter value, the probability of the counter \emph{never} decreasing to dangerous levels is bounded away from zero. Moreover, once we switch to the pumping strategy $\pi$, with probability 1 we again pump the counter above $|\states|\cdot H$ and thus switch back to $\mu$. Hence, with probability 1 we eventually switch to strategy $\mu$ and use this strategy forever, and thus achieve mean payoff $f^*$. 

Let us now consider the case where we compute a type~II core of $\vec{f}$.  The overall idea is similar as in the type I case. We try to execute a strategy that has non-negative counter trend and achieves the value $f^*$ computed by $\calL_{\calE}$. This amounts to periodical switching between components $C_1$ and $C_2$, in such a way that the ratio of time spent in $C_i$ tends to $\freq{C_i}$. As in~\cite{BBCFK14}, this is done by fixing a large number $N$ and fragmenting the play into infinitely many iterations: in the \mbox{$k$-th} iteration, we spend roughly $k\cdot N\cdot \freq{C_1}$ steps in $C_1$, then move to $C_2$ and spent $k\cdot N\cdot \freq{C_2}$ steps in $C_2$, then move back to $C_1$ and initialize the \mbox{$(k{+}1)$-th} iteration. Inside the component $C_i$ we use the strategy $\mu_{C_i}$ defined above, until it either is time to switch to $C_{3-i}$ or the counter becomes dangerously low. If the latter event happens, we immediately end the current iteration, switch to a pumping strategy, wait until a counter increases to a sufficient height, and then begin the \mbox{$(k{+}1)$-th} iteration. However, as the trend of $\mu_{C_2}$ is negative, the energy level tends to return to the value to which we increase the level during the pumping phase: it is thus no longer possible to prove, that we eventually stop hitting dangerously low levels. To overcome this problem, we use \emph{progressive pumping}: the height to which we want to increase the counter after the ``pumping mode'' is switched on in the \mbox{$k$-th} iteration must increase with $k$, and it must increase asymptotically faster than $\sqrt{k}$. If this technical requirement is satisfied, we can use martingale techniques to show that progressive pumping decreases, with each iteration, the probability of drops towards dangerous levels. However, it also lengthens the time spent on pumping once such a period is initiated. To ensure that the fraction of time spent on pumping still tends to zero, we have to ensure that the threshold to which we pump increases \emph{sublinearly} in $k$. In our proof we set the bound to roughly $k^{\frac{3}{4}}$ in order to satisfy both of the aforementioned constraints. More details in the appendix.

\begin{proposition}
\label{prop:type-II-optimality}
Each type II core of $\vec{f}$ yields a strategy optimal in $s(n)$. 
\end{proposition}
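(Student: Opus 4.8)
The plan is to construct, for each configuration $s(n)$, a strategy $\sigma$ that realizes the periodic switching between $C_1$ and $C_2$ sketched above, and to prove two things: that $\sigma$ is safe in $s(n)$, and that $\E^\sigma_s[\MP] = f^*$ (optimality then follows, since $f^*$ is an upper bound on the value of any safe configuration by Lemma~\ref{lem:strategy-to-lpsol}). First I would fix the scheduling parameters precisely: a large constant $N$ and, for the $k$-th iteration, target durations $\lceil k N \freq{C_1}\rceil$ in $C_1$ and $\lceil k N \freq{C_2}\rceil$ in $C_2$, together with a pumping target height $h_k \sim k^{3/4}$. I would implement the strategy as a finite collection of ``modes'' (in $C_1$, in $C_2$, pumping), using $\mu_{C_1}$, $\mu_{C_2}$, the routing strategy $\kappa$, and the globally pumping strategy $\pi$ of Lemma~\ref{lem:minpump} as building blocks, with memory tracking the current iteration index $k$, the current mode, the step counter inside the mode, and the pumping target. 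The test $\test{n}$ (suitably adapted) triggers the switch to pumping mode whenever the counter drops to a dangerous level.

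Next I would establish \emph{safety}. Whenever the counter reaches the danger threshold we switch to $\pi$, which keeps the counter nonnegative (it is safe by Lemma~\ref{lem:minpump}) and, with probability~1, eventually raises it above the current target $h_k$; so on every run the counter stays nonnegative provided the danger threshold is set above $\maxE + \max_s \minpump(s)$, exactly as in the type~I case. The more delicate point is controlling the \emph{frequency} of pumping episodes so that they do not destroy the mean payoff. Here I would use the martingale machinery of~\cite{BKK:oc-jacm}: inside $C_2$ the energy level, shifted by $\trend_{C_2}\cdot(\text{step count})$, is a martingale (up to bounded increments), and an Azuma--Hoeffding/optional-stopping estimate shows that once the counter has been pumped to height $h_k$, the probability that it drops back to the danger level during the $O(kN)$ steps of the $C_2$-phase decays as $k$ grows, precisely because $h_k \sim k^{3/4}$ dominates the typical $\sqrt{kN}$ fluctuation of the martingale. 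Summability of these drop probabilities (via Borel--Cantelli) then gives that, almost surely, pumping is triggered only finitely often.

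For the \emph{mean-payoff} computation I would argue that almost every run eventually executes the plain $C_1$/$C_2$ alternation without further pumping interruptions. Over the first $K$ iterations, the total number of steps is $\Theta(K^2 N)$, the fraction of steps spent in $C_i$ tends to $\freq{C_i}/(\freq{C_1}+\freq{C_2})$, and the reward accumulated in $C_i$ per step converges to $\meanp_{C_i}$ by the ergodic theorem applied to the finite Markov chain induced by $\mu_{C_i}$ on $C_i$. Crucially, the total pumping time must be shown to be $o(K^2 N)$: each pumping episode lasts $O(h_k)=O(k^{3/4})$ expected steps, and since episodes occur only finitely often (and even summing $\sum_k k^{3/4}$ naively against the probability of occurring is negligible against $K^2$), the pumping contribution to both the step count and the payoff vanishes in the long-run average. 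Combining these, $\MP$ equals
\[
\frac{\freq{C_1}\meanp_{C_1} + \freq{C_2}\meanp_{C_2}}{\freq{C_1}+\freq{C_2}} \;\geq\; \frac{f^*}{\freq{C_1}+\freq{C_2}}
\]
almost surely; after normalizing the frequencies so that $\freq{C_1}+\freq{C_2}$ is absorbed into the iteration lengths (or by noting the LP normalization $\sum_e f_e = 1$ together with the contribution of states outside $C_1 \cup C_2$ being negligible), this yields $\E^\sigma_s[\MP]\geq f^*$, hence optimality.

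The main obstacle I expect is the tight coupling between the two competing requirements on $h_k$: it must grow fast enough (super-$\sqrt{k}$) that the martingale drop probabilities are summable and pumping ceases almost surely, yet slowly enough (sub-linearly, $o(k)$) that the cumulative pumping time stays asymptotically negligible against the $\Theta(k N)$ productive steps of iteration $k$. Verifying that the exponent $3/4$ simultaneously satisfies both constraints, and making the Azuma-type concentration estimate uniform in $k$ while the increments carry the negative drift $\trend_{C_2}$, is the technically demanding core of the argument; the safety and ergodic-averaging parts are comparatively routine adaptations of the type~I construction and standard finite-MDP theory.
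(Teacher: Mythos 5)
Your construction and overall architecture coincide with the paper's: the same linearly growing iteration lengths, the same pumping target of order $k^{3/4}$ chosen to sit strictly between the $\sqrt{k}$ fluctuation scale and the linear iteration length, the same Azuma-plus-Borel--Cantelli argument that dangerous drops occur in only finitely many iterations, and the same averaging computation showing that the pumping and routing overhead is $o(K^2)$ and hence does not dilute the mean payoff. (The paper packages the last step slightly differently: it first proves a general mixing lemma for the idealized, unsafe alternating strategy and then shows the safe modification almost surely coincides with it from some iteration on; this is the same argument in a different order.)

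There is, however, one genuine gap in your safety analysis. You bound the probability of a drop to the danger level during the $C_2$ phase by comparing the pumped height $h_k\sim k^{3/4}$ against the $O(\sqrt{kN})$ fluctuation of the martingale $\enlev{n}{}{j}-j\cdot\trend_{C_2}$. But the energy level is this martingale \emph{plus} the drift term $j\cdot\trend_{C_2}$ with $\trend_{C_2}\leq 0$ (and in general strictly negative), and the $C_2$ phase lasts $\freq{C_2}\cdot N\cdot k$ steps, so the systematic decrease is $\Theta(k)$, which swamps $h_k\sim k^{3/4}$. Starting the $C_2$ phase at height $\mathit{TH}+h_k$, the counter would cross the danger threshold with probability tending to one, not zero; this is not a matter of making the Azuma estimate uniform in $k$, as you suggest at the end --- the estimate is false as stated. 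What rescues the construction is that the $C_1$ phase is executed \emph{first} and, except with probability $d^{\sqrt{k}}$, raises the counter by roughly $\freq{C_1}\cdot N\cdot k\cdot\trend_{C_1}$; the defining inequality of a type II core, $\freq{C_1}\trend_{C_1}+\freq{C_2}\trend_{C_2}\geq 0$, then guarantees that this linear gain covers the linear loss in $C_2$, so the $k^{3/4}$ headroom only has to absorb the $O(\sqrt{kN})$ martingale fluctuations of the combined excursion. This is precisely what the paper's failure events $F_2$ (insufficient gain in $C_1$) and $F_5$ (excessive loss in $C_2$) encode, each bounded via Azuma applied to the drift-corrected martingale, and it is the only place in the safety proof where the type II core condition is actually used; your sketch never invokes it.
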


\subsection{General EMDPs}
\label{sec-thm2}

In this section we prove Theorem~\ref{thm:onedim-general}. The two counterexamples required to prove part~(1.) of the theorem are given in Fig.~\ref{fig-not-optimal}. On the left, there is a strongly connected but not pumpable EMDP (note that $t(0)$ is safe but not pumpable) where $\val(s(0)) = 5$, but there is no optimal strategy, and \emph{every} strategy achieving a positive mean-payoff requires infinite memory (hence, this example also  demonstrates that $\varepsilon$-optimal strategies may require infinite memory, as stated in part~(4) of Theorem~\ref{thm:onedim-general}). This is because the counter must be pumped to \emph{linearly} larger and larger values when revisiting $s$ to avoid reaching the configuration $t(0)$ with probability one (note that the probability of visiting $t(0)$ from $t(N)$ when using the transition $(t,u)$ decays \emph{exponentially} in $N$), yet ensuring that the mean payoff is equal to~$5$. Also note that if the counter was pumped to \emph{exponentially} larger and larger values when revisiting $s$, the defining $\liminf$ of mean payoff would be zero.   On the right, there is pumpable but not strongly connected EMDP where $\val(a(0)) = 5$, but no optimal strategy exists in $a(0)$. 

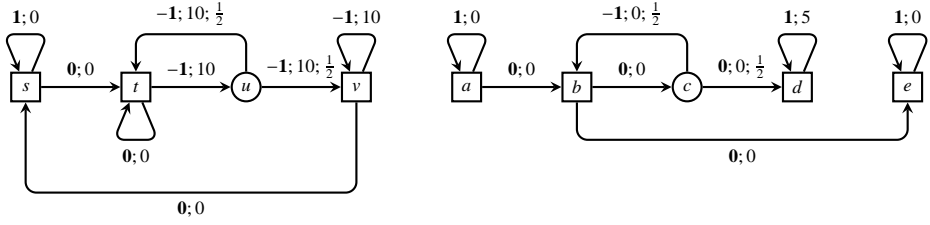
\begin{figure*}[t]
	\begin{center}
		\begin{tikzpicture}[x=1.45cm,y=1.4cm,font=\scriptsize]
		\node (A1) at (0,0) [max] {$s$};
		\node (A2) at (1,0) [max] {$t$};
		\node (A3) at (2,0) [ran] {$u$};
		\node (A4) at (3,0) [max] {$v$};
		\draw (A1) [tran, rounded corners] -- +(.2,.5) -- node[above] {$\cu{1};0$} +(-.2,.5) -- (A1);
		\draw (A1) [tran, rounded corners] -- node[above] {$\cu{0};0$} (A2);
		\draw (A2) [tran, rounded corners] -- node[above] {$\cu{-1};10$} (A3);
		\draw (A2) [tran, rounded corners] -- +(.2,-.5) -- node[below] {$\cu{0};0$} +(-.2,-.5) -- (A2);
		\draw (A3) [tran, rounded corners] -- node[above] {$\cu{-1};10; \frac{1}{2}$} (A4);
		\draw (A3) [tran, rounded corners] -- +(0,.5) -- node[above] {$\cu{-1};10; \frac{1}{2}$}  +(-1,.5) -- (A2);
		\draw (A4) [tran, rounded corners] -- +(0,-1) -- node[below] {$\cu{0};0$}  +(-3,-1) -- (A1);
		\draw (A4) [tran, rounded corners] -- +(.2,.5) -- node[above] {$\cu{-1};10$} +(-.2,.5) -- (A4);
		\node (B1) at (4,0) [max] {$a$};
		\node (B2) at (5,0) [max] {$b$};
		\node (B3) at (6,0) [ran] {$c$};
		\node (B4) at (7,0) [max] {$d$};
		\node (B5) at (8,0) [max] {$e$};
		\draw (B1) [tran, rounded corners] -- +(.2,.5) -- node[above] {$\cu{1};0$} +(-.2,.5) -- (B1);
		\draw (B1) [tran, rounded corners] -- node[above] {$\cu{0};0$} (B2);
		\draw (B2) [tran, rounded corners] -- node[above] {$\cu{0};0$} (B3);
		\draw (B3) [tran, rounded corners] -- node[above] {$\cu{0};0; \frac{1}{2}$} (B4);
		\draw (B3) [tran, rounded corners] -- +(0,.5) -- node[above] {$\cu{-1};0; \frac{1}{2}$}  +(-1,.5) -- (B2);
		\draw (B2) [tran, rounded corners] -- +(0,-.5) -- node[below] {$\cu{0};0$}  +(3,-.5) -- (B5);
		\draw (B4) [tran, rounded corners] -- +(.2,.5) -- node[above] {$\cu{1};5$} +(-.2,.5) -- (B4);
		\draw (B5) [tran, rounded corners] -- +(.2,.5) -- node[above] {$\cu{1};0$} +(-.2,.5) -- (B5);
		\end{tikzpicture}
	\end{center}
	\caption{Examples of EMDPs where optimal strategies do not exist in some configurations. Each transition is labeled by the associated counter update (in boldface), reward, and probability (only for the stochastic states $u$ and $c$).}
	\label{fig-not-optimal}
\end{figure*}

For the rest of this section, we fix an EMDP $\calE=(\calM,\energy)$. For simplicity, we assume that \emph{for every $s \in S$ there is some $n \in \Nset$ such that the configuration $s(n)$ is safe}. The other control states can be easily recognized and eliminated (see Lemma~\ref{lem:minsafe-onedim}). 

Since $\calE$ is not necessarily strongly connected, we start by identifying and constructing the MECs of $\calE$ (this can be achieved in time polynomial in $\size{\calE}$). Recall that each MEC of $\calE$ can be seen as an EMDP, and each run eventually stays in some MEC~\cite{Alfaro:thesis}. Hence, we start by analyzing the individual MECs separately. Technically, we first assume that $\calE$ is strongly connected.

\paragraph{The case when $\calE$ is strongly connected.}

Consider a linear program $\calT_{\calE}$ which is the same as the program $\calL_{\calE}$ of Fig.~\ref{fig:oc-lp} except for its objective function which is set to \mbox{\textbf{maximize} $\sum_{t \in \trans} f_t \cdot \energy(t)$}. In other words, $\calT_{\calE}$ tries to maximize the long-run average change of the energy level under the constraints given in $\calL_{\calE}$. Let $\vec{g}=(g_e)_{e\in\trans}$ be an optimal solution of $\calT_{\calE}$, and let $g^*$  be the corresponding optimal value of the objective function. Now we distinguish two cases, which require completely different proof techniques.
\smallskip

\noindent
\textbf{Case~A.} $g^* > 0$.\\
\textbf{Case~B.} $g^* = 0$.
\smallskip

We start with \textbf{Case~A}. 
Note that if $g^* > 0$, then there exists a component $D$ of $\vec{g}$ such that $\trend_D \geq g^* > 0$. We proceed by solving the linear program $\calL_{\calE}$ of Fig.~\ref{fig:oc-lp}, and identifying the core of an optimal solution $\vec{f}$ of $\calL_{\calE}$. 
Recall that $\vec{f}$ can have either a type~I core $C$, or a type~II core $C_1,C_2$. In the first case, we set $E_1 := C$ and $E_2 := C$, and in the latter case we set $E_1 := C_1$ and $E_2 := C_2$. Let us fix some $\varepsilon > 0$. We compute positive rationals $\alpha_1,\alpha_2$ such 
\begin{itemize}
	\item $\alpha_1 + \alpha_2 = 1$
	\item $\alpha_1 \cdot \meanp_{E_1} + \alpha_2 \cdot \meanp_{E_2} \geq f^* -\varepsilon/2$
	\item $\alpha_1 \cdot \trend_{E_1} + \alpha_2 \cdot \trend_{E_2} > 0$.
\end{itemize}
Observe that we can compute $\alpha_1,\alpha_2$ so that the length of the binary encoding of all of the above numbers is polynomial in $\size{\calE}$ and $\size{\varepsilon}$. Now we construct a strategy which 
is safe and $\varepsilon$-optimal in every configuration with a sufficiently high counter value. Intuitively, we again just combine the two memoryless randomized strategies extracted from  $\vec{f}$ (and possibly $\vec{g}$) in the ratio given by $\alpha_1$ and $\alpha_2$. Since the counter now has a tendency to increase under such a strategy, the probability of visiting a ``dangerously low'' counter value can be made arbitrarily small by starting sufficiently high (exponential height is sufficient for the probability to be smaller than $\eps$). Hence, when such a dangerous situation occurs, we can permanently switch to \emph{any} safe strategy (this is where our approach bears resemblance to~\cite{Clemente:2015:MBW:2876514.2876539}). For the finitely many configurations where the counter height is not ``sufficiently large,'' the $\varepsilon$-optimal strategy can be computed by encoding these configurations into a finite MDP and optimizing mean-payoff in this MDP using standard methods. %
 
\begin{figure}[t]
	\begin{center}
		\begin{tikzpicture}[x=3cm,y=1.2cm,font=\footnotesize]
		\node (A1) at (0,0) [max] {$s$};
		\node (A2) at (1,0) [ran] {$t$};
		\draw (A1) [tran, rounded corners] -- +(-.3,.2) -- node[left] {$\cu{0};0$} +(-.3,-.2) -- (A1);
		\draw (A1) [tran, rounded corners] -- node[above] {$\cu{0};0$} (A2);
		\draw (A2) [tran, rounded corners] -- +(0,.7) -- node[above] {$\cu{-1};10;\frac{1}{2}$}  +(-1,.7) -- (A1);
		\draw (A2) [tran, rounded corners] -- +(0,-.7) -- node[below] {$\cu{1};10;\frac{1}{2}$}  +(-1,-.7) -- (A1);
		\end{tikzpicture}
	\end{center}
	\caption{An EMDP where the solution of $\calL_{\calE}$ is irrelevant.}
	\label{fig-no-program}
\end{figure}

Now consider \textbf{Case~B}. If $g^* = 0$, the solution of $\calL_{\calE}$ is irrelevant, and we need to proceed in a completely different way. To illustrate this, consider the simple EMDP of Fig.~\ref{fig-no-program}. Here, the optimal solution $\vec{f}$ of $\calL_{\calE}$ produces $f^* = 5$ and assigns $1$ to the transition  $(s,t)$. Clearly, we have that $\val(s(n)) = 0$ for an arbitrarily large~$n$, so we cannot aim at approaching $f^*$. Instead, we show that if $g^* = 0$, then almost all runs produced by a safe strategy are \emph{stable} in the following sense. We say that $s \in S$ is \emph{stable at~$k \in \Zset$} in a run $\omega = s_0 s_1 \cdots$ if there exists $i \in \Nset$ such that for every $j \geq i$ we have that $s_j = s$ implies $\enlev{0}{}{j} = k$. Further, we say that $s$ is \emph{stable} in $\omega$ if $s$ is stable at $k$ in $\omega$ for some~$k$. Note that the initial value of the counter does not influence the (in)stability of $s$ in $\omega$. Intuitively, $s$ is stable in $\omega$ if it is visited finitely often, or it is visited infinitely often but from some point on, the energy level is the same in each visit. We say that a \emph{run} is stable if each control state is stable in the run. 

The next proposition represents another key insight into the structure of EMDPs. The proof is non-trivial and can be found in Appendix~\ref{app-thm2}.
\begin{proposition}
\label{lem-stable}
	Suppose that $g^* = 0$, and let $\sigma$ be a strategy which is safe in $s(n)$. Then 
	\[
	  \Probm^{\sigma}_{s}(\{\omega \in \run(s) \mid \omega \mbox{ is stable }\}) = 1 \,.
	\] 
\end{proposition}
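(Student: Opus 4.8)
The plan is to build from the energy level a supermartingale that is bounded below, so that the martingale convergence theorem forces almost-sure convergence, and then to turn this convergence into almost-sure \emph{eventual constancy} by a discreteness argument; eventual constancy is, essentially by definition, the statement that every control state is stable.

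First I would extract a potential from the linear program. Since $\calE$ is strongly connected, the underlying MDP is communicating, so the maximal long-run average of $\energy$ — which is exactly the optimum $g^*$ of $\calT_{\calE}$ — is a single constant independent of the starting state, and the average-reward optimality equations admit a solution $(g^*,h)$ with a \emph{rational} bias $h\colon \states \to \Qset$ (equivalently, $h$ and $g^*$ come from the dual of $\calT_{\calE}$). Writing these equations with $g^*=0$ yields, for every $s\in\cstates$ and every $(s,t)\in\trans$, the inequality $\energy(s,t)+h(t)-h(s)\le 0$, and for every $s\in\stochstates$ the identity $\sum_{(s,t)\in\trans}\Prob(s,t)\,(\energy(s,t)+h(t)-h(s)) = 0$.

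Second, along a run $\omega=s_0s_1\cdots$ I would study the process $M_i := \enlev{n}{}{i}+h(s_i)$. The two facts above show that $M$ is a supermartingale for the natural filtration under $\Probm_s^{\sigma}$: at a stochastic state the conditional expected increment is exactly $0$, while at a controllable state \emph{every} outgoing transition gives a nonpositive increment, so any randomized, history-dependent choice made by $\sigma$ yields a nonpositive conditional increment. Because $\sigma$ is safe in $s(n)$, almost every run keeps $\enlev{n}{}{i}\ge 0$, whence $M_i\ge \min_{s}h(s)$ is bounded below; by the martingale convergence theorem $M_i$ converges almost surely to a finite limit.

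Third — and here the rationality of $h$ is essential — I would note that $M_i$ takes values in the locally finite set $\tfrac1D\Zset$, where $D$ is a common denominator of the finitely many numbers $h(s)$ (recall $\enlev{n}{}{i}\in\Zset$). A convergent sequence confined to such a set is eventually constant, so almost surely there are $i_0$ and $c$ with $M_i=c$ for all $i\ge i_0$; then $\enlev{n}{}{i}=c-h(s_i)$ for all $i\ge i_0$, i.e.\ whenever $s_i=s$ the energy level equals the fixed value $c-h(s)$. This says precisely that every control state $s$ is stable (at $c-h(s)$, or vacuously if it is visited only before $i_0$), and since shifting the initial counter does not affect stability, the run is stable almost surely. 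I expect the two structural ingredients to be the real obstacles rather than the probabilistic estimate: justifying the existence of a rational bias solving the optimality equation in the communicating, possibly multichain-looking setting, and recognizing that the discreteness of $\{k+h(s)\}$ is exactly what upgrades the a.s.\ \emph{convergence} of $M_i$ to the much stronger a.s.\ \emph{eventual constancy} that stability requires.
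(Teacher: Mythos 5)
Your proof is correct, and it takes a genuinely different route from the paper's. The paper argues by contradiction: it introduces \emph{drifting} runs (runs whose energy level eventually exceeds every bound), invokes an external result stating that a safe strategy producing drifting runs with positive probability forces a strictly positive solution of $\calT_{\calE}$, and then shows --- via a rather intricate analysis of the sets $A$ of states stabilizing at prescribed levels and $B$ of unstable states, their associated values $V_{[A_f,B]}$, and a two-case construction of a restarting strategy --- that instability with positive probability would produce exactly such a drifting strategy, contradicting $g^*=0$. You replace all of this with a single potential argument: the bias $h$ from the average-reward optimality equation makes $\enlev{n}{}{i}+h(s_i)$ a supermartingale under \emph{every} strategy, safety bounds it below, Doob gives almost-sure convergence, and discreteness of the value set $\Zset+\{h(s):s\in\states\}$ upgrades convergence to eventual constancy, which is precisely stability. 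Your route is shorter, needs no case analysis, works uniformly for arbitrary history-dependent randomized safe strategies, and yields the extra quantitative fact that almost surely all states stabilize at levels $c-h(s)$ for a single random constant $c$ (so stable levels across states spread by at most $\max h-\min h$, which would simplify the finite-MDP construction the paper performs right after the proposition). The paper's route, in exchange, avoids having to justify the existence of the bias in the communicating but non-unichain setting, leaning instead on machinery already established in the one-counter-MDP literature. You correctly identify the one step that genuinely needs care: the inequality $\energy(s,t)+h(t)-h(s)\le 0$ must hold for \emph{all} transitions out of controllable states, not just the maximizing ones, and this is exactly where constancy of the optimal gain --- guaranteed here because a strongly connected EMDP is a communicating MDP --- is used; with that, a rational solution $(0,h)$ of the optimality equation exists by standard LP arguments, and the rest of your argument is airtight.
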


Due to Proposition~\ref{lem-stable}, we can analyze the configurations of $\calE$ in the following way. We construct a finite-state MDP where the states are the configurations of $\calE$ with a non-negative counter value bounded by $|S| \cdot \maxE$. Transition attempting to decrease the counter below zero or increase the counter above $|S| \cdot \maxE$ lead to a special sink state with a self-loop whose reward is strictly smaller than the minimal reward used in $\calE$. Then, we apply the standard polynomial-time algorithm for finite-state MDPs to compute the values in the constructed MDP, and identify a configuration $r(\ell)$ with the largest value. By applying  Proposition~\ref{lem-stable}, we obtain that $\val(t) = \val(r(\ell))$ for \emph{every} $t \in S$. For every $\varepsilon > 0$, we can easily compute a bound $N_\varepsilon \in \Nset$  polynomial in $\size{\calE}$, $M_{\calE}$, and $1/\varepsilon$, and a memoryless strategy $\varrho$ such that for every configuration $t(m)$ where $m \geq N_\varepsilon$ we have that the $\Probm^{\varrho}_{t}$ probability of all runs initiated in $t(m)$ that visit a configuration $r(k)$ for some $k \geq \ell$ without a prior visit to a configuration where the counter is ``dangerously low'' is at least $1 - (\varepsilon/R)$, where $R$ is the difference between the maximal and the minimal transition reward in $\calE$. Hence, a strategy which behaves like $\varrho$ and ``switches'' either to a strategy which mimics the optimal behaviour in $r(\ell)$ (when a configuration $r(k)$ for some $k \geq \ell$ is visited) or to some safe strategy (when a configuration with dangerously low counter is visited) is $\varepsilon$-optimal in every configuration $t(m)$ where $m \geq N_\varepsilon$. For configurations with smaller counter value, an $\varepsilon$-optimal startegy can be computed by transforming the configurations with a non-negative counter value bounded by $N_\varepsilon$ into a finite-state MDP and optimizing mean payoff in this finite-state MDP.

\paragraph{The case when $\calE$ is not strongly connected.}
We finish by considering the general case when $\calE$ is not strongly connected. Here, we again relay on standard methods for finite-state MDPs (see \cite{Puterman:book}). More precisely, we transform $\calE$ into a finite-state MDP $\M[\calE]$ in the following way. The states $\M[\calE]$ consist of those states of $\calE$ that do not appear in any MEC of $\calE$, and for each MEC $M$ of $\calE$ we further add a fresh controllable state $r_M$ to $\M[\calE]$. The transitions of $\M[\calE]$ are constructed as follows. For each $r_M$ we add a self-loop whose reward is the limit value of the states of the MEC~$M$ in $\calE$ (see the previous paragraph). Further, for every state $s$ of $\calE$, let $\hat{s}$ be either the state $s$ of $\M[\calE]$ or the state $r_M$ of $\M[\calE]$, depending on whether $s$ belongs to some MEC $M$ of $\calE$ or not, respectively. For every transition $(s,t)$ of $\calE$ where $s,t$ do \emph{not} belong to the same MEC, we add a transition $(\hat{s},\hat{t})$ to $\M[\calE]$. The rewards for all transitions, except for the self-loops on $r_M$, can be chosen arbitrarily.

Now we solve the standard mean-payoff optimization problem for $\M[\calE]$, which can be achieved in polynomial time by constructing a suitable linear program \cite{Puterman:book}. The program also computes a \emph{memoryless and deterministic} strategy $\sigma$ which achieves the optimal mean-payoff $\MP(s)$ in every state $s$  of $\M[\calE]$. Note that $\MP(r_M)$ is \emph{not} necessarily the same as the limit value of the states of $M$ computed by considering $M$ as a ``standalone EMDP'', because some other MEC with a better mean payoff can be reachable from~$M$. However, the strategy $\sigma$ eventually ``stays'' in some target $r_M$ almost surely, and the probability of executing a path of length $k$ before reaching a target $r_M$ decays exponentially in~$k$. Hence, for every $\delta > 0$, one can compute a bound $L_\delta$ such that the probability of reaching a target $r_M$ in at most  $L_\delta$ steps is at least $1 - \delta$. Moreover, $L_\delta$ is polynomial in $\size{\calE}$ and $1/\delta$. 

Now we show that $\MP(s) = \val(t)$ for every state $t$ of $\calE$ where $\hat{t} = s$. Further, we show that for every $\varepsilon \geq 0$, we can compute a sufficiently large $N_\varepsilon \in \Nset$ (still polynomial in $\size{\calE}$, $M_{\calE}$, and $1/\varepsilon$) and a strategy $\varrho$ such that for every initial configuration $t(m)$, where $m \geq N_\varepsilon$, we have that $\varrho$ is safe in $t(m)$ and $\E^{\varrho}_{t}[\MP] \geq \MP(s) - \varepsilon$, where $\hat{t} = s$. The strategy $\varrho$ ``mimics'' the strategy $\sigma$ and eventually switches to some other strategy (temporarily or forever) in the following way:
\begin{itemize}
	\item Whenever a configuration with a ``dangerously low'' counter value is encountered, $\varrho$ switches to a safe  strategy permanently. 
	\item In a controllable state $t$ of $\calM$ which does not belong to any MEC of $\calE$, $\varrho$ selects a transition $(t,u)$ such that $(t,\hat{u})$ is the transition selected by $\sigma$. In particular, if $\sigma$ selects a transition $(t,r_M)$, then $\varrho$ selects a transition leading from $t$ to some state of $M$.
	\item In a controllable state $t$ of a MEC $M$, $\varrho$ mimics $\sigma$ in the following sense. If $\sigma$ selects the transition $(r_M,r_M)$, then $\varrho$ permanently switches to the $\varepsilon/2$-optimal strategy for $M$ constructed in the previous paragraph. If $\sigma$ selects a different transition, then there must be a transition $(s,t)$ of $\calE$ where $s \in M$ such that $(r_M,\hat{t})$ is the transition selected by $\sigma$. Then $\varrho$ temporarily switches to a strategy which strives to reach the control state $s$. When $s$ is reached, $\varrho$ restarts mimicking $\sigma$. Note that for every \mbox{$\delta > 0$}, one can compute a bound $M_\delta$ polynomial in $\size{\calE}$ and $1/\delta$ such that the probability of reaching $s$ in at most $M_\delta$ steps is at least $1 - \delta$.
\end{itemize}
We choose $N_\varepsilon$ sufficiently large (with the help of the $L_\delta$ and $M_\delta$ introduced above) so that the probability of all runs initiated in $t(m)$, where $m \geq N_\varepsilon$, that reach a target MEC $M$ with a counter value above the threshold computed for $M$ and $\varepsilon/2$ by the methods of the previous paragraph, is at least $1 - \frac{\varepsilon}{2R}$, where $R$ is the difference between the maximal and the minimal transition reward in $\calE$. Hence, $\varrho$ is $\varepsilon$-optimal in every $t(m)$ where $m \geq N_\varepsilon$. For configuration with smaller initial counter value, we compute an $\varepsilon$-optimal
strategy as before.

Finally, let us note that Theorem~\ref{thm:onedim-general}~(5.) can be proven by reducing the following \emph{cost problem} which is known to be PSPACE-hard \cite{DBLP:conf/icalp/HaaseK15}: Given an acyclic MDP $\M=(\states,(\cstates,\stochstates),\trans{},\Prob,\rev)$, i.e., an~MDP whose graph does not contain an oriented cycle, a non-negative cost function $c$ (which assigns costs to transitions), an initial state $s_0$, a~target state $s_t$, a probability threshold $x$, and a bound $B$, decide whether there is a strategy which with probability at least $x$ visits $s_t$ in such a way that the total cost accumulated along the path is at most~$B$.
The reduction is straightforward and hence omitted.

\bibliographystyle{splncs03}
\bibliography{Disertace,disertace,tomas}

\begin{thebibliography}{10}
\providecommand{\url}[1]{\texttt{#1}}
\providecommand{\urlprefix}{URL }

\bibitem{ACMSS15}
Abdulla, P., Ciobanu, R., Mayr, R., Sangnier, A., Sproston, J.: Qualitative
  analysis of vass-induced mdps. CoRR  abs/1512.08824 (2015)

\bibitem{AMSS13}
Abdulla, P., Mayr, R., Sangnier, A., Sproston, J.: Solving parity games on
  integer vectors. In: Proceedings of {CONCUR}~2013. Lecture Notes in Computer
  Science, vol. 8052, pp. 106--120. Springer (2013)

\bibitem{Alfaro:thesis}
de~Alfaro, L.: {Formal verification of probabilistic systems}. Phd. thesis,
  Stanford University, Stanford, CA, USA (1998)

\bibitem{DBLP:conf/formats/BouyerFLMS08}
Bouyer, P., Fahrenberg, U., Larsen, K., Markey, N., Srba, J.: Infinite runs in
  weighted timed automata with energy constraints. In: Proceedings of {FORMATS}
  2008. LNCS, vol. 5215, pp. 33--47. Springer (2008)

\bibitem{DBLP:journals/corr/BouyerMRLL15}
Bouyer, P., Markey, N., Randour, M., Larsen, K.G., Laursen, S.: Average-energy
  games. In: Proceedings of GandALF 2015. pp. 1--15 (2015)

\bibitem{BBCFK14}
Br{\'{a}}zdil, T., Bro\v{z}ek, V., Chatterjee, K., Forejt, V., Ku\v{c}era, A.:
  Two views on multiple mean-payoff objectives in markov decision processes.
  Logical Methods in Computer Science  10(1) (2014)

\bibitem{DBLP:conf/soda/BrazdilBEKW10}
Br{\'{a}}zdil, T., Brozek, V., Etessami, K., Ku\v{c}era, A., Wojtczak, D.:
  One-counter markov decision processes. In: Proceedings of {SODA} 2010. pp.
  863--874. {SIAM} (2010)

\bibitem{BKK:oc-jacm}
Br\'{a}zdil, T., Kiefer, S., Ku\v{c}era, A.: Efficient analysis of
  probabilistic programs with an unbounded counter. J. ACM  61(6),  41:1--41:35
  (Dec 2014), \url{http://doi.acm.org/10.1145/2629599}

\bibitem{BKKNK:multicounter-zero-reachability}
Br\'{a}zdil, T., Kiefer, S., Ku\v{c}era, A., Novotn\'{y}, P., Katoen, J.P.:
  Zero-reachability in probabilistic multi-counter automata. In: CSL-LICS'14.
  pp. 22:1--22:10. ACM (2014), \url{http://doi.acm.org/10.1145/2603088.2603161}

\bibitem{Brenguier:2014:EMT:2562059.2562116}
Brenguier, R., Cassez, F., Raskin, J.F.: Energy and mean-payoff timed games.
  In: Proceedings of the 17th International Conference on Hybrid Systems:
  Computation and Control. pp. 283--292. HSCC '14, ACM, New York, NY, USA
  (2014)

\bibitem{DBLP:journals/fmsd/BrimCDGR11}
Brim, L., Chaloupka, J., Doyen, L., Gentilini, R., Raskin, J.: Faster
  algorithms for mean-payoff games. Formal Methods in System Design  38(2),
  97--118 (2011)

\bibitem{bruyre_et_al:LIPIcs:2014:4458}
Bruy{\`e}re, V., Filiot, E., Randour, M., Raskin, J.F.: {Meet Your Expectations
  With Guarantees: Beyond Worst-Case Synthesis in Quantitative Games}. In:
  Mayr, E.W., Portier, N. (eds.) 31st International Symposium on Theoretical
  Aspects of Computer Science (STACS 2014). Leibniz International Proceedings
  in Informatics (LIPIcs), vol.~25, pp. 199--213. Schloss
  Dagstuhl--Leibniz-Zentrum fuer Informatik, Dagstuhl, Germany (2014)

\bibitem{CFL:energy-kleene}
Cachera, D., Fahrenberg, U., Legay, A.: {An omega-Algebra for Real-Time Energy
  Problems}. In: Proceedings of FSTTCS'15. LIPIcs, vol.~45, pp. 394--407.
  Schloss Dagstuhl--Leibniz-Zentrum fuer Informatik, Dagstuhl, Germany (2015)

\bibitem{CdAHS:resource-interfaces}
Chakrabarti, A., de~Alfaro, L., Henzinger, T.A., Stoelinga, M.: {Resource
  Interfaces}. In: Alur, R., Lee, I. (eds.) Proceedings of EMSOFT 2003. LNCS,
  vol. 2855, pp. 117--133. Springer, Heidelberg (2003)

\bibitem{DBLP:conf/emsoft/ChakrabartiAHS03}
Chakrabarti, A., de~Alfaro, L., Henzinger, T., Stoelinga, M.: Resource
  interfaces. In: Proceedings of {EMSOFT} 2003. LNCS, vol. 2855, pp. 117--133.
  Springer (2003)

\bibitem{CD:energy-parity-games}
Chatterjee, K., Doyen, L.: {Energy Parity Games}. In: Abramsky, S., Gavoille,
  C., Kirchner, C., {Meyer auf der Heide}, F. (eds.) Proceedings of ICALP 2010,
  Part II. LNCS, vol. 6199, pp. 599--610. Springer Berlin Heidelberg (2010)

\bibitem{CHD:energy-MDPs}
Chatterjee, K., Doyen, L.: {Energy and Mean-Payoff Parity {Markov} Decision
  Processes}. In: Proceedings of MFCS 2011. LNCS, vol. 6907, pp. 206--218.
  Springer (2011)

\bibitem{DBLP:conf/fsttcs/ChatterjeeDHR10}
Chatterjee, K., Doyen, L., Henzinger, T., Raskin, J.: Generalized mean-payoff
  and energy games. In: Proceedings of {FST\&TCS} 2010. LIPIcs, vol.~8, pp.
  505--516. Schloss Dagstuhl - Leibniz-Zentrum fuer Informatik (2010)

\bibitem{CKK15}
Chatterjee, K., Kom{\'{a}}rkov{\'{a}}, Z., K\v{r}et{\'{\i}}nsk{\'{y}}, J.:
  Unifying two views on multiple mean-payoff objectives in {Markov} decision
  processes. In: Proceedings of {LICS~2015}. pp. 244--256 (2015)

\bibitem{CH14}
Chatterjee, K., Henzinger, M.: Efficient and dynamic algorithms for alternating
  b\"{U}chi games and maximal end-component decomposition. J. ACM  61(3),
  15:1--15:40 (Jun 2014)

\bibitem{CHKN14}
Chatterjee, K., Henzinger, M., Krinninger, S., Nanongkai, D.: Polynomial-time
  algorithms for energy games with special weight structures. Algorithmica
  70(3),  457--492 (2014)

\bibitem{CRR14}
Chatterjee, K., Randour, M., Raskin, J.F.: Strategy synthesis for
  multi-dimensional quantitative objectives. Acta informatica  51(3-4),
  129--163 (2014)

\bibitem{Clemente:2015:MBW:2876514.2876539}
Clemente, L., Raskin, J.F.: Multidimensional beyond worst-case and almost-sure
  problems for mean-payoff objectives. In: Proceedings of LICS'15. pp.
  257--268. IEEE Computer Society, Washington, DC, USA (2015)

\bibitem{FV96}
Filar, J., Vrieze, K.: Competitive Markov Decision Processes. Springer-Verlag
  New York, Inc., New York, NY, USA (1996)

\bibitem{GKK90}
Gurvich, V., Karzanov, A., Khachiyan, L.: Cyclic games and an algorithm to find
  minimax cycle means in directed graphs. USSR Comput. Math. Math. Phys.
  28(5),  85--91 (1990)

\bibitem{DBLP:conf/icalp/HaaseK15}
Haase, C., Kiefer, S.: The odds of staying on budget. In: Proceedings of
  {ICALP} 2015. Lecture Notes in Computer Science, vol. 9135, pp. 234--246.
  Springer (2015)

\bibitem{Howard60}
Howard, R.: Dynamic programming and Markov processes. The MIT press, New York
  London, Cambridge, MA (1960)

\bibitem{DBLP:conf/birthday/JuhlLR13}
Juhl, L., Larsen, K.G., Raskin, J.: Optimal bounds for multiweighted and
  parametrised energy games. In: Liu, Z., Woodcock, J., Zhu, H. (eds.) Theories
  of Programming and Formal Methods - Essays Dedicated to Jifeng He on the
  Occasion of His 70th Birthday. Lecture Notes in Computer Science, vol. 8051,
  pp. 244--255. Springer (2013)

\bibitem{KR95}
Kitaev, M., Rykov, V.: Controlled Queueing Systems. CRC Press (1995)

\bibitem{DBLP:conf/rp/Kucera12}
Ku\v{c}era, A.: Playing games with counter automata. In: Proceedings of {RP}
  2012. LNCS, vol. 7550, pp. 29--41. Springer (2012)

\bibitem{Puterman:book}
Puterman, M.L.: {{Markov} Decision Processes}. Wiley-Interscience (2005)

\bibitem{DBLP:journals/iandc/VelnerC0HRR15}
Velner, Y., Chatterjee, K., Doyen, L., Henzinger, T., Rabinovich, A., Raskin,
  J.: The complexity of multi-mean-payoff and multi-energy games. Information
  and Computation  241,  177--196 (2015)

\bibitem{Williams:book}
Williams, D.: {Probability with Martingales}. Cambridge Mathematical Textbooks,
  Cambridge University Press, Cambridge, UK (1991)

\bibitem{DBLP:journals/tcs/ZwickP96}
Zwick, U., Paterson, M.: The complexity of mean payoff games on graphs. Theor.
  Comput. Sci.  158(1{\&}2),  343--359 (1996)

\end{thebibliography}
\appendix

\newpage
\begin{center}
\LARGE \bf Technical Appendix
\end{center}
\section{Proofs}
\label{app-proofs}

In this section, we give full proofs that were omitted in the main body of the paper.

\begin{reftheorem}{Lemma}{\ref{lem:minpump}}
	For every EMDP $\calE$ there exists a memoryless strategy $\sigma$ such that $\sigma$ is pumping in every pumpable configuration of $\calE$. Further, there is a $\PTIME^{\EG}$~algorithm which computes the strategy $\sigma$ and the value $\minpump(s) \leq 3\cdot |\states|\cdot \maxE$ for every state $s$ of $\calE$. The problem whether a given configuration of $\calE$ is pumpable is $\EG$-hard.
\end{reftheorem}
\begin{proof}
We reduce the problem of computing $\minpump$ to the problem of computing minimal initial credit in \emph{energy parity MDPs}~\cite{CHD:energy-MDPs}, where we are required to find a safe strategy which visits with probability 1 a given set of states infinitely often. Given an EMDP $\calE$ we construct a new EMDP $\calE'$ by adding new states and transitions to $\calE$. For each transition $e=(s,t)$ of $\calE$ we add new controllable states $s_e,s'_e$ and transitions $(s,s_e),(s_e,t)$, $(s_e,s'_e)$, $(s'_e,s_e)$ such that $E(s_e,s'_e)=-1$ and the other three transitions have energy update 0 (the reward of the new transitions is irrelevant). We require that some state if the form $s'_e$ is visited infinitely often, i.e. that the counter is infinitely often decreased by 1. It is easy to verify that a configuration is pumpable if and only if it admits a safe strategy that satisfies this B\"uchi objective with probability one. 

To determine minimal initial energy level needed to achieve the latter, in~\cite{CHD:energy-MDPs} the authors provide a polynomial reduction to determining the minimal initial level in energy B\"uchi games, a problem which is shown to be solvable by an $\PTIME^{\EG}$ algorithm in \cite{CD:energy-parity-games}. For memorylessness, assume that $\calE$ is pumpable and let $\calE''$ be an EMDP obtained by removing all transitions $(s,t)$ such that $\minpump(s)+\energy(s,t)<\minpump(t)$, and removing all states $s$ for which $\minpump(s)=\infty$. It is easy to check that $\minpump$-values of states in $\calE''$ are the same as in $\calE$, and moreover, \emph{any} strategy in $\calE''$ is safe in all safe configurations, so in particular there are no negative cycles in $\calE''$. Moreover, in $\calE''$, it must be possible to reach, from each state, a positive cycle with probability 1, otherwise the said state would be unpumpable with any initial energy level. Hence, we can pick a set $\mathit{\Pi}$ of disjoint positive cycles such that at least one cycle in $\Pi$ is reachable from each state of $\calE''$ a define a memoryless strategy $\pi$ in such a way that in a state on one of these cycles it selects a transition (of $\calE''$) which keeps us on the cycle and in all other states it selects a transition which takes us closer to some of these cycles (optimal strategies for reachability are memoryless). It is then easy to show that $\pi$ is a globally pumping strategy in $\calE''$ and thus also in $\calE$.
\qed
\end{proof}

\subsection{Proofs of Section~\ref{sec-SPEMDPs}}

Recall that we assume a fixed strongly connected and pumpable EMDP $\calE=(\M,\energy)$ where $\M = (\states,(\cstates,\stochstates),\trans{},\Prob,\rev)$. Let $\vec{f}$ be an optimal solution to the program $\calL_{\calE}$ of Figure~\ref{fig:oc-lp} with optimal value $f^*$.

We start by considering the case where we compute a type I core of $\vec{f}$, i.e. on the proof of Proposition~\ref{prop:typeA-optimality}. %

\subsubsection{Proof of Proposition~\ref{prop:typeA-optimality}}

Let $C$ be a type I core of $\vec{f}$, $s(n)$ a configuration of $\calE$, and let strategy $\sigma_n^*$ be as in Proposition~\ref{prop:typeA-optimality}. If $s(n)$ is not safe, then $\sigma_n^*$ any strategy is optimal in $s(n)$, so assume that $s(n)$ is safe. We prove that $\sigma_n^*$ is optimal in $s(n)$. First note that $\sigma_n^*$ is clearly safe in $s(n)$, since whenever we are configuration $t(\ell)$ with $\ell \leq \maxE + \minpump(t)$, the strategy $\mu_C$ starts to behave as a globally pumping strategy which never visits a configuration $t'(\ell')$ with $\ell'\leq \minpump(t')$, and moreover, such $t'(\ell')$ cannot be visited without previously visiting a configuration $t''(\ell'')$ with $\minpump(t'')\leq \ell''\leq \minpump(t'')+\maxE$. So we focus on optimality of the mean payoff produced by $\sigma_n^*$,

First note that the memoryless strategy $\mu_C$, one of the two constituent strategies of $\sigma_n^*$, achieves mean payoff $f^*$ from each state of $\calE$~\cite[Lemma 4.3]{BBCFK14}, and the long-run change of the energy level under $\mu_C$ is positive. In particular, it suffices to prove that with probability 1 the strategy $\sigma_n^*$ eventually starts to behave as $\mu_C$ and sticks to this behaviour \emph{forever}, or formally, that under $\sigma_n^*$ it holds with probability one that for all but finitely many prefixes of $w$ of the produced run we have $\mathit{low}_n(w)=0$. To show this, we use the following fact:

\begin{lemma}
The following holds for all $t\in \states$ and $m\geq H$: For every state $t$, starting in configuration $t(m)$ with strategy $\mu_C$, the probability that we eventually encounter a configuration $t'(m')$ with $m'\leq L$ is strictly smaller than $1$. 
\end{lemma}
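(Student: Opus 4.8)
The plan is to reduce the statement to a positivity claim about a random walk with positive drift, and then exploit the positive energy trend of $\mu_C$. Since $\mu_C$ is memoryless and its choices do not depend on the current counter value, the sequence of states (and hence of energy updates) produced from $t$ under $\mu_C$ does not depend on the initial counter $m$. Writing $S_i$ for the cumulative energy change after $i$ steps, so that $\enlev{m}{}{i} = m + S_i$, the event ``some configuration $t'(m')$ with $m' \leq L$ is encountered'' is exactly $\{\inf_{i \geq 0} S_i \leq L - m\}$. Hence the claimed probability equals $\Probm^{\mu_C}_{t}(\inf_{i} S_i \leq L - m)$, and I must show it is $< 1$, i.e. that $\Probm^{\mu_C}_{t}(\inf_i S_i > L - m) > 0$. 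As this survival probability is non-decreasing in $m$ (the event $\{\inf_i S_i > L-m\}$ grows with $m$) and there are only finitely many states $t$, it suffices to treat $m = H$ for each fixed $t$.

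First I would show that $J := \inf_{i \geq 0} S_i$ is almost surely finite. The strategy $\mu_C$ drives every run into the recurrent class $C$ with probability one, and on $C$ it induces an irreducible finite Markov chain whose long-run average energy change is $\trend_C > 0$ (the same fact, adopted from \cite{BBCFK14}, that makes $\mu_C$ attain mean payoff $f^*$). By the ergodic theorem for finite Markov chains, $S_i / i \to \trend_C > 0$ almost surely, so $S_i \to +\infty$ almost surely and $J$ is an almost surely finite (nonpositive) random variable. In particular $\Probm^{\mu_C}_{t}(J \leq -g) \to 0$ as $g \to \infty$.

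It remains to show that the \emph{prescribed} margin already suffices, namely $\Probm^{\mu_C}_{t}(J \leq -(H-L)) < 1$, where $H - L = |\states| + 2|\states|^2 \cdot \maxE$. For this I would exhibit a positive-probability way to stay above $L$ forever when starting from level $H$: from $t$ there is a positive-probability path of length $< |\states|$ into $C$ (a shortest witness of the almost-sure reachability of $C$), along which the counter drops by less than $|\states| \cdot \maxE$ and so stays above $L$, reaching some $r \in C$; inside $C$ one exploits the positive trend, using that the flow decomposition gives $\sum_{e \in \trans_C} f_e \cdot \energy(e) = \freq{C} \cdot \trend_C > 0$, hence a simple cycle of $C$ (length $\leq |\states|$, all edges of positive probability under $\mu_C$) of strictly positive total energy change, whose traversal dips by at most $|\states| \cdot \maxE$. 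The margin $|\states| + 2|\states|^2 \cdot \maxE$ is chosen exactly to dominate the unavoidable drops incurred while reaching $C$ and while circulating such a cycle; combined with $J > -\infty$ almost surely, this yields $\Probm^{\mu_C}_{t}(J > -(H-L)) > 0$. Taking the minimum over the finitely many states $t$ gives the uniform bound away from $1$.

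The main obstacle is precisely this last step: upgrading the qualitative ``$J$ finite a.s.'' to the quantitative claim that the \emph{specific}, purely graph-theoretic margin $|\states| + 2|\states|^2 \cdot \maxE$ already forces strictly positive survival probability. Note that a soft exponential-supermartingale (Perron--Frobenius) estimate would only give a bound of the form $\Probm^{\mu_C}_{t}(J \leq -b) \leq (\max h / \min h)\, e^{-\lambda b}$, whose crossover point depends on the transition probabilities through $\lambda$ and $h$, and therefore cannot justify a margin depending on $|\states|$ and $\maxE$ alone. The right tool is instead the combinatorial bound on the worst unavoidable dip along a recovery path inside the strongly connected recurrent class $C$, which is where the factors $|\states|$ (path and cycle lengths) and $\maxE$ (per-step update) enter and which explains the $|\states|^2 \maxE$ shape of the margin.
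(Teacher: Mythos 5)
Your proposal is correct in substance but reaches the quantitative bound by a genuinely different route than the paper. Both arguments share the same skeleton: the positive trend of $\mu_C$ on $C$ yields a positive probability of never dropping to level $L$ from \emph{some} sufficiently high level, and the real work is showing that the specific level $H=L+|\states|+2|\states|^2\maxE$ already suffices. The paper gets this by (i) citing a positive-drift result to obtain survival from some unknown $H'$, (ii) in the special case $\maxE\leq 1$, $C=\states$, running a monotone fixpoint argument on the sets $\mathcal{Z}_i$ of states from which level $L$ is hit almost surely starting at level $i$ --- these sets are non-increasing in $i$ and determined by a one-step recursion, so by pigeonhole they stabilize by $i=L+|\states|$, and since $\mathcal{Z}_{H'}=\emptyset$ the stable value is empty --- and (iii) reducing general $\maxE$ to the unit case by subdividing each edge into $\maxE$ unit steps, which is exactly where the $|\states|^2\maxE$ factor comes from. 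You instead derive the ``some level suffices'' fact from the strong law ($S_i/i\to\trend_C>0$ a.s., hence $\inf_i S_i$ is a.s.\ finite) and then bridge from level $H$ to an arbitrarily high level by repeatedly traversing, with positive probability, a positive simple cycle extracted from the flow decomposition, the headroom $2|\states|^2\maxE$ absorbing the per-traversal dips. Your approach avoids both the subdivision gadget and the external lemma, and explains the shape of the constant more transparently; the paper's fixpoint argument is probability-free at the quantitative step and yields the sharper bound $L+|\states|$ in the unit-update case. One step you should make explicit: ``combined with $J>-\infty$ a.s.'' works by first fixing $g_0$ with $\min_{u}\Probm^{\mu_C}_u(\inf_i S_i>-g_0)>0$ (which exists since there are finitely many states), then traversing the positive cycle $g_0$ times --- each traversal has probability bounded away from zero, nets at least $+1$, and never dips below $L$ --- so that the soft bound is invoked only from level $\geq L+g_0$; without this gluing the a.s.\ finiteness of $J$ alone says nothing about the margin $H-L$.
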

\begin{proof}
We first present the proof under the assumption that $C=S$ and $\maxE\leq 1$.

Since $C$ has positive trend, the expected long-run change  of the counter under $\mu_C$ is positive. 
From~\cite[Lemma 4]{BKKNK:multicounter-zero-reachability} it follows that the probability of never hitting energy level $\leq  L$ is positive for each initial energy level $m$ greater than \emph{some} finite bound $H'\geq L$. We prove that this finite bound can be assumed to be $L+|\states|\leq H$.

For any $i\geq L+1$ denote by $\mathcal{Z}_i$ the set of all states $s$ of $\calE$ such that under strategy $\mu_C$ the probability of the energy level decreasing to $L$ when starting in $s(i)$ equals $1$. Note that $s\in \mathcal{Z}_i$ if and only if the following two conditions hold:
\begin{itemize}
\item
When starting in $s(i)$ with strategy $\mu_C$, the probability of decreasing the energy level to $i-1$ is $1$.
\item
Denoting by $\mathcal{R}_i$ the set of all states $t$ such that configuration $t(i-1)$ is encountered with positive probability when starting in $s(i)$ with $\mu_C$, it holds $\mathcal{R}_i \subseteq \mathcal{Z}_{i-1}$.
\end{itemize}

Note that if condition (1.) holds for at least one configuration of the form $s(i)$, it holds for all $s(i)$ s.t. $i\geq  L$, since strategy $\mu_C$ is memoryless. As noted above, it holds for $s(H')$, so it holds for all $s(i)$ with $i\geq L$. Whether the second condition holds for $s(i)$ depends solely on $\mathcal{Z}_{i-1}$, as $\mathcal{R}_i = \mathcal{R}_{i'}$ for all $i$, $i'$, again due to memorylessness of $\mu_C$. Hence, if $\mathcal{Z}_{i}=\mathcal{Z}_{i+1}$, then $\mathcal{Z}_{i}=\mathcal{Z}_{i'}$ for all $i'\geq i$. Moreover, $\mathcal{Z}_i \supseteq \mathcal{Z}_{i+1}$ for all $i$, since if memoryless strategy $\mu_C$ almost surely decreases the energy level to $ L$ from some $u(i+1)$, it does the same from $u(i)$ as well. Hence, it must be the case that $\mathcal{Z}_{L+|\states|}=\mathcal{Z}_{L+|\states|+1}$ and thus $\mathcal{Z}_{i}=\mathcal{Z}_{H'}$ for all $i\geq L+|\states|$. As shown above, $\mathcal{Z}_{H'}=\emptyset$, which finishes the proof for the special case.

Now we drop the assumption that $\maxE\leq 1$. We can then subdivide each transition $(s,t)$ with $\energy(s,t)=e$ into a path of length $\maxE$ on which each edge is labelled by $e/\maxE$ (assignment of rewards is irrelevant). Thus, we reduce the proof to the case with $\maxE$ at the cost of blowing-up the state space: the transformed EMDP $\calE'$ has at most $|\states|^2\cdot \maxE + |\states|$ states. The strategy $\mu_C$ can be straightforwardly carried over to this EMDP, and it is easy to check that the expected long-run change of the counter under $\mu_C$ is the same in $\calE$ and $\calE'$, in particular it is positive. Moreover, for each state $t$ of the original MDP its $\minpump$-value is the same in both EMDPs. We can thus apply the results of the previous paragraph to $\calE'$ and get that the probability of hitting energy level $L$ from $s(i)$ using $\mu_C$ is less than 1 for each $i\geq L + 2|\states|^2 \cdot\maxE$.

It remains to lift the assumption that $C=\states$. So let $C\subset\states$. Since $\mu_C$ reaches $C$ almost surely from each state $s\in S$, and $\mu_C$ is memoryless, we know that from each such state $s$ there is a path $w$ of length at most $|S|$ such that $w$ ends within $C$ and is traversed with positive probability. So starting in configuration $s(L + |\states| + 2|\states|^2\cdot\maxE)=s(H)$ and using strategy $\mu_C$, we are guaranteed that with positive probability we hit a configuration $t(\ell)$ with $\ell\geq L + 2|\states|^2\cdot\maxE$ and $t\in C$ without hitting a configuration with energy level smaller than $L$. By previous paragraph, from $t(\ell)$ we have a positive probability of never going below $L$, which finishes the proof.

\end{proof}

Now we finish the proof of Proposition~\ref{prop:typeA-optimality}. Suppose that with positive probability we infinitely often encounter the situation when the function $\mathit{low}_n$ attains value 1. After each such occasion the strategy $\sigma$ eventually switches back to behaving as $\mu_C$, since $\pi$ is a globally pumping strategy. When this switch occurs, there is a positive probability (bounded away from zero) that we will never encounter the situation with $\mathit{low}_n=0$ again, as shown by the previous lemma. It follows, that the probability of infinitely often seeing such a situation is zero, a contradiction.

\subsubsection{Proof of Proposition~\ref{prop:type-II-optimality}}

To define an optimal strategy $\sigma_n^*$, we need additional notation: For $w=s_0s_1\cdots$ and $0\leq i\leq \len{w}$ we denote by $\pathstate{w}{i}$ the state $s_i$.

We first prove a couple of useful general lemmas.

In the following we mean by ``playing according to a memoryless strategy $\mu$''  that at each situation we select a distribution on actions prescribed by $\mu$ for the current state. We also use this terminology for history-dependent strategies: when saying that at some point (after observing a history $w$) we ``play according to some strategy $\sigma$,'' we mean that from this point on, after seeing a history $ww'$ we choose the distribution on actions given by $\sigma(w')$.

\begin{lemma}
\label{lem:reach-mixing}
Let $\mu_1$, $\mu_2$ be memoryless strategies in $\calE$, $p_1,p_2 \in [0,1]$ numbers s.t. $p_1 + p_2 =1$, $K\in \Nset$, $N\in \Nset$  the smallest number s.t. $p_1\cdot N$ and $p_2\cdot N$ are integers, and let $q$ be any state of $\calE$. Assume that both $\mu_1$ and $\mu_2$ determine a Markov chain with a single bottom strongly connected component (i.e. using $\mu_i$, almost all runs have the same frequency of visits to a given state).

For each $i\in \Nset$ let $T_i$ be a probability distribution on $\Nset_0$ for which there exist a~function $g:\Nset\rightarrow \Nset$ and a constant $c\in (0,1)$ satisfying $\Probm(T_i\geq g(i))\leq c^{-i}$ and $\lim_{i\rightarrow \infty} \sum_{i=1}^n g(i)/n^2=0$.

Finally, let $\sigma$ be a strategy in $\calE$ defined as follows: $\sigma$ is played in stages. In stage $i\in \Nset$, we:
\begin{itemize}
\item  First play according to $\mu_1$ for exactly $p_1\cdot N\cdot i$ steps,
\item then play according to $\mu_2$ for exactly $p_2\cdot N\cdot i$ steps,
\item then play according to a memoryless deterministic strategy $\kappa$ which guarantees reaching $q$ with probability~1 (such a strategy exists due to $\calE$ being strongly connected). We play according to $\kappa$ until $q$ is reached. 
\item Then, play according to a globally pumping strategy $\pi$ (which is guaranteed to exist by Lemma~\ref{lem:minpump}). We play according to $\pi$ for a random number of steps determined by a single draw from the distribution $T_i$.
\item
Then we proceed to stage $i+1$.
\end{itemize} 

\noindent
Then for all states $s$ it holds $\E^{\sigma}_s [\MP] = p_1\cdot \E^{\mu_1}_s[\MP] + p_2\cdot \E^{\mu_2}_s[\MP]$.
\end{lemma}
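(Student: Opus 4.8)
The plan is to prove the stronger statement that for $\Probm^{\sigma}_s$-almost every run $\omega$ the full limit
\(
\lim_{n\to\infty}\tfrac1n\sum_{j=1}^n \rew(\pathstate{\omega}{j-1},\pathstate{\omega}{j})
\)
exists and equals $p_1R_1+p_2R_2$, where $R_j:=\E^{\mu_j}_s[\MP]$. Since $\mu_j$ induces a Markov chain with a single bottom strongly connected component, $R_j$ is independent of $s$ and in fact $\MP=R_j$ holds $\mu_j$-almost surely (the defining $\liminf$ is a genuine limit equal to the stationary average reward). The asserted equality for $\E^{\sigma}_s[\MP]$ then follows immediately by taking the expectation of the constant $p_1R_1+p_2R_2$. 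For notation, let $\tau_k$ be the number of steps elapsed at the end of stage $k$, let $\Sigma(n)$ be the reward accumulated in the first $n$ steps, $\Sigma_k:=\Sigma(\tau_k)$, and $r_{\max}:=\max_{e\in\trans}|\rew(e)|$. Stage $i$ has a deterministic productive length $Ni=p_1Ni+p_2Ni$ plus an overhead of $K_i+T_i$ steps, where $K_i$ is the number of steps $\kappa$ needs to reach $q$ and $T_i$ the pumping duration; thus $\tau_k=N\tfrac{k(k+1)}2+O_k$ with $O_k:=\sum_{i=1}^k(K_i+T_i)$.

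\textbf{Step 1 (overhead is negligible).} First I would show $O_k=o(k^2)$ almost surely. Because $\calE$ is finite and $\kappa$ reaches $q$ with probability one from every state, $K_i$ has a geometric tail uniform in the starting state, so $\Probm(K_i\ge\sqrt i\mid\mathcal{G}_{i-1})$ is summable; the conditional Borel--Cantelli lemma then gives $K_i<\sqrt i$ for all but finitely many $i$, whence $\sum_{i=1}^kK_i=O(k^{3/2})$. For the pumping durations, the (summable) tail bound on $T_i$ together with Borel--Cantelli yields $T_i<g(i)$ eventually, and $\sum_{i=1}^kg(i)=o(k^2)$ by hypothesis; since $g\ge0$ this also gives $g(k)=o(k^2)$, a fact reused below. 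Hence $O_k=o(k^2)$ and $\tau_k/(Nk^2/2)\to1$ a.s.

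\textbf{Step 2 (law of large numbers for the productive reward).} Let $X_i^{(j)}$ be the reward collected during the $\mu_j$-phase of stage $i$ (of length $p_jNi$), and let $\mathcal{G}_i$ be the history up to the end of stage $i$. Using the bounded bias (potential) function $h_j$ of the unichain induced by $\mu_j$, one has, uniformly in the (adapted) starting state of the phase, $|\E[X_i^{(j)}\mid\mathcal{G}_{i-1}]-p_jNi\,R_j|\le 2\lVert h_j\rVert_\infty=O(1)$ and $\mathrm{Var}(X_i^{(j)}\mid\mathcal{G}_{i-1})=O(i)$. Setting $D_i^{(j)}:=X_i^{(j)}-\E[X_i^{(j)}\mid\mathcal{G}_{i-1}]$, the partial sums form a martingale with $\sum_i\E[(D_i^{(j)})^2]/i^4=\sum_iO(i)/i^4<\infty$, so the $L^2$ martingale convergence theorem applied to $\sum_iD_i^{(j)}/i^2$ together with Kronecker's lemma gives $k^{-2}\sum_{i=1}^kD_i^{(j)}\to0$ a.s. Since $k^{-2}\sum_{i=1}^k\E[X_i^{(j)}\mid\mathcal{G}_{i-1}]\to\tfrac{p_jNR_j}2$, I obtain $k^{-2}\sum_{i=1}^kX_i^{(j)}\to\tfrac{p_jNR_j}2$ a.s. The reward earned during all overhead phases is bounded in absolute value by $r_{\max}O_k=o(k^2)$, so $k^{-2}\Sigma_k\to\tfrac N2(p_1R_1+p_2R_2)$, and dividing by $k^{-2}\tau_k\to\tfrac N2$ gives $\Sigma_k/\tau_k\to p_1R_1+p_2R_2$ along the subsequence $(\tau_k)_k$.

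\textbf{Step 3 (filling the gaps) and main obstacle.} To pass to the full limit, note $\tau_{k+1}-\tau_k=N(k+1)+K_{k+1}+T_{k+1}=o(k^2)=o(\tau_k)$ a.s., using $g(k)=o(k^2)$. For $n\in[\tau_k,\tau_{k+1}]$ the per-step rewards are bounded, so $|\Sigma(n)-\Sigma_k|\le r_{\max}(\tau_{k+1}-\tau_k)=o(\tau_k)$ and $n/\tau_k\to1$; hence $\Sigma(n)/n\to p_1R_1+p_2R_2$, giving $\MP(\omega)=p_1R_1+p_2R_2$ a.s. and the claim. I expect the crux to be Step 2: establishing a law of large numbers for a concatenation of ever-longer Markov-chain segments whose starting states are random and depend on the whole past, with reaching/pumping overhead interspersed. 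The key device is the bounded bias function, which linearizes each segment's conditional mean reward uniformly in its starting state and reduces the fluctuations to a square-integrable martingale killed by the $k^2$ normalization; obtaining the uniform $O(1)$ mean gap and $O(i)$ variance bound (and the uniform summable conditional tail for $K_i$) is precisely where the single-bottom-SCC hypothesis on $\mu_1,\mu_2$ is essential.
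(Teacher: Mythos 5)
Your proof is correct and follows the same overall skeleton as the paper's: decompose the run into stages, show the $\kappa$- and $\pi$-overheads contribute $o(k^2)$ steps and reward (via Borel--Cantelli for the $T_i$ tails and a tail/LLN argument for the reaching times), show the averages over the concatenated $\mu_j$-phases converge to $\E^{\mu_j}_s[\MP]$, and then interpolate between stage endpoints to turn the subsequential limit into a full limit so that the defining $\liminf$ is a genuine a.s.\ constant. The one genuine difference is in the central step. The paper disposes of the productive phases by a direct appeal to the ergodic theorem for finite-state Markov chains applied to $\mu_1$ and $\mu_2$, which is terse given that the $\mu_j$-segments have growing lengths, random past-dependent starting states, and are interleaved with other phases. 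You instead linearize each segment's conditional mean via the bounded bias (potential) function of the unichain, bound the conditional variance by $O(i)$, and apply the $L^2$ martingale SLLN with Kronecker's lemma to the centered sums. This buys a rigorous treatment of exactly the point the paper glosses over, at the cost of invoking the Poisson-equation machinery; the paper's route is shorter but leaves the reader to verify that the ergodic theorem really transfers to the concatenated-segment setting. Both arguments use the single-BSCC hypothesis in the same essential way (to make $\E^{\mu_j}_s[\MP]$ independent of the entry state of each segment).
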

\begin{proof}
Let us denote by $M^{\mu_1}_i$, $M^{\mu_2}_i$, $M^{\kappa}_i$ and $M^{\pi}_i$ the total rewards accumulated during the $i$-th stage playing according to $\mu_1$, $\mu_2$, $\kappa$ and $\pi$. Denote by $L^{\kappa}_i$ the number of steps made according to $\kappa$ in the $i$-th stage. Slightly abusing notation, we use $T_i$ to denote the~number of steps made according to $\pi$ in the $i$-th stage, and assume that $T_1,T_2,\ldots$ are independent. Denote by $\bar{L}_i$ the length of the $i$-the stage, i.e. $N\cdot i+L^{\kappa}_i+T_i$. 

We use the~following equation (which will be justified below): Almost surely,
\begin{equation}\label{eq:MP_decomp}
\MP = 
\lim_{n\rightarrow\infty}\frac{\sum_{i=1}^n M^{\mu_1}_i+M^{\mu_2}_i+M^{\kappa}_i+M^{\pi}_i}{\sum_{i=1}^n \bar{L}_i}=p_1\cdot \E^{\mu_1}_s[\MP]+p_2\cdot \E^{\mu_2}_s[\MP]
\end{equation}
First, we show 
\begin{equation}\label{eq:MP_result}
\lim_{n\rightarrow\infty}\frac{\sum_{i=1}^n M^{\mu_1}_i+M^{\mu_2}_i+M^{\kappa}_i+M^{\pi}_i}{\sum_{i=1}^n \bar{L}_i}=p_1\cdot \E^{\mu_1}_s[\MP]+p_2\cdot \E^{\mu_2}_s[\MP]
\end{equation}
Then we finish the proof by proving (\ref{eq:MP_decomp}).
We have
\begin{align}
\lim_{n\rightarrow\infty} & \frac{\sum_{i=1}^n M^{\mu_1}_i+M^{\mu_2}_i+M^{\kappa}_i+M^{\pi}_i}{\sum_{i=1}^n \bar{L}_i}=\label{eq:mp_decomp2}\\
& \lim_{n\rightarrow\infty}\frac{\sum_{i=1}^n M^{\mu_1}_i+M^{\mu_2}_i}{\sum_{i=1}^n N\cdot i}\frac{\sum_{i=1}^n N\cdot i}{\sum_{i=1}^n \bar{L}_i}+\lim_{n\rightarrow\infty}\frac{\sum_{i=1}^n M^{\kappa}_i}{\sum_{i=1}^n \bar{L}_i}+\lim_{n\rightarrow\infty}\frac{\sum_{i=1}^n M^{\pi}_i}{\sum_{i=1}^n \bar{L}_i}
\end{align}
assuming that the limits on the right-hand side exist.

One can easily show that, a.s., 
\begin{eqnarray*}
\lim_{n\rightarrow\infty}\frac{\sum_{i=1}^n M^{\mu_1}_i+M^{\mu_2}_i}{\sum_{i=1}^n N\cdot i} & = & \lim_{n\rightarrow\infty}\frac{\sum_{i=1}^n M^{\mu_1}_i}{\sum_{i=1}^n p_1\cdot N\cdot i}\lim_{n\rightarrow\infty}\frac{\sum_{i=1}^n p_1\cdot N\cdot i}{\sum_{i=1}^n N\cdot i}\\
& \qquad + &
\lim_{n\rightarrow\infty}\frac{\sum_{i=1}^n M^{\mu_2}_i}{\sum_{i=1}^n p_2\cdot N\cdot i}\lim_{n\rightarrow\infty}\frac{\sum_{i=1}^n p_2\cdot N\cdot i}{\sum_{i=1}^n N\cdot i}\\
& = & p_1\cdot \E^{\mu_1}_s[\MP]+p_2\cdot \E^{\mu_2}_s[\MP]
\end{eqnarray*}
Here the last equality follows from the ergodic theorem for finite-state Markov chains (see~e.g.~\citeaddr{Norris:book}) applied to $\mu_1$ and to $\mu_2$.

So to prove (\ref{eq:MP_result}) it suffices to prove the following equations (and apply (\ref{eq:mp_decomp2})):
\begin{equation}\label{eq:A}
\lim_{n\rightarrow\infty}\frac{\sum_{i=1}^n \bar{L}_i}{\sum_{i=1}^n N\cdot i}=1
\end{equation}
\begin{equation}\label{eq:B}
\lim_{n\rightarrow\infty}\frac{\sum_{i=1}^n M^{\kappa}_i}{\sum_{i=1}^n\bar{L}_i}=0
\end{equation}
\begin{equation}\label{eq:C}
\lim_{n\rightarrow\infty}\frac{\sum_{i=1}^n M^{\pi}_i}{\sum_{i=1}^n\bar{L}_i}=0
\end{equation}
We start by proving two auxiliary claims:
\begin{claim}[1]
\[
\lim_{n\rightarrow\infty}\frac{\sum_{i=1}^n L^{\kappa}_i}{n}<\infty
\]
\end{claim}
\begin{proof}[of the claim]
let us define $L^{\kappa}_{i,s'}$ the number of steps played according to $\kappa$ in the $j$-th stage where $\kappa$ starts in $s'$. Given $n$ denote by $n_{s'}$ the number of such stages up to the $n$-th stage. Then for every $s'$ the $L^{\kappa}_{1,s'}, L^{\kappa}_{2,s'},\ldots$ are independent and identically distributed with $\E^{\sigma}_s(L^{\kappa}_{j,s'})=\E^{\sigma}_s(L^{\kappa}_{1,s'})<\infty$, and hence by invoking the strong law of large numbers for iid variables (see~e.g.~\cite{Williams:book}) we obtain
\begin{align*}
\lim_{n\rightarrow\infty} & \frac{\sum_{i=1}^n L^{\kappa}_i}{n}  = 
\lim_{n\rightarrow\infty}\frac{\sum_{s'} \sum_{j=1}^{n_{s'}} L^{\kappa}_{j,s'}}{n}=\sum_{s'}\lim_{n\rightarrow\infty}\frac{\sum_{j=1}^{n_{s'}} L^{\kappa}_{j,s'}}{n}=\\
&\sum_{s'}\lim_{n\rightarrow\infty}\frac{\sum_{j=1}^{n_{s'}} L^{\kappa}_{j,s'}}{n_{s'}}\lim_{n\rightarrow\infty} \frac{n_{s'}}{n} \leq  \max_{s'} \lim_{n\rightarrow\infty} \frac{\sum_{j=1}^{n_{s'}} L^{\kappa}_{j,s'}}{n_{s'}} = \max_{s'} \E^{\sigma}_s(L^{\kappa}_{j,s'})<\infty
\end{align*}
This finishes the proof of Claim~(1).
\end{proof}

\begin{claim}[2]
\[
\lim_{n\rightarrow\infty}\frac{\sum_{i=1}^n T_i}{\sum_{i=1}^n i}=0
\]
\end{claim}
\begin{proof}[of the claim]
By our assumptions, $\Probm(T_i\geq g(i))\leq c^{-i}$ for all $i$ and thus $\sum_{i=1}^{\infty} \Probm(T_i\geq g(i))<\infty$. Hence, by Borel-Cantelli lemma  (see~\cite{Williams:book}), for almost every run there is $i'$ such that $T_i<g(i)$ for $i\geq i'$. However, then, a.s.,
\[
\lim_{n\rightarrow \infty} \frac{\sum_{i=1}^n T_i}{\sum_{i=1}^n i}=
\lim_{n\rightarrow \infty} \frac{\sum_{i=i'}^n T_i}{\sum_{i=i'}^n i}< 
\lim_{n\rightarrow \infty} \frac{\sum_{i=i'}^n g(i)}{\sum_{i=i'}^n i}=
\lim_{n\rightarrow \infty} \frac{\sum_{i=i'}^n g(i)}{\frac{n}{2}(n+1)}=0
\]
Here the last equality follows from our assumptions on $g$.
This finishes the proof of the claim~(2).
\end{proof}

Let us prove the equation (\ref{eq:A}).
\begin{eqnarray*}
\lim_{n\rightarrow\infty}\frac{\sum_{i=1}^n \bar{L}_i}{\sum_{i=1}^n N\cdot i} & = & 
\lim_{n\rightarrow\infty}\frac{\sum_{i=1}^n N\cdot i+L^{\kappa}_i+T_i}{\sum_{i=1}^n N\cdot i}\\
& = & \lim_{n\rightarrow\infty}\frac{\sum_{i=1}^n N\cdot i}{\sum_{i=1}^n N\cdot i}+\lim_{n\rightarrow\infty}\frac{\sum_{i=1}^n L^{\kappa}_i}{\sum_{i=1}^n N\cdot i} + \lim_{n\rightarrow\infty}\frac{\sum_{i=1}^n T_i}{\sum_{i=1}^n N\cdot i}\\
& = & 1+\lim_{n\rightarrow\infty}\frac{\sum_{i=1}^n L^{\kappa}_i}{n}\lim_{n\rightarrow\infty}\frac{n}{\sum_{i=1}^n N\cdot i} + \lim_{n\rightarrow\infty}\frac{\sum_{i=1}^n T_i}{\sum_{i=1}^n N\cdot i}\\
& = & 1\\
\end{eqnarray*}
The last equality follows from Claim~(1) and Claim~(2).
This finishes the proof of (\ref{eq:A}). 

Now let us prove (\ref{eq:B}):
\begin{eqnarray*}
\lim_{n\rightarrow\infty}\frac{\sum_{i=1}^n M^{\kappa}_i}{\sum_{i=1}^n\bar{L}_i} & \leq  & \lim_{n\rightarrow\infty}\frac{\sum_{i=1}^n L^{\kappa}_i\cdot \max r}{\sum_{i=1}^n\bar{L}_i}\\
& =  &
\max r\cdot \lim_{n\rightarrow\infty}\frac{\sum_{i=1}^n L^{\kappa}_i}{n}\cdot\lim_{n\rightarrow \infty} \frac{n}{\sum_{i=1}^n N\cdot i}\cdot \lim_{n\rightarrow \infty} \frac{\sum_{i=1}^n N\cdot i}{\sum_{i=1}^n\bar{L}_i}\\
& = & 0
\end{eqnarray*}
Here the last equality follows from Claim~(1) and the equation~(\ref{eq:A}).
Similarly, using Claim~(2), we prove (\ref{eq:C}):
\begin{eqnarray*}
\lim_{n\rightarrow\infty}\frac{\sum_{i=1}^n M^{\pi}_i}{\sum_{i=1}^n\bar{L}_i} & \leq  & \lim_{n\rightarrow\infty}\frac{\sum_{i=1}^n T_i\cdot \max r}{\sum_{i=1}^n\bar{L}_i}\\
& =  &
\max r\cdot \lim_{n\rightarrow\infty}\frac{\sum_{i=1}^n T_i}{\sum_{i=1}^n N\cdot i}\cdot \lim_{n\rightarrow \infty} \frac{\sum_{i=1}^n N\cdot i}{\sum_{i=1}^n\bar{L}_i}\\
& = & 0
\end{eqnarray*}
To finish the proof of Lemma~\ref{lem:reach-mixing} we prove that $\MP$ exists a.s. Then (\ref{eq:MP_decomp}) follows from (\ref{eq:MP_result}) and the fact that the sequence on the right-hand side of (\ref{eq:MP_decomp}) is a subsequence of the mean-payoff defining sequence. 
Denote by $\MP_j$ the $j$-the average of the rewards obtained in the first $j$ steps. Denote by $k_j$ the number of stages completed in the first $j$ steps.

Observe that
\[
\frac{\sum_{i=1}^{k_j} M^{\mu_1}_i+M^{\mu_2}_i+M^{\kappa}_i+M^{\pi}_i}{\sum_{i=1}^{k_j} \bar{L}_i} \leq \MP_j \leq \frac{\sum_{i=1}^{k_j} M^{\mu_1}_i+M^{\mu_2}_i+M^{\kappa}_i+M^{\pi}_i+\bar{L}_{k_j+1}\cdot\max r}{\sum_{i=1}^{k_j} \bar{L}_i}
\]
Note that limits of the left-hand side and the right-hand side are equal as $j$ goes to infinity, and of course, $\lim_{j\rightarrow\infty} \MP_j=\MP$. Indeed, observe
\begin{align*}
\lim_{m\rightarrow \infty} & \frac{\bar{L}_{m+1}}{\sum_{i=1}^{m} \bar{L}_i}\\
& =
\lim_{m\rightarrow \infty} \frac{N\cdot (m+1)+L^{\kappa}_{m+1}+T_{m+1}}{\sum_{i=1}^{m} N\cdot i+L^{\kappa}_i+T_i}\\
& \leq 
\lim_{m\rightarrow \infty} \frac{N\cdot (m+1)+L^{\kappa}_{m+1}+T_{m+1}}{\sum_{i=1}^{m} i}\\
& \lim_{m\rightarrow \infty} \frac{N\cdot (m+1)+L^{\kappa}_{m+1}+T_{m+1}}{\sum_{i=1}^{m+1} i}\frac{\sum_{i=1}^{m+1} i}{\sum_{i=1}^{m} i}\\
&=0
\end{align*}
Here the last equality follows from Claim~(1), Claim~(2) and the fact that 
$\lim_{m\rightarrow\infty} \frac{\sum_{i=1}^{m+1} i}{\sum_{i=1}^{m} i}=1$.

This finishes the proof of Lemma~\ref{lem:reach-mixing}.

\end{proof}

Now let $C_1,C_2$ be a type II core of $\vec{f}$, and $s(n)$ a configuration of $\calE$. We again assume that $s(n)$ is safe. 

As in the type I case, the components $C_1$, $C_2$ induces memoryless strategies $\mu_1$, $\mu_2$ such that for each $i\in\{1,2\}$ the strategy $\mu_i$ behaves as follows: inside $C_i$ it plays according to frequencies obtained from $\vec{f}$ and outside of $C_i$ it behaves as a memoryless deterministic strategy for reaching $C_i$ with probability 1. Note that both $\mu_i$ induce a Markov chain with a single bottom strongly connected component. 

Let $p_1 = f_{C_1}$ and $p_2 = f_{C_2}$, $N\in \Nset$  the smallest number s.t. $p_1\cdot N$ and $p_2\cdot N$ are integers, and let $q$ be an arbitrary state of $\calE$. 
We define a strategy $\sigma$ as follows: $\sigma_1$ is executed in stages. In stage $i\in \Nset$, we:
\begin{itemize}
\item First play according to $\mu_1$ for exactly $p_1\cdot N\cdot i$ steps,
\item then play according to $\mu_2$ for exactly $p_2\cdot N\cdot i$ steps,
\item then play according to a memoryless deterministic strategy $\kappa$ which guarantees reaching $q$ with probability~1 (such a strategy exists due to $\calE$ being strongly connected). We play according to $\kappa$ until $q$ is reached. 
\item Then, play according to a globally pumping strategy $\pi$ (which is guaranteed to exist by Lemma~\ref{lem:minpump}). We play according to $\pi$ until the energy level is at  least $\mathit{TH} + (i\cdot N)^{\frac{3}{4}} $, where  $\mathit{TH}=\max_{q\in \states}\minpump(q)+\maxE$.
\item
Then we proceed to stage $i+1$.
\end{itemize}

\noindent
Note that strategy $\sigma$ \emph{is not} safe in general.

\begin{lemma}
\label{lem:mixing-overall}
Strategy $\sigma_1$ satisfies $\E^{\sigma}_s [\MP] = p_1\cdot \E^{\mu_1}_s[\MP] + p_2\cdot \E^{\mu_2}_s[\MP].$ In particular, $\E^{\sigma}_s [\MP]=f^*$.
\end{lemma}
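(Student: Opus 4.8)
The plan is to recognize that the strategy $\sigma$ defined above is an instance of the generic stage-based mixing strategy already analyzed in Lemma~\ref{lem:reach-mixing}, instantiated with the component strategies $\mu_1,\mu_2$, with $p_i=\freq{C_i}$, with the same $N$ and $q$, and with the pumping phase of stage $i$ lasting a number of steps $T_i$ equal to the time $\pi$ needs to raise the energy level to $\mathit{TH}+(iN)^{3/4}$. Since $\mu_1$ and $\mu_2$ each induce a Markov chain with a single bottom strongly connected component ($C_1$, resp.\ $C_2$), the only hypothesis of Lemma~\ref{lem:reach-mixing} that is not immediate is the one on the times $T_i$: I must exhibit a function $g$ and establish (i) the growth bound $\tfrac1{n^2}\sum_{i=1}^n g(i)\to 0$ and (ii) summability $\sum_i\Probm(T_i\ge g(i))<\infty$. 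I would take $g(i)=c\,(iN)^{3/4}$ for a suitable constant $c$; then (i) is a direct computation, as $\sum_{i=1}^n g(i)=\Theta(n^{7/4})$ and dividing by $n^2$ yields $\Theta(n^{-1/4})\to0$.

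Establishing (ii) is the main obstacle, and it is where the choice of the exponent $3/4$ is forced. First I would control the energy level at the moment stage $i$ enters its pumping phase. Because the previous pumping phase left the level at roughly the previous target and the net drift of a $\mu_1\mu_2$-block is $\freq{C_1}\trend_{C_1}+\freq{C_2}\trend_{C_2}\ge0$, the \emph{expected} level entering the pumping phase is no smaller than the previous target; the only danger is a downward fluctuation, which is produced almost entirely by the negative-trend phase $\mu_2$ of length $\Theta(i)$. Viewing the energy along $\mu_2$ as a martingale plus a linear drift with bounded increments, an Azuma/Hoeffding estimate shows that the probability that this phase drives the level more than $(iN)^{3/4}$ below its mean is at most $\exp(-\Omega(i^{1/2}))$ (a deviation of order $i^{3/4}$ over a window of length $\Theta(i)$). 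On the complementary event the pumping phase has to climb only $O((iN)^{3/4})$, and since $\pi$ has strictly positive drift and bounded increments, a second martingale estimate bounds the time to climb a height $h$ by $O(h)$ except with probability $\exp(-\Omega(h))$. Choosing $c$ large enough that $g(i)$ dominates this climbing time, both bad events have probability $\exp(-\Omega(i^{1/2}))$, whence $\Probm(T_i\ge g(i))\le\exp(-\Omega(i^{1/2}))$, which is summable. Note that the requirement that the target increase asymptotically faster than $\sqrt i$ is exactly what makes the $\mu_2$-fluctuation event summable, while the sublinearity of $(iN)^{3/4}$ is exactly what makes (i) hold; this is the tension referred to in the overview.

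With (i) and (ii) in hand I would invoke Lemma~\ref{lem:reach-mixing}. One caveat is that here the $T_i$ are history-dependent rather than independent draws; however, inspecting the proof of Lemma~\ref{lem:reach-mixing}, the distributional assumption on the $T_i$ enters only through Claim~(2), whose argument is a first Borel--Cantelli application requiring nothing beyond $\sum_i\Probm(T_i\ge g(i))<\infty$, so it carries over verbatim. This yields $\E^{\sigma}_s[\MP]=p_1\,\E^{\mu_1}_s[\MP]+p_2\,\E^{\mu_2}_s[\MP]$. Finally, the ergodic theorem gives $\E^{\mu_i}_s[\MP]=\meanp_{C_i}$ (the single-BSCC chain induced by $\mu_i$ has stationary frequencies $f_e/\freq{C_i}$), so $\E^{\sigma}_s[\MP]=\freq{C_1}\meanp_{C_1}+\freq{C_2}\meanp_{C_2}$. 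The type~II core definition gives $\freq{C_1}\meanp_{C_1}+\freq{C_2}\meanp_{C_2}\ge f^*$, while restricting $\vec f$ to $\trans_{C_1}\cup\trans_{C_2}$ (which, as $p_1+p_2=1$, carries the whole unit of flow and preserves the flow and energy constraints) is itself a feasible point of $\calL_{\calE}$ of objective value $\freq{C_1}\meanp_{C_1}+\freq{C_2}\meanp_{C_2}$, so this quantity is also $\le f^*$; hence it equals $f^*$, completing the proof.
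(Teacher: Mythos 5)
Your proposal is correct and follows the same skeleton as the paper's proof: reduce to Lemma~\ref{lem:reach-mixing}, verify its hypothesis on the pumping times $T_i$ by martingale/Azuma arguments, and then read off the value from the type~II core conditions. The interesting difference is in \emph{how} you verify the $T_i$ hypothesis, and here your route is arguably the sound one. The paper analyses only the dynamics of $\pi$ after the pumping phase begins: it fixes the worst-case window $W_i=(2Ni+\mathit{TH})/\trend_\pi$ and shows via the bounded-difference martingale for $\pi$ that the target $\mathit{TH}+(iN)^{3/4}$ is reached within $W_i$ steps up to an exponentially small probability. But $W_i=\Theta(i)$, so the resulting $g$ does not satisfy the sublinearity requirement $\frac{1}{n^2}\sum_{i\le n}g(i)\to 0$ of Lemma~\ref{lem:reach-mixing}, which is exactly the condition needed to make the pumping phases negligible in Claim~(2) of that lemma's proof. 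You instead track the energy level at the \emph{entry} to the pumping phase across the whole stage: since the previous pumping phase left the level at the previous target, the $\mu_1\mu_2$-block has non-negative net drift ($\freq{C_1}\trend_{C_1}+\freq{C_2}\trend_{C_2}\ge 0$ for a type~II core), and the consecutive targets differ only by $O(i^{-1/4})$, an Azuma bound on the fluctuation over the $\Theta(i)$-step block shows the required climb is $O((iN)^{3/4})$ except with probability $e^{-\Omega(\sqrt{i})}$; a second estimate for the positive-drift strategy $\pi$ then gives $g(i)=\Theta(i^{3/4})$ with summable failure probabilities. This is precisely the ``progressive pumping'' trade-off announced in the paper's overview (targets growing faster than $\sqrt{i}$ for summability, sublinearly for negligibility), and your argument is the one that actually realizes it. Two further points in your favour: you explicitly note that the non-independence of the history-dependent $T_i$ is harmless because only the first Borel--Cantelli lemma is used, a point the paper glosses over by ``assuming'' independence; and you supply the missing justification of the final claim $\E^{\sigma}_s[\MP]=f^*$ (the core condition gives $\ge f^*$, and feasibility of the flow restricted to $\trans_{C_1}\cup\trans_{C_2}$, which carries the whole unit of flow since $p_1+p_2=1$, gives $\le f^*$). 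One small caveat: your Azuma estimate for the $\mu_1\mu_2$-block implicitly requires entering the BSCCs of the induced chains before the BKK-style martingales apply; the transient phases have exponential tails, so this is easily absorbed, but it deserves a sentence.
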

\begin{proof}
We use Lemma~\ref{lem:reach-mixing}. The only thing we need to prove is to show that in each segment $i$, the random variable $T_i$ denoting the time for which we play the globally pumping strategy $\pi$ satisfies the condition in the assumptions of Lemma~\ref{lem:reach-mixing}. That is, we need to find the right function $g$ and constant $c$.

Note that in each stage we start playing according to $\pi$ while in a state $q$. Memoryless strategy $\pi$ induces a finite Markov chain $M_{\pi}$ whose states are exactly the states of $\calE$. Let $C_1,\dots,C_\ell$ be all the bottom strongly connected components (BSCCs) of $M_{\pi}$ that are reachable from $q$ in $M_\pi$. It is easy to check that to satisfy the assumptions of Lemma~\ref{lem:reach-mixing} we need to prove the following:

\begin{itemize}
\item Denoting by $T^1$ the number of steps elapsed until one of the BSCCs $C_1,\dots,C_\ell$ is reached, there exist a~function $g_1:\Nset\rightarrow \Nset$ and a constant $c_1\in (0,1)$ satisfying $\Probm^{\pi}_{q}(T^1 \geq g_1(i))\leq c_1^{-i}$ and $\lim_{i\rightarrow \infty} \sum_{i=1}^n g_1(i)/n^2=0$ for all $i$.
\item For all states $t$ that belong to one of the components $C_1,\dots,C_\ell$, there exist a~function $g_2:\Nset\rightarrow \Nset$ and a constant $c_2\in (0,1)$ satisfying $\Probm^{\pi}_{t}(T \geq g_2(i))\leq c_2^{-i}$ and $\lim_{i\rightarrow \infty} \sum_{i=1}^n g_2(i)/n^2=0$ for all $i$.
\end{itemize}

The existence of $g_1$ and $c_1$ is easy, it follows, e.g. from \cite[Lemma 5.1]{BKK:oc-jacm}. 

Now fix any state $t$ as prescribed above. Note that from the construction of $\pi$ it follows that it's counter trend $\trend_\pi$ from $q$ (i.e. the number $\E^{\pi}_q \lim_{k\rightarrow \infty}  \sum_{i=1}^{k}e_i(\omega)/k$, where $e_i(\omega)$ is the energy change on the $i$-th transition of $\omega$) is positive (see the proof of Lemma~\ref{lem:minpump} -- all cycles visited by the strategy have non-negative effect, and with probability 1 we infinitely often traverse a cycle of positive effect. Since $\pi$ is memoryless, the probability of large gaps between two traversals of a positive cycle decays exponentially with the size of the gap, from which the result follows via standard computations). Since $t$ is in a BSCC of the Markov chain induced by $\pi$, from \cite{BKK:oc-jacm} it follows that under $\pi$ there is a \emph{bounded-difference martingale}, a stochastic process $(\ttmart{j})_{j=0}^{\infty}$ given by $\ttmart{j}(\omega)=\enlev{\ell}{\omega}{j} + \bar{z}(\pathstate{\omega}{j}) - j \cdot \trend_{\pi}$ for some weight function $\bar{z}\colon \states \rightarrow \Rset$, where $\ell$ is the energy level in which we enter the BSCC in $t$. 

Now any run $\omega$ initiated in $t$ along which the energy level does not increase above $\mathit{TH} + (i\cdot N)^{\frac{3}{4}} $ in the first $W_i=(2\cdot N \cdot i+\mathit{TH})/\trend_\pi$ steps satisfies $|\ttmart{W_i}(\omega)-\ttmart{0}(\omega)|\geq i -2Z$, where $Z=\max_{t'}\bar{z}(t')$. From the Azuma's inequality~\cite{Williams:book} it follows that for all but finitely many $i$ the probability $\Probm^{\pi}_{t}(T\geq W_i) $ is bounded from above by $ c_2^{i}$ for a suitable number $c_2\in(0,1)$. Hence, it suffices to put $g_2(i)=W_i$ for all such $i$. For the finitely many remaining $i$'s we can set $g_2(i)$ to any number $W$ such that the maximum among all these finitely many $i$'s of the probability $\Probm^{\pi}_{t}(T\geq W) $ is smaller than, say $\frac{1}{2}$ (such a $W$ exists, since $\pi$ is pumping).

\end{proof}

\newcommand{\sigmaopt}{{\sigma_n^*}}

Now we modify $\sigma$ to make it safe: in each stage, we play as prescribed above. However, if the current energy level falls below the threshold $\mathit{TH}=\max_{q\in \states}\minpump(q)+\maxE$, we immediately skip to the second-to-last item, i.e. to the use of the globally pumping strategy $\pi$, which is played until the energy level surpasses the value prescribed for the current stage ($(i\cdot N)^{\frac{3}{4}} $). Denote this strategy $\sigma_n^*$. It is clear that $\sigma_n^*$ is safe (it is actually pumping as well). It remains to prove that $\E^{\sigma_n^*}_{s}[\MP]=\E^{\sigma}_{s}[\MP]$, i.e. that $\sigma_n^*$ is optimal.

We say that a stage $i$ of $\sigma_n^*$ \emph{fails} if the energy level falls below $\mathit{TH}$ during this stage. To prove that $\sigmaopt$ is optimal it suffices to prove that with probability 1, only finitely many stages of $\sigmaopt$ fail (and thus $\sigmaopt$ eventually starts to behave as $\sigma$ forever). Due to Borel-Cantelli lemma it suffices to show that $\sum_{i=1}^{\infty}\Probm^{\sigmaopt}_s(\text{$\sigmaopt$ fails in stage $i$})<\infty$. We prove that there is $i\in(0,1)$ such that for all but finitely many $i$'s the probability of failure in stage $i$ is bounded by $c^{\sqrt{i}}$, which yields a converging infinite sum.

So let $i$ be arbitrary and let $t$ be an arbitrary state in which stage $i$ starts. Note that stage $i$ starts with energy level at least $\mathit{TH}+L_i$, where $L_i=(i\cdot N)^{\frac{3}{4}}$.

\newcommand{\failreach}{\mathit{FR}}
\newcommand{\failincrease}{\mathit{FI}}
\newcommand{\faildecrease}{\mathit{FD}}
Consider the following events that may happen in stage $i$:
\begin{enumerate}
\item $F_1$: When starting in $t$, it takes at least $\frac{1}{6}L_i$ steps to reach $C_1$.
\item $F_2$: $\neg F_1$ and inside $C_1$ the counter increases by less than $f_{C_1}\cdot N\cdot i \cdot \trend_{C_1} - \frac{2}{6}L_i$ before we start to play according to $\mu_2$.
\item $F_3$:  $\neg F_1$ and $\neg F_2$ and inside $C_1$ the counter decreases below $\mathit{TH}$ before we start to play according to $\mu_2$.
\item $F_4$: $\bigcap_{j=1}^3 \neg F_j$ and upon starting to play according to $\mu_2$, it takes at least $\frac{1}{6}L_i$ steps to reach $C_2$.
\item $F_5$: $\bigcap_{j=1}^4 \neg F_j$ and inside $C_2$ the counter decreases by more than $f_{C_2}\cdot N\cdot i \cdot \trend_{C_2} + \frac{1}{6}L_i$ before we start to play according to $\kappa$.
\item $F_6$: $\bigcap_{j=1}^5 \neg F_j$ and upon starting to play according to $\kappa$, it takes at least $\frac{1}{6}L_i$ steps to reach $q$.
\end{enumerate}

Note that if \emph{none} of the events happens during the $i$-th stage, then this stage \emph{does not} fail. Of particular interest here is the event $F_5$: note that if $\bigcap_{j=1}^4 \neg F_j$ happens, then when we enter $C_2$ while playing according to $\mu_2$, our energy level is at least $L_i+ f_{C_1}\cdot N\cdot i \cdot \trend_{C_1} - \frac{4}{6}L_i$, so if $F_5$ holds, upon starting the play according to $\kappa$ our energy level is at least $\mathit{TH}+L_i+ f_{C_1}\cdot \trend_{C_1}\cdot N\cdot i  + f_{C_2}\cdot \trend_{C_2}\cdot N\cdot i  - \frac{5}{6}L_i = \mathit{TH}+\frac{1}{6}L_i$ (we have $f_{C_1}\cdot \trend_{C_1} + f_{C_2}\cdot \trend_{C_2}=0$, since $C_1,C_2$ is a type II core of $\vec{f}$). Now to find $c$ whose existence is postulated above, it is sufficient to find, for each of the above events, a number $d\in (0,1)$ such that for all but finitely many $i$'s the probability of the said event is bounded by $d^{\sqrt{i}}$.

For events $F_1$, $F_4$, and $F_6$, we can again invoke  Lemma~5.1. of~\cite{BKK:oc-jacm}. The lemma proves that in a finite Markov chain (such as the one induced by a memoryless strategy for reaching some set of states) we can find a number $d'\in(0,1)$ such that the probability of not reaching a given almost-surely reachable set within $\ell$ steps is at most ${d'}^{\ell}$. In our cases we have $\ell=i^{\frac{3}{4}}\cdot b$, where $b$ is independent of $i$, which proves the existence of $d$.

For the remaining events we need to use arguments based on \emph{martingales}~\cite{Williams:book}. Let us start with $F_3$. From Theorem~3.4. of~\cite{BKK:oc-jacm} it follows that there is a \emph{weight} function $z\colon\states\rightarrow \Qset$ such that for any $n\in \Zset$ following stochastic process $(\mart{i})_{i = 0}^{\infty}$ is a martingale under $\mu_{C_1}$ when starting in $C_1$:\footnote{Although~\cite{BKK:oc-jacm} considers only a special case when $\maxE=1$, the proof works also for our model without any modification.} %
	\[
	\mart{i}(\omega) =
	\enlev{n}{\omega}{i} + z(\pathstate{\omega}{i}) - i \cdot \trend_{C_1}.
	\]
	
	Moreover, from standard results on martingales, we get that if we denote by $\tau(\omega)$ the first point in time in which the energy level drops below $\mathit{TH}$, then the process $(\maralt{i})_{i=0}^{\infty}$, where $\maralt{i}(\omega)=\mart{\min\{i,\tau(\omega)\}}(\omega)$, is also a martingale. Moreover, both martingales have \emph{bounded differences}, i.e. their one-step change is bounded uniformly over all runs and steps. Now any run $\omega$ initiated in some $u(\ell)$, $u\in C_1$, $\ell\geq\mathit{TH} + \frac{5}{6}L_i$ whose energy level drops below $\mathit{TH}$ in the first $W=\freq{C_1}\cdot N\cdot i$ steps\footnote{We can actually make smaller number of steps, because some steps might have been lost on reaching $C_1$. Nevertheless, overestimating the number of steps is sound.} satisfies $|\maralt{W}(\omega)-\maralt{0}(\omega)|\geq \tau(\omega)\cdot\trend_{C_1}+\frac{5}{6}L_i -2Z\geq \frac{5}{6}L_i-2Z$, where $Z=\max_{s\in\states}|z(s)|$. 
	The number on the right-hand side is positive for all but finitely many $i$. From the Azuma's inequality it follows that the probability of observing such a run is bounded by ${d'}^{(L_i-2Z)^2/W}\leq d^{\sqrt{i}}$ for suitable numbers $d,d'\in(0,1)$ that are independent of $i$.

	For event $F_2$ the argument is similar. 	Note that all runs in $\neg F_1$ make at least $f_{C_1}\cdot N\cdot i - \frac{1}{6}L_i$ steps inside $C_1$, since at most $\frac{1}{6}L_i$ steps were needed to reach $C_1$. If $\omega\in \neg F_1$ increases the counter by at least $\freq{C_1}\cdot N\cdot k\cdot \trend_{C_1} - \frac{1}{6}L_i$ during exactly $W'=f_{C_1}\cdot N\cdot i - \frac{1}{12}L_i$ steps, then it belongs to $\neg F_2$. So assume that $\omega\in \neg F_1$ increases the counter by at most $\freq{C_1}\cdot N\cdot k\cdot \trend_{C_1} - \frac{1}{6}L_i$ during exactly $W'$ steps. Then $|\mart{W'}-\mart{0}(\omega)|\geq \frac{1}{6}(i\cdot N)^{\frac{3}{4}}\cdot\trend_{C_1} -2Z$, where $Z$ is as above. Again, this number is positive for all but finitely many $i$, and for all such $i$ we can apply Azuma's inequality to get that probability of witnessing the small increase is at most $d^{\sqrt{i}}$, where $d$ is a suitable number independent of $k$.
	
	Event $F_5$ is handled in a way which is dual to $F_2 $. We  again use the construction from~\cite{BKK:oc-jacm} to obtain a suitable martingale, which we analyse in almost the same way as in the previous paragraph. The only difference is that since $F_2$ has a negative trend, we now do not bound the probability of a small increase but that of a large decrease.

\subsection{Proofs of Section~\ref{sec-thm2}}
\label{app-thm2}

\begin{reftheorem}{Proposition}{\ref{lem-stable}}
	Suppose that $g^* = 0$, and let $\sigma$ be a strategy which is safe in $s(n)$. Then 
	\[
	\Probm^{\sigma}_{s}(\{\omega \in \run(s) \mid \omega \mbox{ is stable }\}) = 1 \,.
	\] 
\end{reftheorem}

\begin{proof}
We say that a run $\omega = s_0 s_1 \cdots$ in $\calE$ is \emph{drifting} if for every $k \in \Nset$ there exists $i \in \Nset$ such that for all $j \geq i$ we have that $\enlev{0}{}{j} \geq k$. Intuitively, a run is drifting if, for an arbitrary initial counter value, the energy level eventually stays above an arbitrarily large~$k$ along the run.
	
	It follows from the results of \citeaddr{BBEKW:OC-MDP} that the existence of a strategy $\pi$ such that $\pi$ is safe in some configuration $t(m)$ and $\Probm^{\pi}_{t}(\{\omega \in \run(t) \mid \omega \mbox{ is drifting }\}) > 0$ implies the existence of a positive solution of the program $\calT_{\calE}$. 
	
	Suppose that $\sigma$ is a strategy safe in $s(n)$ such that
	\[
	\Probm^{\sigma}_{s}(\{\omega \in \run(s) \mid \omega \mbox{ is stable }\}) < 1 \,.
	\]      
	We show that there exist a configuration $t(m)$ and a strategy $\pi$ with the above properties, and thus derive a contradiction. For every $q \in S$, all $A,B \subseteq S$ where $A \cap B = \emptyset$, and all $f: A \rightarrow \Zset$, let $\run[A_f,B](q)$ be the set of all $\omega \in \run(q)$ such that the set of all control states that appear infinitely often along $\omega$ is precisely $A \cup B$, the set of all control states that are not stable in $\omega$ is precisely $B$, and every control state $r \in A$ is stable at $f(r)$ in $\omega$. Clearly, there must be some $A,f,B$ such that $B \neq \emptyset$ and $\Probm^{\sigma}_{s}(\run[A_f,B](s)) > 0$. For the rest of this proof, we fix such~$A,f,B$. 
	
	For every configuration $r(\ell)$, we define the \emph{$[A_f,B]$-value} of $r(\ell)$ as follows:
	\[
	V_{[A_f,B]}(r(\ell))  :=  \sup\, \{ \Probm^{\varrho}_{r}(\run[A_f,B](r)) \mid \varrho \mbox{ is safe in } r(\ell) \}.
	\]
	Observe that $V_{[A_f,B]}(r(i)) \geq V_{[A_f,B]}(r(j))$ if $i \geq j$. We prove the following:
	\begin{itemize}
		\item[A.] For every $r \in A$, let $r(\ell)$ be the configuration where $\ell = n + f(r)$. Then $V_{[A_f,B]}(r(\ell)) = 1$.
		\item[B.] If $A \neq \emptyset$, then there is a configuration $r(\ell)$ such that $r \in B$ and $V_{[A_f,B]}(r(\ell)) = 1$.
	\end{itemize}
	
	To prove A., let us suppose that there is $r \in A$ such that $V_{[A_f,B]}(r(\ell)) = 1 - \delta$, where $\ell = n + f(r)$ and $\delta > 0$. Let $\omega \in \run[A_f,B](s)$, and consider the sequence of configurations visited by $\omega$ from the initial configuration $s(n)$. Since $r(\ell)$ appears infinitely often in this sequence, we obtain that $\Probm^{\sigma}_{s}(\run[A_f,B](s)) = 0$, which is a contradiction.
	
	To prove B., suppose that there is some $q \in A$, but for all $r \in B$ and $\ell \in \Nset$ we have that $V_{[A_f,B]}(r(\ell)) < 1$. By A., we obtain $V_{[A_f,B]}(q(m)) = 1$ for a suitable~$m$. For every $\omega \in \run[A_f,B](q)$, consider the sequence of configurations visited by $\omega$ from the initial configuration $s(m)$, and let $r(\ell)$ be the first configuration in this sequence such that $r \in B$. Clearly, $\ell \leq m + |S|\cdot \maxE$. Let 
	\[
	V = \max\{V_{[A_f,B]}(u(j)) \mid u \in B, j \leq  m + |S|\cdot \maxE\} \, .
	\]  
	Since $V = 1 - \delta$ for some $\delta > 0$, for every strategy $\varrho$ safe in $s(m)$ we obtain that $\Probm^{\varrho}_{q}(\run[A_f,B](q)) \leq 1 - \delta$, which contradicts $V_{[A_f,B]}(q(m)) = 1$.
	
	The existence of $\pi$ is now proved separately for each of the following two cases:  
	
	\emph{Case I.} Suppose that $V_{[A_f,B]}(r(\ell)) = 1$ for some $r \in B$ and $\ell \in \Nset$. Let us further assume that $\ell$ is the \emph{least} $i$ such that $V_{[A_f,B]}(r(i)) = 1$. A finite path $w$ from $r$ to $r$ of length $j$ is \emph{increasing} if $\enlev{0}{w}{j} > 0$. We claim that for every $\varepsilon > 0$,  
	there exist a strategy $\sigma_\varepsilon$ safe in $r(\ell)$, and $N_\varepsilon \in \Nset$, such that the $\Probm^{\sigma_\varepsilon}_{r}$-probability of all runs initiated in $r$ that start with an increasing path of length at most $N_\varepsilon$ is at least $1 - \varepsilon$. Before proving this claim, let us show how it implies the existence of the promised $t(m)$ and $\pi$. The role of $t(m)$ is taken over by $r(\ell)$. The strategy $\pi$ is constructed as follows.
	Let $\varepsilon_i = 8^{-i}$ for all $i \in \Nset^+$. Consider the strategies $\sigma_{\varepsilon_i}$ and the bounds $N_{\varepsilon_i}$ for all $i \in \Nset^+$. The strategy $\pi$ is defined inductively as follows:
	\begin{itemize}
		\item At the starting state $r$, the strategy $\pi$ ``switches'' to $\sigma_{\varepsilon_1}$.
		\item Whenever $\pi$ ``switches'' to $\sigma_{\varepsilon_j}$, it starts to simulate the strategy $\sigma_{\varepsilon_j}$. If an increasing path is encountered in the first $N_{\varepsilon_j}$ steps from the previous switch, then $\pi$ immediately ``switches'' to $\sigma_{\varepsilon_{j+1}}$. Otherwise, $\pi$ keeps simulating $\sigma_{\varepsilon_j}$ forever. 
	\end{itemize} 
	It follows immediately from the construction of $\pi$ that $\pi$ is safe in $r(\ell)$ and the probability of all runs with infinitely many ``switches'' is at least~$3/4$. Since all runs with infinitely many switches are drifting, we are done.
	
	So, it remains to prove the above claim. Let us fix some $\varepsilon > 0$. Let $\kappa = \varepsilon  \delta /2$, where $\delta$ is either $1$ or $1 - V_{[A_f,B]}(r(\ell{-}1))$, depending on whether $\ell = 0$ or $\ell >0$, respectively (note that $\delta > 0$). We put $\sigma_\varepsilon := \varrho$, where
	$\varrho$ is a strategy safe in $r(\ell)$ such that $\Probm^{\varrho}_{r}(\run[A_f,B](r)) \geq 1 - \kappa$. Note that $\varrho$ is guaranteed to exist, because the \mbox{$[A_f,B]$-value} of $r(\ell)$ is equal to one. Since $r \in B$, for \emph{every} run $\omega = s_0 s_1 s_2 \cdots$ in $\run[A_f,B](r)$ there exist $i < j$ such that $s_i = s_j = r$ and $\enlev{0}{\omega}{i} <  \enlev{0}{\omega}{j}$. We say that $\omega$ is \emph{good} if there are $i < j$ with the above properties such that, in addition, for every $k \leq j$ we have that $s_k = r$ implies $\enlev{0}{\omega}{k} \geq 0$. Now we check that 
	\[
	\Probm^{\varrho}_{r}(\{ \omega \in \run[A_f,B](r) \mid \omega \mbox{ is good }) \geq 1 - \frac{\varepsilon}{2} \,.
	\]
	If $\ell = 0$, the above inequality follows immediately, because then $\varrho$ is safe in $r(0)$. If $\ell > 0$, then the $\Probm^{\varrho}_{r}$ probability of all $\omega \in \run(r)$ that are \emph{not} good runs of $\run[A_f,B](r)$ cannot exceed $\varepsilon/2$, because otherwise, even if all of these runs belong to $\run[A_f,B](r)$, we obtain that $\Probm^{\varrho}_{r}(\run[A_f,B](r))$ is smaller than \[
	(1-\frac{\varepsilon}{2}) + \frac{\varepsilon}{2}(1-\delta) \ = \ 1 - \kappa \, ,
	\]
	which is a contradiction. Since every good run of $\run[A_f,B](r)$ can be recognized after a finite prefix, there must by some $N_\varepsilon$ such that the $\Probm^{\varrho}_{r}$ probability of all good runs of $\run[A_f,B](r)$, where the length this prefix is bounded by $N_\varepsilon$, is at least $1 - \varepsilon$. 
	
	\emph{Case II.} Suppose that $V_{[A_f,B]}(r(\ell)) < 1$ for all $r \in B$ and $\ell \in \Nset$. Note that this implies $A = \emptyset$ by applying claim~B.{} above. For every $\omega \in \run[A_f,B](s)$, let $\alpha_\omega$
	be the sequence of \mbox{$[A_f,B]$-values} of the configurations visited by $\omega$ from the initial configuration $s(n)$. Further, let	$\Lim[A_f,B](\omega) = \liminf_{n \rightarrow \infty} \alpha_\omega$.	We claim that 
	\[
	\Probm^{\sigma}_{s}(\{ \omega \in \run[A_f,B](s) \mid \Lim[A_f,B](\omega) < 1) = 0 \, .
	\]
	Again, let us first show that this claim implies the existence of the promised $t(m)$ and $\pi$. In this case, the role of $t(m)$ is played by $s(n)$, and $\pi$ is chosen as $\sigma$. Since almost all $\omega \in \run[A_f,B](s)$ satisfy $\Lim[A_f,B](\omega) = 1$, it suffices to show that every run $\omega = s_0 s_1\cdots$ of $\run[A_f,B](s)$ such that $\Lim[A_f,B](\omega) = 1$ is drifting. However, since $V_{[A_f,B]}(r(\ell)) < 1$ for all $r \in B$ and $\ell \in \Nset$, it follows immediately that for all $r \in B$ and  $k \in \Nset$ there exists $i \in \Nset$ such that for all $j \geq i$ we have that $s_j = r$ implies $\enlev{0}{\omega}{j} \geq k$. So, $\omega$ is indeed drifting.
	
	It remains to prove the above claim. It suffices to show that for every fixed $\varepsilon > 0$ we have that
	\[
	\Probm^{\sigma}_{s}(\{ \omega \in \run[A_f,B](s) \mid \Lim[A_f,B](\omega) < 1 - \varepsilon) = 0 \, .
	\]
	Let $\omega \in \run[A_f,B](s)$ be a run such that $\Lim[A_f,B](\omega) < 1 - \varepsilon$, and let us consider the sequence of configurations visited by $\omega$ from the initial configuration $s(n)$. Clearly, this sequence visits infinitely often a configuration whose \mbox{$[A_f,B]$-value} is bounded by $1 - \varepsilon$, which implies that the total probability of all such runs is zero.   
\qed
\end{proof}

\subsection{A Proof of Theorem~\ref{thm:onedim-scc-pump} (5.)}
As explained in Section~\ref{sec-intro}, the problem whether a given configuration of EMDP is safe is equivalent to solving the corresponding energy game (with the same transition structure as the EMDP). To finish the proof of Theorem~\ref{thm:onedim-scc-pump} (5.), we need to show that it suffices to restrict to pumpable EMDPs. 

So let us fix an EMDP $\calE=(\M,E)$ where $\M = (\states,(\cstates,\stochstates),\trans{},\Prob,\rev)$.
We define an EMDP $\calE'=(\M',E')$ where the set of states is $S\cup T$, from each $s\in S$ there are transitions to all elements of $\out(s)$, from each $(s,s')\in T$ there are transitions to $(s,s')$ and to $s'$. The set of stochastic states of $\M'$ is $\stochstates\cup T$. The probability of each transition $(s,(s,s'))$, here $s\in \stochstates$, in $\M'$ is equal to the probability of $(s,s')$ in $\M$. The probability of each transition $((s,s'),(s,s'))$ in $\M'$ is equal to $\frac{1}{2}$.
The energy update function $E'$ is defined by $E'(s,(s,s'))=E(s,s')$ and $E'((s,s'),(s,s'))=\max_{e\in T} E(e)+1$ and $E'((s,s'),s')=0$. The reward function in $\M'$ can be defined arbitrarily (we are concerned only with safety). 

Now note that a configuration $s(n)$ is safe in $\M$ iff $s(n)$ is safe in $\M'$. So $\val(s(n))>-\infty$ in $\M'$ iff $\val(s(n))>-\infty$ in $\M$ iff $s(n)$ is safe in the corresponding energy game on $\M$.
 Also, note that $\M'$ is pumpable since in every $(s,s')$ the counter may be pumped above any bound with a positive probability, which eventually happens with probability one.

\bibliographystyleaddr{splncs03}
\bibliographyaddr{Disertace,disertace,tomas}

\end{document}